\newtheorem{Fact}{Fact}
\newtheorem{Lemma}{Lemma}
\newtheorem{Theorem}{Theorem}
\newcommand{\scriptbf}[1]{\mbox{\scriptsize${\bf #1}$}}
\newenvironment{proof}{{\bf Proof.}}{\rule{.3em}{2ex} \hspace*{\fill} \newline}
\newcommand{\type}{ {\bf :} }
\newcommand{\objtype}{ \epsilon }
\newcommand{\morph}{{:}}
\newcommand{\memberof}{{\in}}
\newcommand{\define}{\stackrel{{\rm df}}{=}}
\newcommand{\orthogonal}{{\perp}}
\newcommand{\diff}{\: \dot{-} \:}
\newcommand{\tprod}{\otimes}
\newcommand{\bsum}{\oplus}
\newcommand{\tsum}{\nabla}
\newcommand{\bprod}{\triangle}
\newcommand{\imply}{{\Rightarrow}}
\newcommand{\tensorimplysource}
   {\! \mbox{\hspace{.21em}--\hspace{-.21em}} \backslash }
\newcommand{\tensorimplytarget}
   { / \mbox{\hspace{-.23em}--\hspace{.23em}} \!}
\newcommand{\logequiv}
   { \vdash \mbox{\hspace{-1.1em}} \dashv }
\newcommand{\tneg}[3]{\neg_{#1 #3} #2}
\newcommand{\vecneg}[1]{\vec{\neg}{#1}}
\newcommand{\tnegtneg}[3]{\neg \! \neg_{#1 #3} #2}
\newcommand{\boolneg}[1]{{\sim}{#1}}
\newcommand{\total}[1]{{#1}^{\dagger}}
\newcommand{\abov}[1]{{\uparrow}(#1)}
\newcommand{\below}[1]{{\downarrow}(#1)}
\newcommand{\singleton}[1]{\{#1\}}
\newcommand{\orthog}[1]{{{\perp}_{\!#1}}}
\newcommand{\coorthog}[1]{{\perp}^{\rm co}_{{#1}}}
\newcommand{\principalideal}[1]{{\downarrow}(#1)}
\newcommand{\power}[1]{{\cal P}(#1)}
\newcommand{\closurepower}[1]{{\cal K}(#1)}
\newcommand{\mat}[1]{{\cal M}(#1)}
\newcommand{\matropo}[1]{{\cal M}_{\cal T}(#1)}
\newcommand{\distrib}[1]{{\cal D}(#1)}
\newcommand{\Boolean}[1]{{\cal B}(#1)}
\newcommand{\Heyting}[1]{{\cal H}(#1)}
\newcommand{\Hoare}[1]{{\cal H}(#1)}
\newcommand{\zenter}[1]{{\cal Z}(#1)}
\newcommand{\yoneda}[3]{{\rm y}_{#1}^{#3}{#2}}
\newcommand{\comonoid}[3]{\Omega_{#1}^{#3}(#2)}
\newcommand{\filtersource}[1]{{\cal F}_0(#1)}
\newcommand{\filtertarget}[1]{{\cal F}_1(#1)}
\newcommand{\interior}[1]{{#1}^{\circ}}
\newcommand{\closure}[1]{{#1}^{\bullet}}
\newcommand{\involution}[1]{{#1}^{{\propto}}}
\newcommand{\product}[2]{\mbox{$ #1 {\times} #2 $}}
\newcommand{\triproduct}[3]{\mbox{$ #1 {\times} #2 {\times} #3 $}}
\newcommand{\pair}[2]{\langle #1,#2 \rangle}
\newcommand{\relcopair}[4]{[{#1},{#2}]^{#3}_{#4}}
\newcommand{\relpair}[4]{\langle{#1},{#2}\rangle^{#3}_{#4}}
\newcommand{\domain}[1]{\partial_0{#1}}
\newcommand{\source}[2]{{]{#1}[_{#2}}}
\newcommand{\target}[2]{{\rangle{#1}\langle^{#2}}}
\newcommand{\cotuple}[2]{{[{#1}]_{#2}}}
\newcommand{\tuple}[2]{{\langle{#1}\rangle^{#2}}}
\newcommand{\decomposition}[3]{({#1})_{#2}^{#3}}
\newcommand{\triple}[3]{\mbox{$ \langle #1,#2,#3 \rangle $}}
\newcommand{\quadruple}[4]{\mbox{$ \langle #1,#2,#3,#4 \rangle $}}
\newcommand{\term}[3]{#1 \stackrel{#2}{\rightharpoondown} #3}
\newcommand{\opterm}[3]{#1 \stackrel{#2}{\leftharpoondown} #3}
\newcommand{\rulezero}[2]
   {\begin{array}{cc}
       #1
       & (\mbox{#2})
    \end{array}}
\newcommand{\ruledoublezero}[3]
           {\begin{array}{cc}
               \begin{array}{c}
                  #1 \\
                  #2
               \end{array}
               & (\mbox{#3})
            \end{array}}
\newcommand{\ruleone}[3]
   {\begin{array}{cc}
       \begin{array}{c}
          #1 \\ \hline
          #2
       \end{array}
       & \left( \mbox{#3} \right)
    \end{array}}
\newcommand{\ruletwo}[4]
   {\begin{array}{cc}
       \begin{array}{ccc}
          #1 && #2 \\ \hline
          \multicolumn{3}{c}{#3}
       \end{array}
       & (\mbox{#4})
    \end{array}}
\newcommand{\yiya}[1]
   {\setlength{\unitlength}{#1pt}
    \begin{picture}(1,1)(-.5,-.5)
       \put(0,0){\circle{1}}
       \put(.25,0){\oval(.5,.5)[b]}
       \put(-.25,0){\oval(.5,.5)[t]}
    \end{picture}
    \setlength{\unitlength}{1pt}}
\newcommand{\yinyang}{\yiya{8}}
\title{Dialectical Logic: \\
       the Process Calculus \\
        \mbox{\footnotesize
              $\pi \alpha \lambda \iota \nu \tau o \nu o \zeta$      $\alpha \rho \mu o \nu \iota \eta$
            - $\pi \alpha \lambda \iota \nu \tau \rho o \pi o \zeta$ $\alpha \rho \mu o \nu \iota \eta$
              \normalsize}}
\author{Robert E. Kent}
\date{}
\begin{document}
   \maketitle

\begin{abstract}
{\em Dialectical logic\/} is the logic of dialectical processes. 
The goal of dialectical logic is to reveal the dynamical notions
inherent in logical computational systems.
The fundamental notions of {\em proposition\/} and {\em truth-value\/} in standard logic
are subsumed by the notions of {\em process\/} and {\em flow\/} in dialectical logic.
Standard logic motivates the core sequential aspect of dialectical logic.
Horn-clause logic requires types and nonsymmetry and also motivates the parallel aspect of dialectical logic.
The process logics of Milner and Hoare reveal the internal/external aspects of dialectical logic.
The sequential internal aspect of dialectical logic 
should be viewed as a typed or distributed version of Girard's linear logic with nonsymmetric tensor.
The simplest version of dialectical logic is inherently intuitionistic.
However,
by following Glivenko's approach in standard logic using double negation closure,
we can define a classical version of dialectical logic.
\end{abstract}

   \tableofcontents

\section*{Introduction}
\addcontentsline{toc}{section}{Introduction}
Abstract objective knowledge,
such as general science and philosophy,
originated in the fifth and sixth centuries B.C.
in the thought, teachings and writings of the preSocratic Greek philosophers.
The aim of the preSocratics was
to give a nonmythological account of the origin of the world ({\bf kosmos}),
and to rationally explain its motion.
By far the most common explanation given by the preSocratics
for the origin and motion of the kosmos
was in terms of pairs of opposing tendencies,
such as 
the {\em hot\/} and the {\em cold\/},
the {\em wet\/} and the {\em dry\/}, 
{\em love\/} and {\em strife\/}, etc.
In fact,
the notion of complementary pairs of opposing tendencies
has occurred throughout the history of ideas. 
Ancient examples of opposing tendencies occur
not only in preSocratic Greek philosophy,
but also in naturalistic Chinese philosophy,
as the dualistic concept of {\em yin\/} and {\em yang\/};
and in Indian Hindu philosophy,
as {\em Brahma\/} the creator and {\em Shiva\/} the destroyer with {\em Vishnu\/} the preserver.

For the preSocratics,
who were postmythological but prelogical,
the components of such opposed pairs were neither properties nor objects,
but motive forces.
The dynamics in this world-view is obvious.
Unfortunately,
much of this dynamical world-view was lost to the history of ideas
when logic was conceived as a study of static notions.
A central theme of this paper is that much of this dynamical world-view needs
to be re-revealed, re-developed, and extended,
in order to comprehend modern logical computational systems.
A modern theory of dialectics offers the appropriate conceptual framework for doing this;
it takes the notion of opposing tendencies as its central concept,
and calls it {\em dialectical contradiction\/}.
This modern dialectical theory still retains the motive force interpretation
for the components (aspects) of dialectical contradictions:
dialectical contradictions specify {\em dialectical motion\/},
where motion is not mere physical motion,
but any change whatsoever;
motion is synonymous with transformation.
The distinction between the concepts of dialectical contradiction and dialectical motion,
two fundamental notions of dialectics,
is itself dialectical,
the {\em potential\/} aspect and the {\em actual\/} aspect.
These two concepts occur in ancient and modern interpretations of the fragments of Heraclitus,
the most dialectically oriented preSocratic \cite{Hussey},
and are contained here in the subtitle:
\mbox{\small
      $\pi \alpha \lambda \iota \nu \tau o \nu o \zeta$ $\alpha \rho \mu o \nu \iota \eta$
    - $\pi \alpha \lambda \iota \nu \tau \rho o \pi o \zeta$ $\alpha \rho \mu o \nu \iota \eta$
      \normalsize};
({\bf palintonos harmonie} - {\bf palintropos harmonie});
(crudely) polar tension structure - polar turning structure;
the ``tension'' interpretation - the ``oscillation'' interpretation, of Heraclitus;
or for us, dialectical contradiction - dialectical motion.

The history of dialectics is replete with intuitively suggestive,
but ill-defined and non-rigorous, ideas and examples \cite{Bernow,Piccone}.
If the dialectical point of view is to be useful as a human conceptual structure,
its objective aspect must have a rigorous foundation.
The notion of dialectical contradiction is monistically objectified [Lawvere]
by the mathematical idea of {\em adjunction\/}.
Since adjoint pairs are (one of) the most important concepts of category theory,
this point-of-view is summarized by the statement:
{\sc Category Theory $\equiv$ Objective Dialectics}.
The notion of dialectical contradiction is pluralistically objectified [Kent87]
by the mathematical idea of {\em dialectical base\/}.
In objective dialectics,
since dialectical contradictions are represented by adjunctions,
systems of dialectical contradictions are represented by diagrams in the unbounded category ({\bf to apeiron}) whose morphisms are adjoint pairs.
Such a diagram,
whose component preorders usually have certain completeness properties,
is called a {\em dialectical base} of preorders.
From a static, non-dynamic, non-dialectical point-of-view,
this has also been called an indexed preorder \cite{Hyland}.
Within the notion of dialectical contradiction
the distinction between the concepts of adjunctions and dialectical bases is dialectical,
the {\em one-many\/} dialectic.

The notion of dialectical motion can be specified [Kent87]
by the mathematical idea of {\em dialectical system\/},
or parallel pair of distributed {\em terms\/}.
Dialectical systems have the following essential aspects:
{\bf [ancient]}
they are based upon contradictions or opposing tendencies;
they define motion, flow or development;
{\bf [modern]}
they contain internally interacting and combining objects or entities in dialectical motion; and
they specify the reproduction or renewal of such entities,
where reproduction is equilibrium of dialectical motion.
Dialectical systems are the ``motors of nature'' specifying the dialectical motion of structured entities,
and a dialectical base provides the ``motive power'' for this motion.
The notion of dialectical motion can be realized
by the mathematical idea of {\em dialectical flow\/},
which is the oscillation (alternation-composition) of inverse flow along one term and direct flow along the other term.
Direct and inverse flow are suitably generalized {\em Kan extensions\/} which make use of a dialectical base.
Dialectical systems specify dialectical flow,
and dialectical flow is the realization of dialectical systems;
the {\em specification-realization\/} dialectic.

It has been known for some time now \cite{Lawvere} that logic is dialectical in nature,
but the full force of its dialectical structure has only recently \cite{Girard,Kent88} been discussed.
Dialectical ideas,
not only come chronologically and historically before logical ideas,
but also come conceptually before them as well.
The theory and practice of computer science and dynamic systems contain many dialectical contradictions.
Two of the most important of these,
the {\em flow\/} dialectic and the {\em constraint\/} dialectic,
constitute the proper study of dialectical logic \cite{Kent88};
whereas a third,
the {\em part}-{\em whole\/} dialectic,
is important in its standard aspect \cite{Kent89}.
Dialectical logic is the logic of dialectical processes.
It invests the dynamical view of systems theory with the fundamental ideas of category theory;
but in turn, it gives these categorical notions that dynamical view.
Dialectical logic provides a unified semantics for
both the object paradigm and the process paradigm of programming-in-the-large.
By subsuming process logic \cite{Milner,Hoare78} along with clause logic,
it allows the specification of strongly-typed parallel logic programs.
In dialectical logic
aspects of the process paradigm are modelled as a flow dialectic,
whereas aspects of the object paradigm are modelled as a constraint dialectic orthogonal to flow.
The flow (or {\em product-implication\/}) dialectic is the internal aspect of dialectical logic,
whereas the constraint dialectic is its external aspect.

Dialectical logic is based upon the two interdependent concepts
of structure and dialecticality.
Dialecticality is built out of the aspects
of dialectical tension and dialectical flow,
as mentioned above.
Structure is concentrated in the compositionality of monoids and comonoids
(this includes the grand unification principle \cite{Manes} that ``composition determines semantics''),
and in the type-summability of orthogonal terms (the object calculus, discussed below).
Structure occurs peripherally in the interactions of limits, the combinations of colimits,
and the reproduction of fixpoints.
The structurality of limits and colimits,
being special Kan extensions,
has obvious dialecticality.
This is but one indication of the interdependence of structure and dialectics;
other indications are the simple facts that
monoids have associated adjoint pairs,
and adjoint pairs compose into monoids and comonoids.
Parsimonious use of
(1) abstract monoidal concepts
for modelling ``construction'', ``composition'' and ``interaction'',
along with
(2) adjointness notions for modelling ``dialectical flow'' (such as ``predicate transformation'')
has great potential in the computational and system sciences.

Dialectical logic is an extension of standard logic.
The extension of propositional calculus is called the {\em process calculus\/};
the extension of predicate calculus is called the {\em object calculus\/}.
In this paper we are mainly concerned with the process calculus;
its intuitionistic and classical semantics,
and its classical axiomatics.
In a succeeding paper \cite{Kent88} we will be concerned chiefly with the object calculus.
In order that readers may begin to explore the fascinating possibilities of dialectics,
I have included in the appendix to this paper an introduction to this object aspect of dialectics.

\section{Preliminaries}

\paragraph{Dialectical Laws.}
The ``laws of dialectics'' are laws of logic.
The most fundamental dialectical law,
the law of {\em the interpenetration of opposites\/},
is represented in general by adjoint pairs of functors or monotonic functions,
and in particular by the flow dialectic (tensor product - tensor implication adjointness).
As a special case of this,
the dialectical law of {\em the negation of the negation\/}
is represented in general as a self-adjoint functor or monotonic function,
and in particular by tensor negation.
Here we discuss the general case.
The paper as a whole is a discussion of the particular case.

Two opposed monotonic functions $\pair{B}{\leq_B} \stackrel{f}{\rightarrow} \pair{A}{\leq_A}$
and $\pair{B}{\leq_B} \stackrel{g}{\leftarrow} \pair{A}{\leq_A}$ between preorders form an {\em adjoint pair\/},
denoted $f \dashv g$,
when they satisfy the equivalence
$f(b) \leq_A a$ iff $b \leq_B g(a)$.
This equivalence can be interpreted as the ``polar-tension structure'' of the preSocratic Greek philosopher Heraclitus \cite{Hussey},
and in Greek is rendered 
$\pi \alpha \lambda \iota \nu \tau o \nu o \zeta$      $\alpha \rho \mu o \nu \iota \eta$.
The fact that $f \dashv g$ is an adjoint pair
is equivalently defined by
the ``unit'' inequality $B \leq f \cdot g$ and the ``counit'' inequality $g \cdot f \leq A$.
The composite monotonic functions
$\pair{B}{\leq_B} \stackrel{f \cdot g}{\rightarrow} \pair{B}{\leq_B}$
and
$\pair{A}{\leq_A} \stackrel{g \cdot f}{\rightarrow} \pair{A}{\leq_A}$
are closure and interior operators, respectively.
A {\em closure operator\/}
$\pair{B}{\leq_B} \stackrel{k}{\rightarrow} \pair{B}{\leq_B}$ 
is a monotonic endofunction
which is ``increasing'' $B \leq k$ and ``idempotent'' $k \cdot k = k$.
Dually,
an {\em interior\/} (or {\em kernel\/}) {\em operator\/}
$\pair{A}{\leq_A} \stackrel{j}{\rightarrow} \pair{A}{\leq_A}$ 
is a monotonic endofunction
which is ``decreasing'' $A \geq j$ and ``idempotent'' $j \cdot j = j$.
An adjoint pair $f \dashv g$ is a {\em reflective pair\/}
when the counit is an equality $g \cdot f = A$,
stating that the interior operator $g \!\cdot\! f$ is an identity.
So an adjoint pair $f \dashv g$ is a reflective pair
iff $f$ is a surjective monotonic function
iff $g$ is an injective monotonic function.
An adjoint pair $f \dashv g$ is a {\em coreflective pair\/}
when the unit is an equality $B = f \cdot g$,
stating that the closure operator $f \!\cdot\! g$ is an identity.
So an adjoint pair $f \dashv g$ is a coreflective pair
iff $f$ is an injective monotonic function
iff $g$ is a surjective monotonic function.

The corestriction
$\pair{B}{\leq_B} \stackrel{\closure{(\,)}_k}{\rightarrow} \pair{k(B)}{\leq_{k(B)}}$ 
of a closure operator $k$
to its image $k(B) \define \{ k(b) \mid b \memberof B \}$ of $k$-{\em closed elements\/} of $B$
forms a reflective pair $\closure{(\,)}_k \dashv {\rm Inc}$
with the inclusion
$\pair{k(B)}{\leq_{k(B)}} \stackrel{{\rm Inc}}{\rightarrow} \pair{B}{\leq_B}$.
The corestriction
$\pair{A}{\leq_A} \stackrel{\interior{(\,)}_j}{\rightarrow} \pair{j(A)}{\leq_{j(A)}}$ 
of an interior operator $j$
to its image $j(A) \define \{ j(a) \mid a \memberof A \}$ of $j$-{\em open elements\/} of $A$
forms a coreflective pair ${\rm Inc} \dashv \interior{(\,)}_j$
with the inclusion
$\pair{j(A)}{\leq_{j(A)}} \stackrel{{\rm Inc}}{\rightarrow} \pair{A}{\leq_A}$.
So for any adjoint pair $f \dashv g$,
the subpreorders
 of $f \!\cdot\! g$-closed elements $\closure{B} \subseteq B$
and $g \!\cdot\! f$-open elements $\interior{A} \subseteq A$
participate themselves in the special adjunctions
$\closure{(\,)} \dashv {\rm Inc}$
and ${\rm Inc} \dashv \interior{(\,)}$
of reflective and coreflective pairs,
respectively.
The restriction of the adjoint pair to closed/open elements forms an inverse pair of monotonic functions,
making $B$-closed elements isomorphic to $A$-open elements
$\closure{B} \cong \interior{A}$.
The adjoint pair, the closed element reflection, the open element coreflection, and the inverse pair,
form a commuting square of dialectical contradictions.
For a reflective pair $f \dashv g$,
all elements of $A$ are open $\interior{A} = A$,
and hence $A$ is isomorphic to the $B$-closed elements $\closure{B} \cong A$. 
Any reflective pair $f \dashv g$ is equivalent to the $\closure{(\,)} \dashv {\rm Inc}_{\closure{B}}$
reflective pair which factors the closure operator $f \!\cdot\! g$ through its image $\closure{B}$.
For a coreflective pair $f \dashv g$,
all elements of $B$ are closed $\closure{B} = B$,
and hence $B$ is isomorphic to the $A$-open elements $B \cong \interior{A}$. 
Any coreflective pair $f \dashv g$ is equivalent to the ${\rm Inc}_{\interior{A}} \dashv \interior{(\,)}$
coreflective pair which factors the interior operator $g \!\cdot\! f$ through its image $\interior{A}$.
So any inverse pair is an adjoint pair with the identity orderings,
and any adjoint pair determines an inverse pair.
Adjointness is a kind of generalized inverseness
(another related kind of generalized inverseness is the notion of orthogonality defined below).

The special case of self-adjointness,
where $f = g^{\rm op}$ and $A = B^{\rm op}$,
defines the notion of ``negation''.
When a monotonic function
$\pair{A}{\leq} \stackrel{f}{\rightarrow} \pair{A}{\leq}^{\rm op}$ 
is self-adjoint $f \dashv f^{\rm op}$ it is called a {\em negation\/}.
The polar-tension structure is the equivalence
$a \leq f(a')$ iff $a' \leq f(a)$,
and $A$-closed elements and $A^{\rm op}$-open elements coincide,
with dialecticality expressed as duality
$\closure{A} \cong \interior{(A^{\rm op})} = {(\closure{A})}^{\rm op}$.
So restricting $f$ to the $f^2$-closed elements of $A$ makes $f$ into an {\em involution\/}:
``idempotent'' $f^2(a) = a$,
``monotonic'' if $a \leq b$ then $f(b) \leq f(a)$,
and satisfying $f(a \vee b) = f(a) \wedge f(b)$ (a DeMorgan's law) and $f(\bot) = \top$ when the joins exist.

\paragraph{Biposets.}
A {\em biposet\/} is another name for an ordered category;
that is,
a category ${\bf P} = \quadruple{{\bf P}}{\preceq}{\circ}{{\rm Id}}$
whose homsets are posets and whose composition is monotonic on left and right.
We prefer to view biposets as vertical structures, preorders with a tensor product,
rather than as horizontal structures, ordered categories.

In more detail,
a biposet {\bf P} consists of the following data and axioms.
There is a collection of {\bf P}-objects $x,y,z, \cdots$ called {\em types\/},
     and a collection of {\bf P}-arrows $r,s,t, \cdots$ called {\em terms\/}.
Terms could also be called ``preprocesses'',
since processes (which are discussed in \cite{Kent88})
are terms which satisfy certain constraints or closure conditions.
Each term $r$ has a unique source type $y$ and a unique target type $x$,
denoted by the relational notation $\term{y}{r}{x}$.
The collection of terms from source type $y$ to target type $x$
is ordered by a binary relation $\preceq_{y,x}$ called {\em term entailment\/},
which is
transitive,
if $r \preceq s$ and $s \preceq t$ then $r \preceq t$,
reflexive
$r \preceq r$, and
antisymmetric,
$r \equiv s$ implies $r = s$,
where $r \equiv s$ means $r \preceq s$ and $s \preceq r$.
Dialectical logic entailment $\preceq_{y,x}$ between terms generalizes
standard logic entailment $\vdash$ between propositions.
For any two terms $\term{z}{s}{y}$ and $\term{y}{r}{x}$ with matching types
(target type of $s$ = source type of $r$) 
there is a composite term $\term{z}{s \circ r}{x}$,
where $\circ$ is a binary operation called {\em tensor product\/},
which is associative
$t \circ (s \circ r) = (t \circ s) \circ r$,
and monotonic on left and right,
$s \preceq s'$ and $r \preceq r'$ imply $(s \circ r) \preceq (s' \circ r')$.
Tensor product allows each term $\term{y}{r}{x}$ to specify a {\em right direct flow\/}
${\bf P}[z,y] \stackrel{\circ r}{\rightarrow} {\bf P}[z,x]$
and a {\em left direct flow\/}
${\bf P}[x,z] \stackrel{r \circ}{\rightarrow} {\bf P}[y,z]$
for each type $z$.
Any type $x$ is a term $\term{x}{x}{x}$,
which is an identity,
$r \circ x = r$ and $x \circ s = s$.
A biposet with one object (universal type) is called a {\em monoidal poset\/}.
For each {\bf P}-type $x$,
the collection ${\bf P}[x,x]$ of endoterms at $x$ is a monoidal poset.
If {\bf P} is a biposet,
then the op-dual or opposite biposet ${\bf P}^{\rm op}$ is the opposite category with the same homset order as {\bf P},
and the co-dual biposet ${\bf P}^{\rm co}$ is (the same category) {\bf P} with the opposite homset order.
A {\em morphism of biposets\/} ${\bf P} \stackrel{H}{\rightarrow} {\bf Q}$
is a functor which preserves homset order.
Any Heyting algebra is a biposet,
where tensor product coincides with lattice meet $s \circ r = s \wedge r$.
The category {\bf Rel} of sets and (binary) relations is a biposet,
where tensor product is relational composition
$S \circ R \define \{ (z,x) \mid \exists_{y \in Y} (z,y) \memberof S \mbox{ and } (y,x) \memberof R \}$.
A bimodule $\term{{\cal Y}}{R}{{\cal X}}$
between two preorders
${\cal Y} = \pair{Y}{\leq_Y}$ and ${\cal X} = \pair{X}{\leq_X}$
is a monotonic function
$\product{{\cal Y}^{\rm op}}{{\cal X}} \stackrel{R}{\rightarrow} 2$.
The category {\bf Bim} of preorders and preorder bimodules
(bimodules $\term{{\cal Y}}{R}{{\cal X}}$ are in bijection with
closed-above subsets $R \subseteq \product{{\cal Y}^{\rm op}}{{\cal X}}$)
is a biposet,
where tensor product is again relational composition
$S \circ R \define \{ (z,x) \mid \exists_{y \in Y} (z,y) \memberof S \mbox{ and } (y,x) \memberof R \}$.
Given an alphabet $A$,
the category of formal $A$-languages $\power{A^\ast}$ is a biposet;
whose arrows are formal languages,
whose composition is language concatenation,
and whose identity is singleton empty string $\{\varepsilon\}$.

Given two types $y$ and $x$ in a biposet {\bf P},
two opposed terms (terms oppositely directed)
$\term{x}{s}{y}$ and $\opterm{x}{r}{y}$
are {\em semi-orthogonal\/} at $x$,
denoted $s \orthog{x} r$,
when $s \circ r \prec_{x,x} x$.
Semi-orthogonality is a nonsymmetric notion.
By combining semi-orthogonality at source and target we get a symmetric notion:
two opposed terms $\term{y}{r}{x}$ and $\opterm{y}{s}{x}$
form an orthogonal pair of terms or an {\em orthoterm\/},
denoted by $\term{y}{r \orthogonal s}{x}$,
when they satisfy semi-orthogonality at $y$ and semi-orthogonality at $x$;
that is,
$r \orthogonal s$ iff ($r \circ s \preceq y$ and $s \circ r \preceq x$).
In this case, we say that $r$ is {\em orthogonal\/} to $s$.
Orthoterms axiomatize ``ring-structured {\bf P}-terms''. 
Orthoterms compose in the obvious way:
$(s \orthogonal s') \circ (r \orthogonal r') = (s \circ r) \orthogonal (r' \circ s')$,
and $(x \orthogonal x)$ is the identity orthoterm at $x$.
The homset order on orthoterms is defined by:
$(p \orthogonal q) \preceq (r \orthogonal s)$ when $p \preceq r$ and $q \succeq s$.
So each biposet {\bf P} has an associated {\em orthoterm category\/} ${\bf P}^\perp$,
whose objects are {\bf P}-types and whose arrows are {\bf P}-orthoterms.
There are two projection functors
${\bf P}^{\rm op} \stackrel{\partial_0}{\leftarrow} {\bf P}^\perp \stackrel{\partial_1}{\rightarrow} {\bf P}$,
whose product pairing functor is the inclusion
${\bf P}^\perp \stackrel{{\rm Inc}}{\rightarrow} \product{{\bf P}^{\rm op}}{{\bf P}}$.
Let ${\perp}(r)$ denote the collection of all terms opposed and orthogonal to $r$;
${\perp}(r) \define \{ \term{x}{s}{y} \mid r \orthogonal s \}$.
Then ${\perp}(r)$ is a closed-below subset of ${\bf P}[x,y]$.
In defining the phase semantics for linear logic,
Girard {\em implicitly\/} uses the notion of orthogonality
with respect to a single subset of ``antiphases'' $\perp$. 
Since orthogonality is defined with respect to types (identity endoterms) $x,y,z,\cdots$,
Girard's set of antiphases $\perp$ corresponds to any arbitrary {\bf P}-type.
Orthogonality of terms in biposets for dialectical logic generalizes
disjointness of elements in Heyting algebras for standard logic.

A monoid {\bf M} is {\em symmetric\/} (or {\em commutative\/})
when its tensor product is commutative:
$s \circ r = r \circ s$.
More generally,
a biposet {\bf P} is {\em quasisymmetric\/} or {\em orthogonally balanced\/} when
$s \orthog{x} r$ implies $r \orthog{y} s$
for all {\bf P}-types $y$ and $x$ and all opposed pairs of {\bf P}-terms
$\term{y}{r}{x}$ and $\opterm{y}{s}{x}$.
Obviously,
these implications can be replaced by logical equivalences. 
Quasisymmetry asserts that semi-orthogonality is equivalent to orthogonality:
$r \orthogonal s$ iff $s \orthog{x} r$ iff $r \orthog{y} s$.
A symmetric monoidal poset (ordered commutative monoid) is quasisymmetric as a one object biposet.

\paragraph{Internal Dialectics.}
For any opposed pair of ordinary relations $\term{Y}{R}{X}$ versus $\opterm{Y}{S}{X}$
the ``unit inequality'' $Y \subseteq R \circ S$ and
the ``counit inequality'' $S \circ R \subseteq X$
{\em taken together\/} are equivalent to the facts
that $R$ is the graph 
$R = \yoneda{}{(f)}{1} = \{ (y,f(y)) \mid y \memberof Y \}$
of a function $Y \stackrel{f}{\rightarrow} X$
and that $S$ is the transpose 
$S = R^{\rm op} = {\yoneda{}{(f)}{1}}^{\rm op}
   = \yoneda{}{(f)}{0} = \{ (f(y),y) \mid y \memberof Y \}$.
On the other hand,
the graph $\term{Y}{\yoneda{}{(f)}{1}}{X}$ of any function $Y \stackrel{f}{\rightarrow} X$ 
and its transpose $\yoneda{}{(f)}{0} = (\yoneda{}{(f)}{1})^{\rm op}$
satisfy the unit and counit inequalities.
So these conditions describe functionality in the biposet {\bf Rel}.
For any opposed pair of preorder bimodules 
$\term{{\cal Y}}{R}{{\cal X}}$ versus $\opterm{{\cal Y}}{S}{{\cal X}}$
where ${\cal X}$ is a complete lattice,
the ``unit inequality'' ${\cal Y} \subseteq R \circ S$ and
the ``counit inequality'' $S \circ R \subseteq {\cal X}$
{\em taken together\/} are equivalent to the facts
that $R$ is the graph 
$R = \yoneda{}{(f)}{1} = \{ (y,x) \mid f(y) \leq_X x \}$
of a monotonic function ${\cal Y} \stackrel{f}{\rightarrow} {\cal X}$
where $f$ is given by $f(y) = \bigwedge \{ x \memberof X \mid yRx \}$,
and that $S$ is the transposed graph of $f$'s order-theoretic involution
$S = (\yoneda{}{(\involution{f})}{1})^{\rm op}
   = \yoneda{}{(f)}{0} = \{ (x,y) \mid x \leq_X f(y) \}$
with $f$ given by $f(y) = \bigvee \{ x \memberof X \mid xSy \}$.
On the other hand,
the graph $\term{{\cal Y}}{\yoneda{}{(f)}{1}}{{\cal X}}$
of any monotonic function ${\cal Y} \stackrel{f}{\rightarrow} {\cal X}$ 
and its transpose $\yoneda{}{(f)}{0} = (\yoneda{}{(\involution{f})}{1})^{\rm op}$
satisfy the unit and counit inequalities.
So these conditions describe functionality in a part of the biposet {\bf Bim}.
In the general case,
when ${\cal X}$ is not necessarily complete,
the ``unit inequality'' $Y \subseteq R \circ S$ and
the ``counit inequality'' $S \circ R \subseteq X$
{\em taken together\/} are equivalent to the facts
that $R$ is the tensor implication ({\bf Bim} is a Heyting category)
$R = S \tensorimplysource {\cal X}
   = \{ (y,x) \mid (\forall x') \mbox{ if } {x'}Sy \mbox{ then } x' \leq_X x \}$
and that $S$ is the implication
$S = {\cal X} \tensorimplytarget R
   = \{ (x,y) \mid (\forall x') \mbox{ if } yR{x'} \mbox{ then } x \leq_X x' \}$.
So these conditions describe a potential functionality in the entire biposet {\bf Bim},
and can be used as a way of axiomatizing potential functionality in general biposets.
But they are also the defining conditions for internal adjoint pairs.

Two opposed terms $\term{y}{r}{x}$ and $\opterm{y}{s}{x}$ form an {\em adjoint pair\/} of terms or an {\em adjunction\/},
denoted by $\term{y}{r \dashv s}{x}$,
when they satisfy the ``unit inequality'' $y \preceq r \circ s$ and the ``counit inequality'' $s \circ r \preceq x$.
This axiomatizes ``functionality'' of {\bf P}-terms. 
The term $r$ is called the {\em left adjoint\/} and the term $s$ is called the {\em right adjoint\/} in the adjunction $r \dashv s$.
It is easy to show that right adjoints (and left adjoints) are unique,
when they exist:
if $\term{y}{r \dashv s_1}{x}$ and $\term{y}{r \dashv s_2}{x}$ then $s_1 = s_2$.
Denote the unique right adjoint of $\term{y}{r}{x}$ by $\opterm{y}{r^{\rm op}}{x}$. 
A {\em functional {\bf P}-term\/} is a {\bf P}-term with a right adjoint.
We usually use the notation $\term{y}{f \dashv f^{\rm op}}{x}$ for functional terms.
For any adjoint pair $\term{y}{f \dashv f^{\rm op}}{x}$:
when the unit is equality $y = f \circ f^{\rm op}$ they are a {\em coreflective pair\/};
when the counit is equality $f^{\rm op} \circ f = x$ they are a {\em reflective pair\/}; and
when both unit and counit are equalities they are an {\em inverse pair\/}.
For any functional term $\term{y}{f \dashv f^{\rm op}}{x}$:
the adjunction $f \dashv f^{\rm op}$ is a coreflection
   iff $f$ is an monomorphism (iff $f^{\rm op}$ is an epimorphism);
the adjunction is a reflection
   iff $f$ is an epimorphism (iff $f^{\rm op}$ is an monomorphism); and
the adjunction is a inversion
   iff $f$ is an isomorphism (iff $f^{\rm op}$ is an isomorphism),
   iff $f^{\rm op} = f^{-1}$ is the two-sided inverse of $f$.
Again we see that
(in this case, internal) adjointness is a kind of generalized inverse.
An internal coreflective pair $\term{y}{i \dashv p}{x}$ is also called a {\em subtype\/} of $x$.
Adjoint pairs compose in the obvious way:
$(g \dashv g^{\rm op}) \circ (f \dashv f^{\rm op}) = (g \circ f) \dashv (f^{\rm op} \circ g^{\rm op})$,
and $(x \dashv x)$ is the identity adjoint pair at $x$.
So each biposet {\bf P} has an associated adjoint pair category ${\bf P}^\dashv$,
whose objects are {\bf P}-types and whose arrows are {\bf P}-adjunctions.
Equivalently,
${\bf P}^\dashv$-arrows are just functional {\bf P}-terms.
There is an inclusion functor ${\bf P}^\dashv \stackrel{{\rm Inc}}{\rightarrow} {\bf P}$.
The construction ${(\;)}^\dashv$ can be described as
either ``internal dialecticality'' or ``functionality''.

In objective dialectics,
since dialectical contradictions are represented by adjunctions,
systems of dialectical contradictions are represented by diagrams in (pseudofunctors into) the category {\bf Adj}
whose objects are small categories and whose morphisms are adjoint pairs of functors. 
We call such a (pseudo)functor ${\bf P} \stackrel{E}{\longrightarrow} {\bf Adj}$
a {\em dialectical base} or an {\em indexed adjointness\/},
and use the notation
$E(y \stackrel{r}{\rightarrow} x) = (E^r \dashv E_r) \morph E(y) \rightarrow E(x)$.
A dialectical base can be split
into its {\em direct flow aspect\/} ${\bf P} \stackrel{E^{(\;)}}{\longrightarrow} {\bf Cat}$
and its {\em inverse flow aspect\/} ${\bf P}^{\rm op} \stackrel{E_{(\;)}}{\longrightarrow} {\bf Cat}$.
Objects of ${\bf P}$ are called {\em types} and arrows of ${\bf P}$ are called {\em terms}.
A {\em dialectical system\/} $y \stackrel{\iota,o}{\longrightarrow} x$ is a graph in {\bf P},
with inverse flow specifier $\iota$ and direct flow specifier $o$.
Dialectical systems are the ``motors of nature'' specifying the dialectical motion of structured entities,
and a dialectical base provides the ``motive power'' for this motion
(from a dialectical point-of-view ``motion'' is synonymous with ``transformation'').
In this paper we are chiefly concerned with dialectical bases of preorders.
Here a dialectical base ${\bf P} \stackrel{E}{\longrightarrow} {\bf adj}$
factors through the category {\bf adj} of preorders and adjoint pairs of monotonic functions,
and direct flow ${\bf P} \stackrel{E^{(\;)}}{\longrightarrow} {\bf PO}$
and inverse flow ${\bf P}^{\rm op} \stackrel{E_{(\;)}}{\longrightarrow} {\bf PO}$
map to preorders (and usually semilattices).
Any functional term $\term{y}{f}{x}$ in a biposet {\bf P} defines
a {\em direct image\/} monotonic function
$\scriptbf{P}[y,y] \stackrel{\scriptbf{P}^f}{\longrightarrow} {\bf P}[x,x]$
defined by ${\bf P}^f(q) \define f^{\rm op} \circ q \circ f$ for endoterms $\term{y}{q}{y}$,
and an {\em inverse image\/} monotonic function
${\bf P}[y,y] \stackrel{\scriptbf{P}_f}{\longleftarrow} {\bf P}[x,x]$
defined by ${\bf P}_f(p) \define f \circ p \circ f^{\rm op}$ for endoterms $\term{x}{p}{x}$.
It is easy to check that direct and inverse image form an adjoint pair of monotonic functions
${\bf P}(\term{y}{f}{x}) = {\bf P}[y,y] \stackrel{\scriptbf{P}^f \dashv \scriptbf{P}_f}{\longrightarrow} {\bf P}[x,x]$
for each functional {\bf P}-term $\term{y}{f \dashv f^{\rm op}}{x}$.
The construction ${\bf P}$,
mapping types to their poset of endoterms
${\bf P}(x) = {\bf P}[x,x]$ and
mapping functional {\bf P}-terms to their adjoint pair of direct/inverse image adjunction,
is a dialectical base (indexed adjointness) ${\bf P}^\dashv \stackrel{\scriptbf{P}}{\longrightarrow} {\bf adj}$.

\paragraph{Bisemilattices.}
The structural aspect of both the intuitionistic and classical semantics of dialectical logic is defined in terms of bisemilattices.
A {\em join bisemilattice\/} or {\em semiexact biposet\/} is a biposet whose homsets are finitely complete \mbox{(join-)}semilattices
and whose composition is finitely (join-)continuous.
Horizontally the term ``semilattice-valued category'' might be indicated,
but vertically from a bicategorical viewpoint the term ``bisemilattice'' seems appropriate.
In more detail,
a join bisemilattice
${\bf P} = \triple{\quadruple{{\bf P}}{\preceq}{\circ}{{\rm Id}}}{\vee}{\bot}$
consists of the data and axioms of a biposet
${\bf P} = \quadruple{{\bf P}}{\preceq}{\circ}{{\rm Id}}$,
plus the following.
For any two parallel terms $\term{y}{s,r}{x}$
there is a {\em join\/} term $\term{y}{s \vee r}{x}$
satisfying $s \vee r \preceq_{y,x} t$ iff $s \preceq_{y,x} t$ and $r \preceq_{y,x} t$.
For any pair of types $y$ and $x$ there is an {\em empty\/} (or {\em bottom\/}) term $\term{y}{\bot_{y,x}}{x}$
satisfying $\bot_{y,x} \preceq r$. 
The tensor product is finitely (join-) continuous
(distributive w.r.t. finite joins) on the right and the left,
$s \circ (r_1 \vee \cdots \vee r_n) = (s \circ r_1) \vee \cdots \vee (s \circ r_n)$
and $(s_1 \vee \cdots \vee s_m) \circ r = (s_1 \circ r) \vee \cdots \vee (s_m \circ r)$
for any natural numbers $n$ and $m$, including $0$.
A join bisemilattice with one object (universal type) is called a {\em monoidal join semilattice\/}.
For any {\bf P}-term $\term{y}{r}{x}$
the associated closed-below subset ${\perp}(r)$ of terms orthogonal to $r$
is also closed under finite joins:
$\bot_{x,y} \memberof {\perp}(r)$,
and if $s_1,s_2 \memberof {\perp}(r)$
then $(s_1 \vee s_2) \memberof {\perp}(r)$ also.
So ${\perp}(r)$ is an order ideal called the {\em orthogonality ideal\/} of $r$.
If {\bf P} is a join bisemilattice,
then the opposite biposet ${\bf P}^{\rm op}$ is also a join bisemilattice.
A {\em meet bisemilattice\/} is a biposet whose co-dual biposet is a join bisemilattice;
that is,
whose homsets are finitely complete (meet-)semilattices
and whose composition is finitely (meet-) continuous.
For any two parallel terms $\term{y}{s,r}{x}$
there is a {\em meet\/} term $\term{y}{s \wedge r}{x}$
satisfying $t \preceq_{y,x} s \wedge r$ iff $t \preceq_{y,x} s$ and $t \preceq_{y,x} r$.
For any pair of types $y$ and $x$ there is a {\em full\/} (or {\em top\/}) term $\term{y}{\top_{y,x}}{x}$
satisfying $r \preceq \top_{y,x}$.
A {\em morphism of join bisemilattices\/} ${\bf P} \stackrel{H}{\rightarrow} {\bf Q}$
is a functor which preserves homset order and finite homset joins.
A {\em bilattice\/} or {\em exact biposet\/} is a join bisemilattice whose homsets are lattices.
Note: a bilattice is not necessarily a meet bisemilattice.

To recapitulate,
a join bisemilattice
${\bf P} = \langle {\bf P},\preceq,\circ,{\rm Id},\vee,\bot \rangle$
is the central structural notion in dialectical logic.
It should be viewed as a direct generalization of a distributive lattice
$L = \langle L,\leq,\wedge,\top,\vee,\bot \rangle$.
The generalization occurs in two different senses.
(1) A join bisemilattice is a distributed structure:
the notion of types is included,
and the lattice operations are distributed over and between types.
(2) The lattice meet $s \wedge r$ is replaced by the tensor product $s \circ r$,
and the top (meet unit) $\top$ is replaced by the identities $\term{x}{x}{x}$.
Since a lattice meet is associative, unital, commutative, idempotent, and unit bounded,
whereas a tensor product is only associative and unital,
we see that commutativity, idempotency and unit-boundedness are discarded globally
in the generalization.
However, these three properties are incorporated in dialectical logic in two distinct ways.
On the one hand,
in the object aspect of dialectical logic the laws of idempotency and partiality (unit-boundedness)
are incorporated locally in the idea of comonoid (see appendix).
These local comonoidal contexts are standard contexts.
Comonoidal structures define the generalized topological notions of interior and closure of terms,
which are the modalities of affirmation and consideration from linear logic \cite{Girard}.
In axiomatics and proof theory,
the idempotency and partiality axioms are known as contraction and weakening.
On the other hand,
in the construction of the classical context from the intuitionistic context,
a natural weakened form of commutativity,
called quasisymmetry,
is found to be essential.
Moreover,
in the object aspect of classical dialectical logic,
quasisymmetry is equivalent to internal (topological) dialecticality!

A {\em complete Heyting category\/} or {\em complete bilattice\/},
abbreviated {\em cHc\/}, 
is the same as a complete join bisemilattice;
that is,
an join bisemilattice {\bf H} whose homsets are complete join semilattices
(arbitrary joins exist) and whose tensor product is join continuous
(completely distributive w.r.t. joins) on the right and the left,
$s \circ (\vee_i r_i) = \vee_i (s \circ r_i)$
and $(\vee_j s_j) \circ r = \vee_j (s_j \circ r)$.
Since the homset
${\bf H}[x,z]$ is a complete lattice
and the left tensor product
${\bf H}[x,z] \stackrel{r \circ}{\rightarrow} {\bf H}[y,z]$
is continuous, 
it has (and determines) a right adjoint
${\bf H}[x,z] \stackrel{r \tensorimplysource}{\leftarrow} {\bf H}[y,z]$
called {\em left tensor implication\/},
and defined by
$r \tensorimplysource t \define \bigvee \{ \term{x}{s}{z} \mid r \circ s \preceq_{y,z} t \}$.
Adjointness means that left tensor product and left tensor implication
satisfy the dialectical axiom
$r \circ s \preceq_{y,z} t \mbox{ iff } s \preceq_{x,z} r \tensorimplysource t$.
Similarly,
the right tensor product
${\bf H}[z,y] \stackrel{\circ r}{\rightarrow} {\bf H}[z,x]$
has (and determines) a right adjoint
${\bf H}[z,y] \stackrel{\tensorimplytarget r}{\leftarrow} {\bf H}[z,x]$
called {\em right tensor implication\/},
and defined by
$s \tensorimplytarget r \define \bigvee \{ \term{z}{t}{y} \mid t \circ r \preceq_{z,x} s \}$.
Adjointness means that right tensor product and right tensor implication
satisfy the dialectical axiom
$t \circ r \preceq_{z,x} s \mbox{ iff } t \preceq_{z,y} s \tensorimplytarget r$.
A complete Heyting category with one object (universal type) is called a {\em complete Heyting monoid\/}
\cite{Birkhoff,Henkin}
{\bf M} = \mbox{$\langle M,\preceq,\circ,e,\tensorimplysource,\tensorimplytarget,\vee,\bot,\wedge,\top \rangle$.}
If {\bf M} is symmetric,
then the two tensor implications are one:
$\imply \define \tensorimplysource = \tensorimplytarget$.
A complete symmetric Heyting monoid is known as a {\em closed\/} ({\em monoidal\/}) {\em poset\/}.

\paragraph{Examples.}
Complete Heyting categories are everywhere.
The datatype
{\bf 2} = \mbox{$\langle \{0,1\},\leq,\wedge,1,\Rightarrow,\vee,0 \rangle$}
        = $\power{{\bf 1}}$
of boolean values is a complete Heyting monoid,
whose tensor product is the homset lattice meet $\wedge = {\tt and}$ with unit $1 = {\tt true}$,
and whose homset boolean sum is $\vee = {\tt or}$ with bottom $\bot = 0 = {\tt false}$. 
The powerset datatype
$\power{A} = \mbox{$\langle \power{A},\subseteq,\cap,A,\Rightarrow,\cup,\emptyset \rangle$}$
of subsets of a fixed set $A$ is a complete Heyting monoid.
More generally,
any complete Heyting algebra
{\bf M} = \mbox{$\langle M,\preceq,\wedge,\top,\Rightarrow,\vee,\bot \rangle$}
is the same as a complete {\em cartesian\/} Heyting monoid,
where tensor product coincides with homset lattice meet $s \circ r = s \wedge r$.
The category {\bf Rel} is a complete Heyting category.
Given a monoid {\bf M} = \mbox{$\langle M,\circ,e, \rangle$.},
the category of formal {\bf M}-languages $\power{{\bf M}}$ is a complete Heyting monoid,
where tensor product is language concatenation $L \bullet K$ with unit $\{e\}$,
and the two tensor implications are (left and right) language division or cut
$L \backslash K \define \{ m \memberof M \mid \forall_{n \in M} \mbox{ if } n \memberof L \mbox{ then } n \circ m \memberof K \}$.
In particular,
given an alphabet $A$,
the category of formal $A$-languages $\power{A^\ast}$ is a complete Heyting monoid
(the free complete Heyting monoid over the set $A$).
The extended nonnegative real numbers
{\bf R} = \mbox{$\langle [0,\infty],\geq,+,0,\diff,\wedge,\infty,\vee,0 \rangle$}
with opposite order is a complete (noncartesian) Heyting monoid,
where tensor product is numerical sum $s + r$ with unit $0$,
and tensor implication is numerical difference $s \diff r \define s - r \mbox{ if } s \geq r, = 0 \mbox{ otherwise}$.
There is a complete Heyting monoid $\power{{\bf R}}$ associated with the extended nonnegative real numbers {\bf R},
whose morphisms $\term{0}{R}{0}$ are subsets of reals $R \subseteq [0,\infty]$ with $\bot_{0,0} = \emptyset$ and $\top_{0,0} = [0,\infty]$,
whose homset order is the closed-above order $S \preceq R$ when $S \subseteq \abov{R}$,
whose composition is defined pointwise by $S \circ R \define \{ s+r \mid s \memberof S, r \memberof R \}$,
and whose identity is $\term{0}{\singleton{0}}{0}$.
The singleton operator
${\bf R} \stackrel{\singleton{}}{\longrightarrow} \power{{\bf R}}$
functorially embeds ${\bf R}$ into $\power{{\bf R}}$.
The infimum operator $\wedge$ is a functor
$\power{{\bf R}} \stackrel{\wedge}{\longrightarrow} {\bf R}$,
and (on the single homset) infimum reflects $\wedge \dashv \{\,\}$
the powerset of reals $\power{{\bf R}}$ into the reals ${\bf R}$.
The examples $\power{A^\ast}$ and $\power{{\bf R}}$
motivate and are special cases of the following important construction.
Just as every set $C$ has an associated subset Heyting algebra $\power{C}$,
so also every category {\bf C} has an associated {\em subset category\/} $\power{{\bf C}}$,
whose objects are {\bf C}-objects,
and whose arrows are subsets of homsets:
$\term{y}{R}{x}$ when $R \subseteq {\bf C}[y,x]$.
So $\power{{\bf C}}[y,x] = \power{{\bf C}[y,x]}$ with $\bot_{y,x} = \emptyset$ and $\top_{y,x} = {\bf C}[y,x]$.
The tensor product in $\power{{\bf C}}$ is defined pointwise,
$S \circ R
 \define \{ z \stackrel{s \cdot_C r}{\rightarrow} x \mid s \memberof S, r \memberof R \}$,
generalizing the concatenation of formal languages and the addition of nondeterministic reals.
The identity at $x$ is the singleton set $x \stackrel{\{x\}}{\rightarrow} x$,
which can be identified with $x$ itself.
The left tensor implication is defined by
$R \tensorimplysource T
 \define \{ x \stackrel{s}{\rightarrow} z \mid (\forall r) \mbox{ if } r \memberof R \mbox{ then } r \cdot_C s \memberof T \}$
for any two $\power{{\bf C}}$-arrows $\term{y}{R}{x}$ and $\term{y}{T}{z}$,
and the right tensor implication is defined dually.
The booleans are the ``simplest'' subset category
{\bf 2} = $\power{{\bf 1}}$.

More generally,
every biposet {\bf P} has an associated {\em closure subset category\/} $\power{{\bf P}}$,
whose arrows, tensor product, and identities are as in the unordered (identity order) case,
and whose homset order is the closed-below order $S \preceq R$ when $S \subseteq \below{R}$.
The definition of the implications follow from the continuity of the tensor product:
the left tensor implication is
$R \tensorimplysource T \define \bigcup \{ \term{x}{S}{z} \mid R \circ S \preceq T \}$,
and the right tensor implication is defined dually.
Since every category {\bf C} is a biposet with the identity order on homsets,
the subset construction $\power{{\bf C}}$ is a special case of the closure subset construction.
It is easiest and most natural to define closure subset categories.
Furthermore,
this accords exactly with the appropriate generalization
when biposets (or better, bipreorders) are replaced by bicategories.
However,
it is standard practice to use partial orders and closed subsets of terms.
Any closure subset category $\power{{\bf P}}$ has an associated {\em closed subset category\/} $\closurepower{{\bf P}}$,
whose objects are the principal ideals $\{ \below{x} \mid x \mbox{ a {\bf P}-type} \}$,
whose arrows $\term{\below{y}}{R}{\below{x}}$ are closed-below subsets of terms $R \subseteq {\bf P}[y,x]$ and $R = \below{R}$,
whose homset order is subset inclusion $S \preceq R$ when $S \subseteq R$,
and whose tensor product is the closure of the $\power{{\bf P}}$-composition
$S \circ R
 \define \below{\{ \term{z}{s \circ r}{x} \mid s \memberof S, r \memberof R \}}$.
The definition of the implications is as above
$R \tensorimplysource T \define \bigcup \{ \term{\below{x}}{S}{\below{z}} \mid R \circ S \preceq T \}$.
For any biposet {\bf P},
the closed subset category $\closurepower{{\bf P}}$ is a complete Heyting category.
For any {\bf P}-term $\term{y}{r}{x}$
the orthogonality ideal is a term $\term{x}{{\perp}(r)}{y}$ in $\closurepower{{\bf P}}$. 
In fact,
orthogonality is a contravariant lax functor,
${\perp}(x) = {\downarrow} x$ and ${\perp}(r) \circ {\perp}(s) \subseteq {\perp}(s \circ r)$, 
which is also hom-set contravariant,
if $s \preceq r$ then ${\perp}(r) \subseteq {\perp}(s)$.

\paragraph{Type Sums.}
The closure subset construction $\power{{\bf P}}$ does not capture the notion of ``relational structures'' completely.
Although it introduces nondeterminism on the arrows,
it leaves the objects alone.
Type sums introduce distributivity on objects in a constructive fashion.
We give a brief survey of type sums here.

A popular ``external'' model for predicates in logic is provided by subtypes.
These are often constructed by a factorization/inclusion adjointness on slice categories of functional terms. 
Subtypes are closely connected with the ``internal'' model for predicates called comonoids
(discussed in the appendix).
For any type $x$,
an $x$-{\em subtype\/} $\term{y}{i \dashv p}{x}$
is another name for an internal coreflective pair $i \dashv p$ between $y$ and $x$;
that is,
$y = i \circ p$ and $p \circ i \preceq x$.
The interior term $\term{x}{p \circ i}{x}$ is the comonoid associated with the subtype.
We can define the usual subtype order between any two $x$-subtypes
$\term{y}{i \dashv p}{x}$ and $\term{z}{j \dashv q}{x}$
as $\pair{y}{i} \preceq \pair{z}{j}$
when there exists a functional term $\term{y}{h \dashv h^{\rm op}}{z}$
such that $i = h \circ j$ and $q \circ h^{\rm op} = p$.
The largest $x$-subtype is the identity $\term{x}{x \dashv x}{x}$.
A term $\term{z}{s}{y}$ is an (external) source {\em subterm\/} of a term $\term{y}{r}{x}$,
when $s = i \circ r$ for some source subtype $\term{z}{i \dashv p}{y}$.
Two terms $\term{z}{s}{x}$ and $\term{y}{r}{x}$ with common target type $x$
satisfy the {\em domain(-of-definition) order\/} $s \sqsubseteq r$
when
$z$ is a subtype of $y$ mediated by the coreflective pair $\term{z}{i \dashv p}{y}$
and
$s \preceq i \circ r$.
A more complete axiomatization of subtypes and comonoids is given in \cite{Kent89}.

The {\em empty type\/} $0$ is a special type
such that for any type $x$ there are unique terms
between $x$ and $0$ in either direction.
So $0$ is an initial type,
satisfying the condition $\term{0}{r}{x}$ implies $r = \bot_{0,x}$;
and $0$ is a terminal type,
satisfying the condition $\term{x}{r}{0}$ implies $r = \bot_{x,0}$.
A type that is both initial and terminal is a null type.
The null type $0$ is the ``empty sum'',
the sum of the empty collection of types.
For any pair of types $y$ and $x$,
the bottom term
$y \stackrel{\bot_{y,x}}{\rightharpoondown} x$ 
is the composition
$\bot_{y,x} = \bot_{y,0} \circ \bot_{0,x}$.
The empty type
$\term{0}{\bot_{0,x} \dashv \bot_{x,0}}{x}$ is the smallest subtype of any type $x$,
and its associated comonoid is the smallest comonoid.
Given two types $y$ and $x$,
the {\em sum} of $y$ and $x$ is a composite type $y \oplus x$
having $y$ and $x$ as disjoint subtypes
$y \stackrel{i_y \dashv p_y}{\rightharpoondown} {y \oplus x} \stackrel{i_x \dashv p_x}{\leftharpoondown} x$ 
which cover $y \oplus x$.
So $y \oplus x$ comes equipped
with two {\em injection terms\/}
$y \stackrel{i_y}{\rightharpoondown} y \oplus x \stackrel{i_x}{\leftharpoondown} x$
and two {\em projection terms\/}
$y \stackrel{p_y}{\leftharpoondown} y \oplus x \stackrel{p_x}{\rightharpoondown} x$
which satisfy the ``comonoid covering equation''
$(p_y \circ i_y) \vee (p_x \circ i_x) =  y \oplus x$
stating that the subtype comonoids cover the sum type,
and satisfy the ``subtype disjointness equations''
$i_y \circ p_y = y$,
$i_y \circ p_x = \bot_{y,x}$,
$i_x \circ p_y = \bot_{x,y}$, and
$i_x \circ p_x = x$,
or the ``comonoid disjointness equation''
$(p_y \circ i_y) \wedge (p_x \circ i_x) =  \bot_{y \oplus x}$
stating that the subtype comonoids partition the sum type.

Equivalently,
the sum type $y \oplus x$ is both a coproduct via the injections
and a product via the projections of the types $y$ and $x$.
Given any pair of terms
$y \stackrel{t}{\rightharpoondown} z \stackrel{s}{\leftharpoondown} x$
there is a unique term
$y \oplus x \stackrel{\relcopair{t}{s}{}{}}{\rightharpoondown} z$,
called the sum {\em source pairing\/} of $t$ and $s$,
which satisfies the source pairing conditions
$i_y \circ \relcopair{t}{s}{}{} = t$ and
$i_x \circ \relcopair{t}{s}{}{} = s$.
Just define
$\relcopair{t}{s}{}{} \define (p_y \circ t) \vee (p_x \circ s)$.
These properties say that the sum $y \oplus x$ is a coproduct.
Equivalently,
any term $\term{y \oplus x}{r}{z}$ satisfies
the ``subterm covering condition'' $r_y \vee r_x = r$
and the ``subterm disjointness condition'' $r_y \wedge r_x = \bot_{y \oplus x,z}$,
where the $y$-th and $x$-th internal source subterms of $r$ are defined by
$r_y \define (p_y \circ i_y) \circ r$ and $r_x \define (p_x \circ i_x) \circ r$.
Dually, given any pair of terms
$y \stackrel{t}{\leftharpoondown} z \stackrel{s}{\rightharpoondown} x$
there is a unique term
$z \stackrel{\relpair{t}{s}{}{}}{\rightharpoondown} y \oplus x$,
called the sum {\em target pairing\/} of $t$ and $s$,
which satisfies the target pairing conditions
$\relpair{t}{s}{}{} \circ p_y = t$ and
$\relpair{t}{s}{}{} \circ p_x = s$.
Just define
$\relpair{t}{s}{}{} \define (t \circ i_y) \vee (s \circ i_x)$.
These properties say that the sum $y \oplus x$ is a product.
Equivalently,
any term $\term{z}{r}{y \oplus x}$ satisfies
the ``subterm covering condition'' $r^y \vee r^x = r$
and the ``subterm disjointness condition'' $r^y \wedge r^x = \bot_{y \oplus x,z}$,
where the $y$-th and $x$-th internal target subterms of $r$ are defined by
$r^y \define r \circ (p_y \circ i_y)$ and $r^x \define r \circ (p_x \circ i_x)$.
An object which is both a product and a coproduct of two other objects
is called a {\em biproduct\/}.
So type sums are biproducts.
A join bisemilattice {\bf P} is said to have {\em type sums\/} or {\em biproducts\/}
when type sums exist for any (finite) collection of types.

\paragraph{Domains/Totality.}
The ``action'' of a term $\term{y}{r}{x}$ is concentrated in and localized to a ``locus of activity'',
a source subtype called the domain-of-definition of $r$
(and a target subtype called the range of $r$).
This domain is a kind of ``effect'' or ``read-out'' of a term $r$,
and defines predicate transformation \cite{Kent89} so that $r$ becomes a predicate transformer.
There are two approaches for formulating this.

One approach regards the notion of total term as fundamental,
and domain-of-definition as derived.
In this approach a term $\term{y}{r}{x}$ is defined to be {\em total\/}
when $s \circ r = \bot_{z,x}$ implies $s = \bot_{z,y}$ for any term $\term{z}{s}{y}$.
We then axiomatize the notion of domain-of-definition
by assuming that inclusion of total terms has a right adjoint right inverse $\total{(\,)}$ 
called the {\em totalization\/} or {\em total subterm operator\/} at $x$,
forming a coreflective pair ${\rm Inc} \dashv \total{(\,)}$
with ${\rm Inc} \cdot \total{(\,)} = {\rm Id}$.
This means that $t \sqsubseteq r$ iff $t \sqsubseteq \total{r}$
for any total term $\term{z}{t}{x}$ and any term $\term{y}{r}{x}$;
moreover,
$\total{t} = t$ for any total term $t$.
Equivalently,
$\total{r}$ is the largest total term under $r$ in the domain order:
(1) $\total{r} \sqsubseteq r$ and
(2) $t \sqsubseteq r$ implies $t \sqsubseteq \total{r}$ for total $t$.
So,
there is a $y$-subtype $d \stackrel{i \dashv p}{\rightharpoondown} y$
called the {\em domain subtype\/} of $r$,
such that $\total{r} \preceq i \circ r$.
Since total terms are closed above we must have equality $\total{r} = i \circ r$.
The associated $r$-subterm $\total{r}$ is called the {\em totalization\/} of $r$.
The domain subtype $d \stackrel{i \dashv p}{\rightharpoondown} y$
is the $y$-subtype where the term $\term{y}{r}{x}$ has non-nil action. 
It is the largest $y$-subtype
whose associated $r$-subterm is total,
in the sense that any other such subtype factors through the domain subtype.
We need additional axioms to ensure
that any term $r$ is recoverable from its totalization by the identity $r = p \circ \total{r}$.

Another, perhaps better,
approach regards the notion of domain-of-definition as fundamental,
and defines totalness as a derived notion.
The {\em domain subtype\/} of any term $\term{y}{r}{x}$
is the source subtype $\domain{(r)} = \term{d_r}{i_r \dashv p_r}{y}$
which satisfies the axioms:
(1) ``minimality''
    $z \succeq \domain{(r)}$ iff $p \circ i \circ r = r$
    for any source subtype $\term{z}{i \dashv p}{y}$;
(2) ``composition''
    $\domain{(s \circ r)} = \domain{(s \circ p_r)}$
    for any composable term $\term{z}{s}{y}$; and
(3) ``monotonicity''
    $r \preceq r'$ implies $\domain{(r)} \preceq \domain{(r')}$
    for any parallel term $\term{y}{r'}{x}$.
Define the {\em totalization\/} of $r$ to be the $r$-subterm $\total{r} \define i_r \circ r$.
A term $\term{y}{r}{x}$ is {\em total\/}
when its domain is the largest source subtype,
the entire source type $\domain{(r)} = y$.
Some identities for the domain operator $\domain{}$ are:
types are their own domain
$\domain{(x)} = x$;
the totalization is total, since
$\domain{(\total{r})}
 = \domain{(i_r \circ r)}
 = \domain{(i_r \circ p_r)}
 = \domain{(d_r)}
 = d_r$;
functional terms $\term{y}{f \dashv f^{\rm op}}{x}$ are total,
since the counit inequality $y \preceq f \circ f^{\rm op}$ implies
$y = \domain{(y)}
 \preceq \domain{(f \circ f^{\rm op})}
       = \domain{(f \circ p_{f^{\rm op}})}
 \preceq \domain{(f \circ x)}
       = \domain{(f)}
 \preceq y$;
in particular,
subtypes are total
$\domain{(\term{y}{i \dashv p}{x})} = y$;
domain subtypes are their own domain, since
$\domain{(p_r)}
 = \domain{(p_r \circ d_r)}
 = \domain{(p_r \circ \total{r})}
 = \domain{(r)}
 = d_r$;
only zero has empty domain
$\domain{(r)} = \term{0}{\bot_{0,y} \dashv \bot_{y,0}}{y}$
iff $r = 0_{y,x}$ for any term $\term{y}{r}{x}$; and
given any two total terms $\term{z}{s}{y}$ and $\term{y}{r}{x}$,
the composite term $\term{z}{s \circ r}{x}$ is also total,
since 
$\domain{(s \circ r)}
 = \domain{(s \circ p_r)}
 = \domain{(s \circ y)}
 = \domain{(s)}
 = z$.

Total terms are close above w.r.t. term entailment order.
Since functional terms (in particular, identity terms) are total,
and the composite of total terms are also total,
total terms form a biposet $\total{{\bf P}}$,
a subbiposet of ${\bf P}$,
${\bf P}^\dashv \subseteq \total{{\bf P}} \subseteq {\bf P}$,
which is the homset order closure of ${\bf P}^\dashv$.
So $\total{{\bf P}}$ is a subbiposet ${\bf P}$,
which preserves homset joins but usually does not have a bottom.
Total terms in Heyting categories have been suggested \cite{Hoare87}
(although not by that name)
as good models for programs
(brief discussion in the section on Heyting categories).


\paragraph{Matrices and Distributors.}
There is a cHc with type sums $\mat{{\bf R}}$ associated with the complete Heyting monoid of nonnegative reals 
{\bf R} = \mbox{$\langle [0,\infty],\geq,+,0,\diff,\wedge,\infty,\vee,0 \rangle$};
whose objects are sets $X,Y,Z, \cdots$,
whose morphisms $\term{Y}{\phi}{X}$ are $\product{Y}{X}$-indexed collections of reals
$\phi = \{ \phi_{yx} \mid y \memberof Y, x \memberof X \}$
(that is, real-valued characteristic functions $\product{Y}{X} \stackrel{\phi}{\rightarrow} [0,\infty]$),
whose composition $\term{Z}{\psi \circ \phi}{X}$ for morphisms $\term{Z}{\psi}{Y}$ and $\term{Y}{\phi}{X}$ is
$(\psi \circ \phi)_{zx} \define \bigwedge_{y \in Y} [\psi_{zy} + \phi_{yx}]$,
and whose identity $\term{X}{X}{X}$ at $X$ is defined by
$X_{x'x} = 0 \mbox{ if } x'=x,
         = \infty \mbox{ otherwise}$.
Terms $\term{Y}{\phi}{X}$ can be viewed as {\em fuzzy relations\/},
where $\phi_{yx}$ measures the degree of membership in $\phi$,
with $\phi_{yx} = 0$ asserting full (crisp) membership $(y,x) \memberof \phi$
and  $\phi_{yx} = \infty$ asserting full nonmembership $(y,x) \not\memberof \phi$.
More generally,
every cHc {\bf H} has an associated {\em matrix category\/} $\mat{{\bf H}}$,
whose objects are {\em {\bf H}-vectors\/} ${\cal X} = \pair{X}{|\;|_{\cal X}}$
where $X$ is an indexing (node) set and $X \stackrel{|\;|_{\cal X}}{\rightarrow} {\rm Obj}({\bf H})$ is a (typing) function,
whose arrows $\term{{\cal Y}}{R}{{\cal X}}$ are {\em {\bf H}-matrices\/}
where $R$ is a $\product{Y}{X}$-indexed collection of {\bf H}-terms
$R = \left( \term{|y|_{\cal Y}}{r_{yx}}{|x|_{\cal X}} \mid y \memberof Y, x \memberof X \right)$
(in other words, a generalized ${\rm Ar}({\bf H})$-valued characteristic functions
 $\product{Y}{X} \stackrel{r}{\rightarrow} {\rm Ar}({\bf H})$
 compatible with source and target),
whose homset order is pointwise order
$(s_{yx}) \preceq (r_{yx})$ when $s_{yx} \preceq r_{yx}$
for all $y \memberof Y$ and $x \memberof X$,
whose composition is matrix tensor product
$(S \circ R)_{zx} = S_{zY} \circ R_{Yx}
                  = \bigvee_{y \in Y} (s_{zy} \circ r_{yx})$
                  ``{\em matrix tensor product}''
for composable matrices $\term{{\cal Z}}{S}{{\cal Y}}$ and $\term{{\cal Y}}{R}{{\cal X}}$,
whose identity at ${\cal X}$ is the diagonal matrix $\term{{\cal X}}{{\cal X}}{{\cal X}}$
defined as identity {\bf H}-terms
${\cal X}_{xx} = \term{|x|_{\cal X}}{|x|_{\cal X}}{|x|_{\cal X}}$
on the diagonal
and zero (bottom) {\bf H}-terms
${\cal X}_{x'x} = \term{|x'|_{\cal X}}{\bot}{|x|_{\cal X}}$
off the diagonal,
and whose matrix tensor implications are
$(S \tensorimplytarget R)_{zy} = S_{zX} \tensorimplytarget R_{yX}
                               = \bigwedge_{x \in X} (s_{zx} \tensorimplytarget r_{yx})$
                               ``{\em right matrix tensor implication}'' and
$(R \tensorimplysource T)_{xz} = R_{Yx} \tensorimplysource T_{Yz}
                               = \bigwedge_{y \in Y} (r_{yx} \tensorimplysource t_{yz})$
                               ``{\em left matrix tensor implication}''.
Matrices $\term{Y}{R}{X}$ can be viewed as {\em fuzzy {\bf H}-relations\/}.
For any cHc {\bf H},
the matrix category $\mat{{\bf H}}$ is a complete Heyting category
for which biproducts (type sums) exist.
For the complete cartesian Heyting monoid of boolean values
{\bf 2} = \mbox{$\langle \{0,1\},\leq,\wedge,1,\Rightarrow,\vee,0 \rangle$}
        = $\power{{\bf 1}}$
the associated cHc with biproducts is
$\mat{{\bf 2}} = \mat{\power{{\bf 1}}} = {\bf Rel}$
the category of ordinary relations.

Every category {\bf C} has an associated {\em distributor category\/} $\distrib{{\bf C}}$
defined by $\distrib{{\bf C}} \define \mat{\power{{\bf C}}}$.
In more detail,
$\distrib{{\bf C}}$ is the category,
whose objects are {\em distributed {\bf C}-objects\/} or {\bf C}-{\em vectors\/} ${\cal X} = \pair{X}{|\;|_{\cal X}}$ as above,
whose arrows $\term{{\cal Y}}{R}{{\cal X}}$ are {\em distributed {\bf C}-arrows\/} or {\bf C}-{\em distributors}
where $R \subseteq \triproduct{Y}{{\rm Ar}({\bf C})}{X}$ is a digraph between the underlying node sets
consisting of compatible triples:
if $(y,r,x) \memberof R$ then $|y|_{\cal Y} \stackrel{r}{\rightarrow} |x|_{\cal X}$ is a {\bf C}-arrow,
whose tensor product is defined pointwise as
$(S \circ R)_{z,x} \define \bigcup_{y \in Y} [S_{zy} \circ R_{yx}]$,
and whose identity at ${\cal X}$ is the {\bf C}-distributor
${\cal X} \define \{ (x,|x|_{\cal X},x) \mid x \memberof X \}
          \subseteq \triproduct{X}{{\rm Ar}({\bf C})}{X}$
consisting (on the diagonal) of all the {\bf C}-identities indexed by ${\cal X}$.
The $(y,x)$-th fiber of a $\distrib{{\bf C}}$-term $\term{{\cal Y}}{R}{{\cal X}}$,
defined by $R_{yx} \define \{ \term{y}{r}{x} \mid r \memberof R \}$,
is a $\power{{\bf C}}$-term $\term{y}{R_{yx}}{x}$,
and $R$ is the disjoint union
$R = \coprod_{y \in {\cal Y}, x \in {\cal X}} R_{yx}$
of its $\power{{\bf C}}$-term fibers.
For any category {\bf C},
the distributor category $\distrib{{\bf C}}$ is a complete Heyting category
for which biproducts (type sums) exist.
The category of relations is the ``simplest'' distributor category
${\bf Rel} = \distrib{{\bf 1}}$.
Since any category {\bf C} has a unique functor
${\bf C} \stackrel{!}{\rightarrow} {\bf 1}$
to the one-arrow category,
every distributor category has a functor
(morphism of distributor categories)
$\distrib{{\bf C}} \stackrel{\distrib{!}}{\rightarrow} {\bf Rel}=\distrib{{\bf 1}}$.

In distributor categories $\distrib{{\bf C}}$ a comonoid $W$ of type $X$
is essentially a subobject (subset) $W \subseteq X$,
and so $\comonoid{}{X}{} \cong \power{X}$.
More generally,
every biposet {\bf P} has an associated {\em closure distributor category\/} $\distrib{{\bf P}} \define \mat{\power{{\bf P}}}$,
whose objects, arrows, tensor product and identities are as above,
and whose homset order is the pointwise closed-below order.
Given any set of attributes or sorts $A$,
a signature
$\Sigma = \{ \Sigma_{y,a} \mid y \memberof \mbox{multiset}(A), a \memberof A \}$ 
over $A$
determines a term category ${\bf T}_\Sigma$,
the initial algebraic theory over $\Sigma$,
whose objects are multisubsets of $A$ 
(arities, tuplings, etc.)
and whose arrows are tuples of $\Sigma$-terms.
A parallel pair of arrows
$\term{{\cal Y}}{S,R}{{\cal X}}$
in the distributor category $\distrib{{\bf T}_\Sigma^{\rm op}}$
is a Horn clause logic program,
whose predicate names are ${\cal X}$-nodes,
whose clause names are ${\cal Y}$-nodes, 
whose clause-head atoms are (w.l.o.g.) collected together as $S$,
whose clause-body atoms are collected together as $R$,
and whose associated fixpoint operator (see appendix) is the inverse/direct flow composite
$((\,) \tensorimplytarget R) \circ S$
defined on Herbrand interpretations with database scheme ${\cal X}$.
In much of the logic of dialectical processes
(in particular, for Girard's completeness theorem)
closure subset categories suffice.
However,
for the constraint dialectic,
the full nondeterminism and parallelism of distributor categories is essential.

\section{Semantics}

Flow is at the heart of computational and dynamic systems.
From the calculi and semantics of processes comes the notion of process communication and process flow.
From logic programming and Petri net theory comes the idea that flow is dialectical,
in the sense of moving in both a direct and an inverse direction.
Flow is the behavior of dialectical processes.
Direct flow is modelled by a nonsymmetric tensor product,
whereas inverse flow is modelled by
both a left (reverse-time, source, quo-object) tensor implication
and a right (forward-time, target, subobject) tensor implication
(or tensor exponentiations).
This bidirectional notion of flow is called the {\em flow\/} (or {\em motion\/}) {\em dialectic\/}.

Both dialectical logic and linear logic deal principally with the dynamical notions of {\em state\/} and {\em transitions\/} 
(involving ``dialectically contradictory'' activities \cite{Kent87},
 such as the creation/destruction or production/consumption of values, often representing resources),
whereas standard logic,
both classical and intuitionistic,
deals with the relatively static notion of monotonically increasing truth values
(once true, true forever).
Dialectical and linear logic are proper extensions of standard logic,
relegating the cartesian-ness of the standard fragment \cite{Kent88}
(weakening, contraction, etc.)
to local contexts:
that is,
they have locally cartesian-closed semantical structures.
Presently linear logic requires the commutativity or symmetry of tensor product,
in order to define a simpler semantics.
However, the semantics of dialectical processes,
which includes traditional process semantics,
is not commutative.
This argues strongly that commutativity should be excluded initially,
and only included later when desired via a symmetrization construction on the nonsymmetric case.
The semantics and logic of dialectical processes in this paper
agrees with linear logic in subject studied and philosophy.
They disagree in approach taken (I use a previously developed theory of dialectical systems)
and in emphasis:
linear logic emphasizes the importance of the linearity properties of implication and negation;
whereas dialectical logic emphasizes the importance of the central dialectical contradiction
(adjointness) between tensor product and tensor implication,
thus giving logic a process interpretation.
The logic of dialectical processes is more general than linear logic for two reasons:
1. dialectical logic is nonsymmetric (has a nonsymmetric tensor product operation)
   with linear logic a symmetric subcase;
2. linear logic is a typeless subcase of dialectical logic
   (all types are merged into one type).

\paragraph{Heyting Categories.}
The full intuitionistic semantics of dialectical logic is defined in terms of Heyting categories.
Concisely speaking,
a {\em Heyting category\/} is a closed bilattice;
that is,
an bilattice {\bf H} whose tensor product has right adjoints on both left and right.
The underlying bilattice represents the structural aspect of a Heyting category,
whereas the closedness property represents the dialectical or flow aspect.

In more detail,
the flow aspect consists of the following data and axioms.
For any two {\bf H}-terms $\term{y}{r}{x}$ and $\term{z}{s}{x}$ with common target type
there is a composite term $\term{z}{s \tensorimplytarget r}{y}$ between their source types,
defined by the dialectical axiom
$t \circ r \preceq_{z,x} s \mbox{ iff } t \preceq_{z,y} s \tensorimplytarget r$,
stating that the binary operation $\tensorimplytarget$ called {\em right tensor implication\/},
is right adjoint to tensor product on the right.
Right tensor implication $\tensorimplytarget$,
like all exponentiation or division operators including numerical ones,
is covariantly monotonic on the left and contravariantly monotonic on the right.
This dialectical axiom,
generalizing the deduction theorem of standard logic,
defines the formal semantics of tensor implication $\tensorimplytarget$ in terms of tensor product $\circ$.
From the dialectical axiom easily follows
the inference rule of right modus ponens
$(s \tensorimplytarget r) \circ r \preceq s$
and the inference rule
$t \preceq (t \circ r) \tensorimplytarget r$.
Also immediate from the axioms are
the transitive, reflexive, mixed associative and unital laws:
$(t \tensorimplytarget s) \circ (s \tensorimplytarget r)
 \preceq (t \tensorimplytarget r)$,
$y \preceq (r \tensorimplytarget r)$,
$t \tensorimplytarget (s \circ r)
 = (t \tensorimplytarget r) \tensorimplytarget s$,
$(r \tensorimplytarget x) = r$.
Right tensor implication allows each term $\term{y}{r}{x}$ to specify a {\em right inverse flow\/}
${\bf H}[z,y] \stackrel{\tensorimplytarget r}{\leftarrow} {\bf H}[z,x]$
for each type $z$.
The above mixed associative and unital laws say that right inverse flow $\tensorimplytarget r$
is (contravariantly) functorial in $r$ with respect to the category {\bf H}.
Thus,
each term $r$,
using right tensor product and right tensor implication,
specifies a ``right dialectical base'' for each type $z$.
Dually,
for any two {\bf H}-terms $\term{y}{r}{x}$ and $\term{y}{t}{z}$ with common source type
there is a composite term $\term{x}{r \tensorimplysource t}{z}$ between their target types,
defined by the dialectical axiom
$r \circ s \preceq_{y,z} t \mbox{ iff } s \preceq_{x,z} r \tensorimplysource t$,
stating that the binary operation $\tensorimplysource$ called {\em left tensor implication\/},
is right adjoint to tensor product on the left.
Left tensor implication allows each term $\term{y}{r}{x}$ to specify a {\em left inverse flow\/}
${\bf H}[x,z] \stackrel{r \tensorimplysource}{\leftarrow} {\bf H}[y,z]$
for each type $z$.
The mixed associative and unital laws say that left inverse flow $r \tensorimplysource$
is (covariantly) functorial in $r$ with respect to the category {\bf H},
thus defining a ``left dialectical base''.
Together the left and right implications satisfy the mixed associative law
$s \tensorimplysource (t \tensorimplytarget r)
 = (s \tensorimplysource t) \tensorimplytarget r$.
From both the left and right modus ponens,
we get the derived rules
$(r \tensorimplytarget r) \tensorimplysource r
 = r
 = r \tensorimplytarget (r \tensorimplysource r)$.
Since tensor product is left adjoint on both left and right to tensor implication,
it preserves arbitrary joins
$s \circ (r \vee r')
 = (s \circ r) \vee (s \circ r')$,
$s \circ  \bot_{y,x} 
 = \bot_{z,x}$, 
$(s \vee s') \circ r
 = (s \circ r) \vee (s' \circ r)$ and
$\bot_{z,y} \circ r
 = \bot_{z,x}$.
Since tensor implications are right adjoint to tensor product,
they preserve arbitrary meets
$r \tensorimplysource (t \wedge t')
 = (r \tensorimplysource t) \wedge (r \tensorimplysource t')$,
$r \tensorimplysource \top_{y,z} 
 = \top_{x,z}$, 
$(s \wedge s') \tensorimplytarget r
 = (s \tensorimplytarget r) \wedge (s' \tensorimplytarget r)$ and
$\top_{z,x} \tensorimplytarget r
 = \top_{z,y}$.
The two dialectical axioms assert that the bilattice {\bf H} is closed.

For any functional Heyting term $\term{y}{f \dashv f^{\rm op}}{x}$,
tensor implication relates the adjoints by
$f = f^{\rm op} \tensorimplysource x$
and
$f^{\rm op} = x \tensorimplytarget f$.
More generally,
left $f$-product is equal to left $f^{\rm op}$-implication
$f \circ (\,) = f^{\rm op} \tensorimplysource (\,)$
and right $f^{\rm op}$-product is equal to right $f$-implication
$(\,) \circ f^{\rm op} = (\,) \tensorimplytarget f$,
and we have the adjoint triples
\begin{center}
   $\begin{array}{c@{\;\dashv\;}r@{\;=\;}l@{\;\dashv\;}c}
       f^{\rm op} \circ (\,) \mbox{  }
          & \mbox{  } f \circ (\,) & f^{\rm op} \tensorimplysource (\,) \mbox{  }
          & \mbox{  } f \tensorimplysource (\,) \\
       (\,) \circ f \mbox{  }
          & \mbox{  } (\,) \circ f^{\rm op} & (\,) \tensorimplytarget f \mbox{  }
          & \mbox{  } (\,) \tensorimplytarget f^{\rm op}.
    \end{array}$
\end{center}    
Such adjoint triples appear naturally in the dialectical view of dynamic logic called the standard aspect \cite{Kent89},
which discusses the equivalent notions of hyperdoctrines of comonoids and spannable dialectical flow categories.
A Heyting category with one object (universal type) is called a {\em Heyting monoid\/}
{\bf M} = \mbox{$\langle M,\preceq,\circ,e,\tensorimplysource,\tensorimplytarget,\vee,\bot,\wedge,\top \rangle$.}
A preliminary version of Heyting monoid without homset lattice notions,
was investigated early on \cite{Lambek},
and called {\em residuated preorder\/}.
See also \cite{Birkhoff,Henkin}.
The opposite biposet ${\bf H}^{\rm op}$ is a Heyting category with implications switched.
Since complete Heyting categories are Heyting categories,
Heyting categories are ubiquitous;
in particular,
subset categories $\power{{\bf C}}$ and distributor categories $\distrib{{\bf C}}$ are Heyting categories.

Concurrent with the development of this paper,
an algebraic theory for the ``laws of progamming'' has been advocated \cite{Hoare87},
whose axioms are essentially those for Heyting categories;
or more precisely,
Heyting categories (in particular, cHc) with affirmation/consideration modalities and domain subtypes.
The affirmation modality is defined in the appendix.
The consideration modality is its order-theoretic dual.
The topological notions of affirmation and consideration are discussed further
in both the standard aspect and the object aspect of dialectical logic \cite{Kent88,Kent89}.
In the program interpretation,
arbitrary Heyting terms represent progam specifications,
total Heyting terms represent programs,
and either subtypes or comonoids (see appendix) represent conditions.
Types represent local contexts for local states of the system.
Term entailment order is interpreted as a measure of ``nondeterminism''
with $r \preceq s$ asserting that $r$ is more deterministic than $s$.
The top term $\term{y}{\top_{y,x}}{x}$ represents the worst (most nondeterministic) program,
and functional terms represent fully deterministic (minimally nondeterministic) programs.
The bottom term $\term{y}{\bot_{y,x}}{x}$, although deterministic, is not a program since its domain-of-definition is empty.
The totalization $\term{d}{\total{r}}{x}$ of a term $\term{y}{r}{x}$
is the least deterministic program (on the domain-of-definition) of that specification.
In summary,
the ``Laws of Programming'' can be interpreted in Heyting categories as follows.
\begin{center}
   \begin{tabular}{|lc||lc|}
      \hline
      \multicolumn{2}{|c||}{``Laws of Programming''} & \multicolumn{2}{c|}{{\bf Heyting Categories}} \\
      \hline
      \hline
      program specifications          & $S$               & terms                     & $\term{y}{r}{x}$              \\
      programs                        & $P$               & total terms               & $\term{y}{t}{x}$              \\
      conditions                      & $b$               & comonoids                 & $u \memberof \comonoid{}{x}{}$ \\
                                      &                   & subtypes                  & $\term{y}{i \dashv p}{x}$     \\
      \hline
      nondeterminism order            & $P \subseteq Q$   & term entailment order     & $r \preceq s$ \\
      sequential composition          & $P {\bf ;} Q$   & tensor product            & $s \circ r$ \\
      nondeterministic choice         & $P \bigcup Q$   & boolean sum               & $s \vee r$  \\
      \verb:SKIP:, the nop            & {\rm II}          & identity (types-as-terms) & $\term{x}{x}{x}$   \\ 
      \verb:ABORT:, the worst program & $\bot$            & top term                  & $\term{y}{\top_{y,x}}{x}$ \\
      weakest prespecification        & $S/T$             & tensor implication        & $t \tensorimplytarget s$ \\
      \hline
      conditional or branch           & $P {\triangleleft} b {\triangleright} Q$ & derived expression & $(v \circ r) \vee (\boolneg{v} \circ s)$ \\
                                      & {\bf if} $b$ {\bf then} $P$ {\bf else} $Q$    & \multicolumn{2}{c|}{where $\boolneg{v} \define (v \imply \bot_y) = \interior{(v \tensorimplysource \bot_y)}$} \\
                                      &                                               & \multicolumn{2}{c|}{and $\interior{(\,)}$ is the affirmation modality} \\
      iteration or {\bf while}-loop   & $b \ast P$        & derived expression        & $\closure{(u \circ r)} \circ \boolneg{u}$ \\
                                      & {\bf while} $b$ {\bf do} $P$                  & \multicolumn{2}{c|}{where $\closure{(\,)}$ is the consideration modality} \\
      \hline
   \end{tabular}
\end{center}
More recently \cite{Kent89} these laws (concerning structure and flow in Heyting categories)
have been connected with the older program semantics which uses Hoare triples.

\paragraph{Tensor Negation.}
Glivenko's theorem,
defining the classical part of standard intuitionistic logic,
seems to rely in part upon the symmetry (commutativity) of the boolean product (lattice meet) in Heyting algebras.
Recall that a biposet {\bf P} is quasisymmetric when
$r \orthogonal s$ iff $s \orthog{x} r$ iff $r \orthog{y} s$
for all {\bf P}-types $y$ and $x$ and all opposed pairs of {\bf P}-terms
$\term{y}{r}{x}$ and $\opterm{y}{s}{x}$.
We can define quasisymmetry for {\bf P}-terms alone:
a {\bf P}-term $\term{y}{r}{x}$ is {\em quasisymmetric\/} or {\em orthogonally balanced\/} when
$s \orthog{x} r$ iff $r \orthog{y} s$
for all {\bf P}-terms $\term{x}{s}{y}$ opposed to $r$.
I cannot overemphasize the importance of the notion of quasisymmetry,
especially in the object aspect of classical dialectical logic \cite{Kent88}.
Dually,
a {\bf P}-term $\term{y}{r}{x}$ is {\em coquasisymmetric\/} when
it is quasisymmetric in the codual ${\bf P}^{\rm co}$,
which is {\bf P} with the opposite homset order;
that is,
when
$r \circ s \succeq_{y,y} y$ iff $s \circ r \succeq_{x,x} x$
for all {\bf P}-terms $\term{x}{s}{y}$ opposed to $r$.
Identities are quasisymmetric,
and quasisymmetric {\bf P}-terms are closed under composition. 
The {\em center\/} of {\bf P},
denoted by $\zenter{{\bf P}}$,
is the sub-biposet consisting of all {\bf P}-types and all quasisymmetric {\bf P}-terms.
All {\bf P}-isomorphisms are quasisymmetric.
Quasisymmetric {\bf P}-terms are closed under arbitrary joins w.r.t. $\preceq$ (when they exist).
When arbitrary joins of quasisymmetric terms exist,
the center $\zenter{{\bf P}}$ is a kind of generalized topology
with finite tensor products functioning as ``finite intersections''
and arbitrary boolean sums (joins) functioning as ``arbitrary unions'' \cite{Kent88}.
For this reason quasisymmetric terms are also called $\zenter{{\bf P}}$-open terms.

Now let the biposet {\bf P} be a Heyting category {\bf H}.
For any {\bf H}-term $\term{y}{r}{x}$,
the {\em left $x$-dual\/} of $r$ is $x \tensorimplytarget r$,
the largest term with source $x$ and target $y$ which is semi-orthogonal to $r$ at $x$:
$(x \tensorimplytarget r) \orthog{x} r$,
and if $s \orthog{x} r$ for $\term{x}{s}{y}$ then $s \preceq_{x,y} x \tensorimplytarget r$.
Dually,
the {\em right $y$-dual\/} of $r$ is $r \tensorimplysource y$,
the largest term with source $x$ and target $y$ which is semi-orthogonal to $r$ at $y$.
We have
$r \orthogonal s$
 iff ($s \orthog{x} r$ and $r \orthog{y} s$)
 iff ($s \preceq_{x,y} x \tensorimplytarget r$ and $s \preceq_{x,y} r \tensorimplysource y$)
 iff  $s \preceq_{x,y} (r \tensorimplysource y) \wedge (x \tensorimplytarget r)$.
Define the {\em tensor negation\/} of the Heyting term $\term{y}{r}{x}$ to be the term
$\tneg{}{r}{}
 = \tneg{y}{r}{x}
 \define (r \tensorimplysource y) \wedge (x \tensorimplytarget r)$.
So for any Heyting term $\term{y}{r}{x}$, 
the orthogonality ideal ${\perp}(r)$ is the principal ideal
${\perp}(r)
 = \principalideal{\tneg{}{r}{}}
 = \principalideal{(r \tensorimplysource y) \wedge (x \tensorimplytarget r)}$,
and tensor negation $\term{x}{\tneg{}{r}{}}{y}$
is the largest (oppositely directed) term orthogonal to $r$:
$\tneg{}{r}{} = \top_{{\perp}(r)}$;
or, phrased as an equivalence, 
$r \orthogonal s \mbox{ iff } s \preceq_{x,y} \tneg{}{r}{}$.
The definition of Boolean categories below uses this equivalence
to axiomatize tensor negation without the need for tensor implications.
The sense of this equivalence is that tensor negation is the ``tensor complement'' of $r$.
So tensor negation in dialectical logic is entirely analogous to (and generalizes) boolean negation in standard logic,
where the boolean negation of a Heyting element $a$ is the largest element disjoint from $a$,
$a \wedge b = 0 \mbox{ iff } b \leq \tneg{}{a}{}$.
Since tensor negation
${\bf H}[y,x] \stackrel{\tneg{y}{}{x}}{\rightarrow} {\bf H}[x,y]^{\rm op}$
is contravariantly monotonic,
$s \preceq_{y,x} r$ implies $\tneg{}{r}{} \preceq_{x,y} \tneg{}{s}{}$,
it is a dialectical negation.
In more detail,
since orthogonality is a symmetrical notion,
$s \preceq_{x,y} \tneg{y}{r}{x}$
 iff $r \orthogonal s$
 iff $r \preceq_{y,x} \tneg{x}{s}{y}$,
tensor negation is a self-adjoint monotonic function
$\tneg{y}{}{x} \dashv \tneg{x}{}{y}^{\rm coop}$.
Since tensor negation $\tneg{}{}{}$ is self-adjoint,
it maps arbitrary joins to meets
$\tneg{}{(\vee_i r_i)}{} = \wedge_i (\tneg{}{r_i}{})$,
which in the binary case gives the DeMorgan's law:
$\tneg{}{(s \vee r)}{} = \tneg{}{s}{} \wedge \tneg{}{r}{}$
and in the nullary case gives the law:
$\tneg{}{\bot_{y,x}}{} =  \top_{x,y}$.
We also have the derived rule
$\tneg{z}{(s \circ r)}{x}
 = (r \tensorimplysource (z \tensorimplysource s)) \wedge ((x \tensorimplytarget r) \tensorimplytarget s)$.
As remarked before,
the generalized inverseness notion of an adjoint pair of terms
$\term{y}{r \dashv s}{x}$ forms a kind of polar-tension structure,
since there is only one possible right adjoint $r \dashv s$ iff $s = r^{\rm op}$.
However,
the generalized inverseness notion of an orthogonal pair of terms $\term{y}{r \orthogonal s}{x}$
does not form a polar-tension structure.
But we can make orthogonality that by assuming the existence of tensor negations:
$\term{y}{r \orthogonal \tneg{}{r}{}}{x}$ forms a kind of polar-tension structure,
since there is only one possible tensor negation $r \orthogonal s$ iff $s \preceq \tneg{}{r}{}$.
A subtype $\term{y}{i \dashv p}{x}$ has only one kind of complement
$\tneg{}{i}{} = p = i^{\rm op}$ and $\tneg{}{p}{} = \tneg{}{(i^{\rm op})}{} = i$,
whereas a functional {\bf H}-term $\term{y}{f}{x}$ has two kinds of complements:
its tensor negation $\term{x}{\tneg{}{f}{}}{y}$ and its right adjoint $\term{x}{f^{\rm op}}{y}$.
In general,
these two complements are related by
$\tneg{}{f}{} \preceq f^{\rm op} = x \tensorimplytarget f$
and
$\tneg{}{(f^{\rm op})}{} \preceq f = f^{\rm op} \tensorimplysource x$.
The two complements are identical $\tneg{}{f}{} = f^{\rm op}$
iff $\term{y}{f \dashv f^{\rm op}}{x}$ is a subtype.

A Heyting term $\term{y}{r}{x}$ is quasisymmetric precisely when
the left and right orthogonal duals coincide and equal the tensor negation
$\tneg{}{r}{} = x \tensorimplytarget r = r \tensorimplysource y$,
since 
 $s \circ r \preceq x$
 iff
 $s \preceq x \tensorimplytarget r$
 iff
 $s \preceq r \tensorimplysource y$
 iff
 $r \circ s \preceq y$.
For a quasisymmetric functional term $\term{y}{f}{x}$,
the two kinds of complements, tensor negation and right adjoint, are one:
$\tneg{}{f}{} = f^{\rm op}$ and $f = \tneg{}{f^{\rm op}}{}$;
so that,
$\term{y}{f \dashv f^{\rm op}}{x}$ is a subtype.
This is an indication that quasisymmetry is a very strong and restrictive concept.
This should be compared with the result in the object aspect of dialectical logic,
that ``quasisymmetry is equivalent to topological dialecticality''.
Tensor negation is contravariant lax functorial
$\tneg{}{r}{} \circ \tneg{}{s}{} \preceq_{x,z} \tneg{}{(s \circ r)}{}$,
so that tensor negation and tensor product are related by the inequalities
$s \circ r \preceq \tneg{}{\tneg{}{s}{}}{} \circ \tneg{}{\tneg{}{r}{}}{}
           \preceq \tneg{}{(\tneg{}{r}{} \circ \tneg{}{s}{})}{}$
and
$s \circ r \preceq \tneg{}{\tneg{}{(s \circ r)}{}}{}
           \preceq \tneg{}{(\tneg{}{r}{} \circ \tneg{}{s}{})}{}$.
A Heyting term $\term{y}{r}{x}$ is {\em coquasisymmetric\/} when
it is the tensor negation $r = \tneg{}{s}{}$ of a quasisymmetric term $\term{x}{s}{y}$.
This notion of Heyting coquasisymmetry is close to,
but not identical with,
the notion of biposet coquasisymmetry above.
However,
they agree on closed Heyting terms (see below).
By definition tensor negation maps quasisymmetric terms into coquasisymmetric terms.
A term $\term{y}{r}{x}$ is an {\bf H}-isomorphism iff its tensor negation is a categorical inverse:
$\tneg{}{r}{} \circ r = x$ and $r \circ \tneg{}{r}{} = y$.
Isomorphisms are both quasisymmetric and coquasisymmetric.
For isomorphisms the tensor implications are expressible as
$r \tensorimplysource t = \tneg{}{r}{} \circ t$ and $s \tensorimplytarget r = s \circ \tneg{}{r}{}$.

\paragraph{Double Negation.}
Let {\bf H} be a Heyting category.
Let $\tnegtneg{}{}{}$ symbolize double tensor negation,
defined by
$\tnegtneg{y}{r}{x} \define \tneg{x}{(\tneg{y}{r}{x})}{y}$
for any pair of types $y$ and $x$, and any term $\term{y}{r}{x}$.
Double negation $\tnegtneg{}{}{}$is a local closure operator:
``monotonic''  $r \preceq_{y,x} s$ implies $\tnegtneg{}{r}{} \preceq_{y,x} \tnegtneg{}{s}{}$,
``increasing'' $r \preceq_{y,x} \tnegtneg{}{r}{}$, and
``idempotent'' $\tnegtneg{}{(\tnegtneg{}{r}{})}{} = \tnegtneg{}{r}{}$.
A term $\term{y}{r}{x}$ is {\em double-negation closed\/}
when $r = \tnegtneg{}{r}{}$;
or equivalently,
when $r = \tneg{}{s}{}$ for some term $\term{x}{s}{y}$.
Denote the collection of closed terms in ${\bf H}[y,x]$ by $\tnegtneg{}{{\bf H}}{}[y,x]$.
Then $\tnegtneg{}{{\bf H}}{}[y,x]$ is a lattice,
which is a meet-subsemilattice of the lattice ${\bf H}[y,x]$
with meets in $\tnegtneg{}{{\bf H}}{}[y,x]$,
called classical boolean products,
identical $\bprod_i r_i = \wedge_i r_i$
to meets in ${\bf H}[y,x]$,
and joins in $\tnegtneg{}{{\bf H}}{}[y,x]$,
called classical boolean sums,
defined (following Glivenko) as the double negation $\bsum_i r_i = \tnegtneg{}{(\vee_i r_i)}{}$
of joins in ${\bf H}[y,x]$.
Double negation 
${\bf H}[y,x] \stackrel{\tnegtneg{}{}{}}{\rightarrow} \tnegtneg{}{{\bf H}}{}[y,x]$
reflects $\tnegtneg{}{}{} \dashv {\rm Inc}$ arbitrary Heyting terms into closed terms.
Identity terms (types) are closed,
since $x = \tneg{}{x}{}$.
The smallest and largest closed terms from $y$ to $x$ are
$0_{y,x} \define \tnegtneg{}{\bot_{y,x}}{}
              = \tneg{}{\top_{x,y}}{}$ and
$1_{y,x} \define \tnegtneg{}{\top_{y,x}}{}
              = \top_{y,x}
              = \tneg{}{\bot_{x,y}}{}
              = \tneg{}{0_{x,y}}{}$,
respectively.
If {\bf H} is a quasisymmetric category,
then all functional terms are subtypes,
all subtypes are double-negation closed,
its functional part ${\bf H}^\dashv$ is a ``preorderlike'' category
consisting only of subtype terms $\term{y}{i \dashv p}{x}$,
and the dialectical base ${\bf H}^\dashv \stackrel{\scriptbf{H}}{\longrightarrow} {\bf adj}$
is an ``extension/restriction'' base
with direct image ${\bf H}[y,y] \stackrel{p \circ (\,) \circ i}{\longrightarrow} {\bf H}[x,x]$ being ``extension to $x$''
and inverse image ${\bf H}[y,y] \stackrel{i \circ (\,) \circ p}{\longleftarrow} {\bf H}[x,x]$ being ``restriction to $y$''.
So,
if we are interested in a general notion of ``functionality'' in Heyting categories
(such as ordinary functions in {\bf Rel} or functors in {\bf Cat}),
then we should not assume quasisymmetry.

If $\term{y}{r}{x}$ is a quasisymmetric term,
then $\tnegtneg{}{r}{} = [y \tensorimplytarget (r \tensorimplysource y)] \wedge [(x \tensorimplytarget r) \tensorimplysource x]$
(in a quasisymmetric category
 $\tnegtneg{}{r}{} = y \tensorimplytarget (r \tensorimplysource y) = (x \tensorimplytarget r) \tensorimplysource x$).
If $\term{y}{r}{x}$ is quasisymmetric,
then $\tnegtneg{}{r}{}$ is also quasisymmetric,
  since $p \circ \tnegtneg{}{r}{} \preceq x$
implies $p \circ r \preceq x$
    iff $p \preceq \tneg{}{r}{} = \tneg{}{\tnegtneg{}{r}{}}{}$
implies $\tnegtneg{}{r}{} \circ p \preceq y$.
\begin{Lemma}[Functoriality]
   Double negation is lax functorial on quasisymmetric terms:
   $\tnegtneg{}{s}{} \circ \tnegtneg{}{r}{} \preceq_{z,x} \tnegtneg{}{(s \circ r)}{}$
   for all composable pairs of quasisymmetric terms $\term{z}{s}{y}$ and $\term{y}{r}{x}$.
\end{Lemma}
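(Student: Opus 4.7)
My plan is to reduce the desired inequality through a chain of equivalent reformulations which will terminate in a tautology. The starting observation is that, by self-adjointness of tensor negation (the Galois connection $a \preceq \tneg{}{b}{}$ iff $b \preceq \tneg{}{a}{}$), the goal $\tnegtneg{}{s}{} \circ \tnegtneg{}{r}{} \preceq \tnegtneg{}{(s \circ r)}{} = \tneg{}{(\tneg{}{(s \circ r)}{})}{}$ is equivalent to the orthogonality $(\tnegtneg{}{s}{} \circ \tnegtneg{}{r}{}) \orthogonal \tneg{}{(s \circ r)}{}$. Because $s$ and $r$ are quasisymmetric, and quasisymmetry is preserved by composition and by passage to $\tnegtneg{}{}{}$ (both stated earlier), each of $s \circ r$, $\tnegtneg{}{s}{}$, $\tnegtneg{}{r}{}$, and $\tnegtneg{}{s}{} \circ \tnegtneg{}{r}{}$ is quasisymmetric. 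Orthogonality therefore collapses to a single semi-orthogonality, which I take at the target $x$, namely $\tneg{}{(s \circ r)}{} \circ \tnegtneg{}{s}{} \circ \tnegtneg{}{r}{} \preceq x$.

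I next peel $\tnegtneg{}{r}{}$ off on the right using the right dialectical axiom ($t \circ r \preceq s$ iff $t \preceq s \tensorimplytarget r$), converting the bound into $x \tensorimplytarget \tnegtneg{}{r}{}$. By quasisymmetry of $\tnegtneg{}{r}{}$ this equals $\tneg{}{\tnegtneg{}{r}{}}{}$, and tensor negation applied three times equals tensor negation once (immediate from self-adjointness), so the bound collapses to $\tneg{}{r}{}$ and the goal becomes $\tneg{}{(s \circ r)}{} \circ \tnegtneg{}{s}{} \preceq \tneg{}{r}{}$. Repeating the same maneuver---re-express the right-hand side as orthogonality with $r$, collapse via quasisymmetry of $r$ to $r \circ \tneg{}{(s \circ r)}{} \circ \tnegtneg{}{s}{} \preceq y$, peel off $\tnegtneg{}{s}{}$ using $y \tensorimplytarget \tnegtneg{}{s}{} = \tneg{}{\tnegtneg{}{s}{}}{} = \tneg{}{s}{}$, and finally apply the left dialectical axiom---yields $\tneg{}{(s \circ r)}{} \preceq r \tensorimplysource \tneg{}{s}{}$.

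The chain closes on the tautology $\tneg{}{(s \circ r)}{} \preceq \tneg{}{(s \circ r)}{}$, because the Currying identity $(s \circ r) \tensorimplysource z = r \tensorimplysource (s \tensorimplysource z)$ is a direct consequence of the dialectical axioms, while quasisymmetry of $s$ and of $s \circ r$ yields $s \tensorimplysource z = \tneg{}{s}{}$ and $(s \circ r) \tensorimplysource z = \tneg{}{(s \circ r)}{}$, respectively. Since every reduction in the chain was an equivalence, the original inequality is established. The main obstacle is not any single clever move but the bookkeeping: one has to keep types straight and verify at every step that the quasisymmetry hypothesis propagates to the intermediate terms, so that both the semi-orthogonality collapses and the ``tensor negation equals one-sided tensor implication'' identifications are legitimate. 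Once that is in hand the whole argument runs on rails.
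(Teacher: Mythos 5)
Your proof is correct, but it takes a genuinely different route from the paper's. The paper first reduces to the half-statement $s \circ \tnegtneg{}{r}{} \preceq \tnegtneg{}{(s \circ r)}{}$ for double-negation-closed quasisymmetric $s$, then expands $\tnegtneg{}{r}{}$ via the explicit meet formula $[y \tensorimplytarget (r \tensorimplysource y)] \wedge [(x \tensorimplytarget r) \tensorimplysource x]$ valid on quasisymmetric terms, bounds $s$ composed with each meet component separately (one by two-sided modus ponens, the other by a transitivity-plus-quasisymmetry argument), recombines under the meet, and finishes with monotonicity and idempotency of $\tnegtneg{}{}{}$. You instead run a single chain of Galois-connection equivalences on the full inequality: convert the entailment into an orthogonality, collapse it to one semi-orthogonality via quasisymmetry of $\tnegtneg{}{s}{} \circ \tnegtneg{}{r}{}$, peel factors off with the dialectical axioms, identify each resulting one-sided dual $x \tensorimplytarget (\,)$ or $(\,) \tensorimplysource y$ with $\tneg{}{}{}$ (again by quasisymmetry, plus triple-negation collapse), and close on reflexivity via the Currying law $(s \circ r) \tensorimplysource z = r \tensorimplysource (s \tensorimplysource z)$. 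Your version never touches the meet decomposition of double negation and makes visible exactly where quasisymmetry is spent, namely once per collapse of a two-sided condition to a one-sided one; the paper's version isolates a reusable half-lemma and stays closer to explicit formulas. Both arguments lean on the previously established closure of quasisymmetric terms under composition and under $\tnegtneg{}{}{}$, which you correctly invoke for $s \circ r$, $\tnegtneg{}{r}{}$, $\tnegtneg{}{s}{}$, and their composite. One small remark: only the direction from the terminal tautology back to the goal is needed, so your insistence that every step be an equivalence, while true, is more than the proof requires.
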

\begin{proof}
   We prove something equivalent:
   for all composable pairs of quasisymmetric terms $\term{z}{s}{y}$ and $\term{y}{r}{x}$,
   $s \circ \tnegtneg{}{r}{} \preceq_{z,x} \tnegtneg{}{(s \circ r)}{}$ when $s$ is double negation closed.
   By modus ponens on left and right
   $((x \tensorimplytarget r) \tensorimplytarget s) \circ s \circ ((x \tensorimplytarget r) \tensorimplysource x) \preceq x$.
   So (1)
   $s \circ ((x \tensorimplytarget r) \tensorimplysource x) 
    \preceq_{z,x} ((x \tensorimplytarget r) \tensorimplytarget s) \tensorimplysource x
    = (x \tensorimplytarget (s \circ r)) \tensorimplysource x$. 
   On the other hand
   $(y \tensorimplytarget (r \tensorimplysource y)) \circ (r \tensorimplysource \tneg{}{s}{}) \circ (\tneg{}{s}{} \tensorimplysource y)
    \preceq y$ by transitivity (used twice).
   But $s = \tnegtneg{}{s}{} \preceq \tneg{}{s}{} \tensorimplysource y$
   since $s$ is closed and quasisymmetric.
   So
   $(y \tensorimplytarget (r \tensorimplysource y)) \circ (r \tensorimplysource \tneg{}{s}{}) \circ s \preceq y$.
   Again since $s$ is quasisymmetric
   $s \circ (y \tensorimplytarget (r \tensorimplysource y)) \circ (r \tensorimplysource \tneg{}{s}{}) \preceq z$.
   Hence, (2)
   $s \circ (y \tensorimplytarget (r \tensorimplysource y))
    \preceq z \tensorimplytarget (r \tensorimplysource \tneg{}{s}{})
          = z \tensorimplytarget (r \tensorimplysource (s \tensorimplysource z))
          = z \tensorimplytarget ((s \circ r) \tensorimplysource z)$.
   Putting both facts together
   $s \circ \tnegtneg{}{r}{}
          = s \circ [y \tensorimplytarget (r \tensorimplysource y)] \wedge [(x \tensorimplytarget r) \tensorimplysource x]
    \preceq [s \circ (y \tensorimplytarget (r \tensorimplysource y))] \wedge [s \circ ((x \tensorimplytarget r) \tensorimplysource x)]
    \preceq [z \tensorimplytarget ((s \circ r) \tensorimplysource z)] \wedge [(x \tensorimplytarget (s \circ r)) \tensorimplysource x]
          = \tnegtneg{}{(s \circ r)}{}$.
   Finally,
   $\tnegtneg{}{s}{} \circ \tnegtneg{}{r}{}
    \preceq_{z,x} \tnegtneg{}{(\tnegtneg{}{s}{} \circ r)}{}
    \preceq_{z,x} \tnegtneg{}{(\tnegtneg{}{(s \circ r)}{})}{}
    = \tnegtneg{}{(s \circ r)}{}$
   by monotonicity and idempotency of $\tnegtneg{}{}{}$.  
\end{proof}
By rights this functoriality lemma should be called the ``bottleneck lemma''
since we need it \cite{Girard} to prove associativity of the classical tensors defined below.
The concept of quasisymmetry,
although quite natural by itself,
was motivated by this lemma.

Following Glivenko,
in analogy with the definition of the classical boolean connectives,
the tensor connectives for classical dialectical logic,
classical tensor product $\tprod$ and classical tensor sum $\tsum$,
are definable in terms of the Heyting tensor product $\circ$ and tensor negation $\tneg{}{}{}$.
For any two $\circ$-composable terms $\term{z}{s}{y}$ and $\term{y}{r}{x}$
the tensor product term $\term{z}{s \tprod r}{x}$ and the tensor sum term $\term{z}{s \tsum r}{x}$
are $\tnegtneg{}{}{}$-closed terms define by
    $s \tprod r \define \tnegtneg{}{(s \circ r)}{}$
and $s \tsum r \define \tneg{}{(\tneg{}{r}{} \tprod \tneg{}{s}{})}{}
                    = \tneg{}{(\tneg{}{r}{} \circ \tneg{}{s}{})}{}$.
For all terms we immediately have the DeMorgans laws
$\tneg{}{(s \tsum r)}{} = \tneg{}{r}{} \tprod \tneg{}{s}{}$ and
$\tneg{}{(s \bsum r)}{} = \tneg{}{s}{} \bprod \tneg{}{r}{}$,
for $\zenter{{\bf H}}$-open terms we have the DeMorgans inequalities
$\tneg{}{(s \tprod r)}{} \preceq \tneg{}{r}{} \tsum \tneg{}{s}{}$ and
$\tneg{}{(s \bprod r)}{} \preceq \tneg{}{s}{} \bsum \tneg{}{r}{}$,
and for $\tnegtneg{}{}{}$-closed terms we have the DeMorgans laws
$\tneg{}{(s \tprod r)}{} = \tneg{}{r}{} \tsum \tneg{}{s}{}$ and
$\tneg{}{(s \bprod r)}{} = \tneg{}{s}{} \bsum \tneg{}{r}{}$.

A Heyting term is {\em polar\/} when it is $\tnegtneg{}{}{}$-closed and $\zenter{{\bf H}}$-open;
that is,
when the term is in $\tnegtneg{}{\zenter{{\bf H}}}{}$.
The {\em pole\/} of any Heyting term
is the double negation of its $\zenter{{\bf H}}$-interior (if it exists).
The lax functoriality of double negation $\tnegtneg{}{}{}$ implies that
the classical tensor product is associative
$t \tprod (s \tprod r) = (t \tprod s) \tprod r$
on polar terms.
Also,
types are identities $y \tprod r = r = r \tprod x$ on polar terms.
The {\em Boolean pole\/} of $\zenter{{\bf H}}$,
denoted by $\zenter{{\bf H}}_{\tprod}^{\bsum}$,
is the join bisemilattice 
$\zenter{{\bf H}}_{\tprod}^{\bsum} = \triple{\quadruple{\tnegtneg{}{\zenter{{\bf H}}}{}}{\preceq}{\tprod}{\rm Id}}{\bsum}{0}$
consisting of all types and all polar terms
(join bisemilattice since finite homset joins exist, but not necessarily finite homset meets),
with the classical tensor product and boolean sum.
$\zenter{{\bf H}}_{\tprod}^{\bsum}$ is a lax (Heyting) subcategory of $\zenter{{\bf H}}$.
Dually,
a Heyting term is {\em antipolar\/} when it is $\tnegtneg{}{}{}$-closed and $\zenter{{\bf H}}$-closed;
that is,
when it is the tensor negation of a polar term.
The image $\tneg{}{\zenter{{\bf H}}}{}$ of tensor negation on the pole
is the collection of all antipolar terms.
The tensor DeMorgans laws
(and the associativity of the tensor product $\tprod$)
imply that the classical tensor sum $\tsum$ is associative
$t \tsum (s \tsum r) = (t \tsum s) \tsum r$
on antipolar terms.
Also,
types are identities $y \tsum r = r = r \tsum x$ on antipolar terms.
The {\em Boolean antipole\/} of $\zenter{{\bf H}}$,
denoted by $\zenter{{\bf H}}_{\tsum}^{\bprod}$,
is the meet bisemilattice
$\zenter{{\bf H}}_{\tsum}^{\bprod} = \triple{\quadruple{\tneg{}{\zenter{{\bf H}}}{}}{\preceq}{\tsum}{\rm Id}}{\bprod}{1}$
consisting of all types and all antipolar terms,
and the classical tensor sum and boolean product.
Moreover,
tensor negation is a 2-involution,
a morphism of join bisemilattices
$\zenter{{\bf H}}_{\tprod}^{\bsum} \stackrel{\tneg{}{}{}}{\rightarrow} {\zenter{{\bf H}}_{\tsum}^{\bprod}}^{\rm coop}$
and a morphism of meet bisemilattices
${\zenter{{\bf H}}_{\tprod}^{\bsum}}^{\rm coop} \stackrel{\tneg{}{}{}}{\leftarrow} \zenter{{\bf H}}_{\tsum}^{\bprod}$:
$\tneg{}{}{}$ is self-inverse $\tneg{}{\tneg{}{r}{}}{} = r$,
$\tneg{}{x}{} = x$,
$\tneg{}{}{}$ switches source and target $\tneg{}{(\term{y}{r}{x})}{} = \term{x}{\tneg{}{r}{}}{y}$,
and $\tneg{}{}{}$ is (contravariant) monotonic on homsets $r \preceq_{y,x} s$ implies $\tneg{}{s}{} \preceq_{x,y} \tneg{}{r}{}$.
This complex,
consisting of a join and meet bisemilattice and the negation involution between them,
is called the {\em Boolean\/} of $\zenter{{\bf H}}$ or the {\em Boolean center\/} of {\bf H},
and is denoted by $\Boolean{\zenter{{\bf H}}}$.

The special property $s \orthog{\tprod} r$ iff $s \preceq \tneg{}{r}{}$
called the {\em orthogonality-entailment axiom\/},
which relates term-orthogonality with term-order,
holds for all polar terms.
Equivalently,
the special property $s \coorthog{\tsum} r$ iff $\tneg{}{s}{} \preceq r$,
which relates term-coorthogonality with term-order,
holds for all antipolar terms.
The Boolean center $\Boolean{\zenter{{\bf H}}}$ is quasisymmetric:
the Boolean pole $\zenter{{\bf H}}_{\tprod}^{\bsum}$ is a quasisymmetric category
since a Heyting term $\term{y}{r}{x}$ is $\circ$-quasisymmetric iff it is $\tprod$-quasisymmetric,
and the Boolean antipole $\zenter{{\bf H}}_{\tsum}^{\bprod}$ is a coquasisymmetric category
since a Heyting term $\term{y}{r}{x}$ being $\circ$-coquasisymmetric implies that it is $\tsum$-coquasisymmetric.
For any pair of terms in either the pole or the antipole of the Boolean center,
the Heyting tensor product and the classical tensor connectives
are arranged as $s \circ r \preceq s \tprod r \preceq s \tsum r$.
When {\bf H} is quasisymmetric the Boolean center $\Boolean{{\bf H}}$ consists of all $\tnegtneg{}{}{}$-closed terms.

A {\em polarized bisemilattice\/} {\bf P} consists of two bisemilattices,
a join bisemilattice
${\bf P}_{\tprod}^{\bsum} = \triple{\quadruple{{\bf P}_{\tprod}^{\bsum}}{\preceq_{\tprod}}{\tprod}{\rm Id}}{\bsum}{0}$
and a meet bisemilattice
${\bf P}_{\tsum}^{\bprod} = \triple{\quadruple{{\bf P}_{\tsum}^{\bprod}}{\preceq_{\tsum}}{\tsum}{\rm Id}}{\bprod}{1}$,
called the {\em pole\/} and {\em antipole\/} of {\bf P} respectively,
and two morphisms of bisemilattices,
a morphism of join bisemilattices
${\bf P}_{\tprod}^{\bsum} \stackrel{\tneg{}{}{}}{\rightarrow} {{\bf P}_{\tsum}^{\bprod}}^{\rm coop}$
and a morphism of meet bisemilattices
${{\bf P}_{\tprod}^{\bsum}}^{\rm coop} \stackrel{\tneg{}{}{}}{\leftarrow} {\bf P}_{\tsum}^{\bprod}$
which are inverse $\tneg{}{}{} \cdot \tneg{}{}{}^{\rm coop} = {\rm Id}$ to each other.
Just as for Heyting categories,
objects and arrows in either the pole ${\bf P}_{\tprod}^{\bsum}$ or the antipole ${\bf P}_{\tsum}^{\bprod}$
are called {\em types\/} and {\em terms\/},
respectively.
The Boolean center $\Boolean{\zenter{{\bf H}}}$ of any Heyting category {\bf H} is a polarized bisemilattice.
Morphisms of polarized bisemilattices
can be defined in either a polar or an antipolar sense.
A {\em morphism of polarized bisemilattices\/} ${\bf P} \stackrel{H}{\rightarrow} {\bf Q}$
consists of a morphism of join bisemilattices
${\bf P}_{\tprod}^{\bsum} \stackrel{H_{\tprod}^{\bsum}}{\rightarrow} {\bf Q}_{\tprod}^{\bsum}$ 
called the {\em pole\/} of $H$,
and a morphism of meet bisemilattices
${\bf P}_{\tsum}^{\bprod} \stackrel{H_{\tsum}^{\bprod}}{\rightarrow} {\bf Q}_{\tsum}^{\bprod}$
called the {\em antipole\/} of $H$,
which are interdefinable with
$H_{\tsum}^{\bprod} \define \tneg{P}{}{} \cdot (H_{\tprod}^{\bsum})^{\rm coop} \cdot (\tneg{Q}{}{})^{\rm coop}$
and
$H_{\tprod}^{\bsum} \define \tneg{P}{}{} \cdot (H_{\tsum}^{\bprod})^{\rm coop} \cdot (\tneg{Q}{}{})^{\rm coop}$.

\paragraph{Boolean Categories.}
\begin{figure}
   \begin{center}
   \begin{tabular}{|r|c|c|} \hline
                                               & \multicolumn{1}{c|}{{\bf Intuitionistic}} & \multicolumn{1}{c|}{{\bf Classical}} \\ \hline\hline
      {\bf Standard Logic}                     & Heyting algebras                          & Boolean algebras                     \\
                                               & (in particular, subset algebras)          &                                      \\ \hline
      {\bf Linear Logic}                       & commutative Heyting monoids               & commutative Boolean monoids          \\
                                               & (in particular, ``phase spaces'')         &                                      \\ \hline
      {\bf Dialectical Logic}                  & Heyting categories                        & (quasisymmetric) Boolean categories  \\
      \multicolumn{1}{|c|}{(this paper)}       & (in particular, subset categories)        &                                      \\ \hline
      {\bf Dialectical Logic}                  & Heyting categories                        & (quasisymmetric) Boolean categories  \\
      \multicolumn{1}{|c|}{(extended version)} & with type sums                            & with type sums                       \\ 
                                               & (in particular, distributor categories)   &                                      \\ \hline
   \end{tabular}
   \end{center}
   \caption{Semantic Domains for various Logics \label{semdom}}
\end{figure}
Ignoring idempotency and commutativity,
a Boolean algebra $B = \langle B,\leq,\wedge,\vee,1,0,\neg \rangle$
can be viewed as two monoidal semilattices,
a monoidal join semilattice
$B_\wedge^\vee = \triple{\quadruple{B}{\leq}{\wedge}{1}}{\vee}{0}$
and a monoidal meet semilattice
$B_\vee^\wedge   = \triple{\quadruple{B}{\leq}{\vee}{0}}{\wedge}{1}$
on an underlying poset $\langle B,\leq \rangle$
with negation $\neg$ being an internal involution:
a monoidal join semilattice morphism
$B_\wedge^\vee \stackrel{\neg}{\rightarrow} {B_\vee^\wedge}^{\rm coop}$,
$b \leq b'$ implies $\neg b' \leq \neg b$,
$\neg(c \wedge b) = (\neg c) \vee (\neg b)$,
$\neg 1 = 0$, 
$\neg(b \vee b') = (\neg b) \wedge (\neg b')$ and
$\neg 0 = 1$,
and a monoidal meet semilattice morphism
${B_\wedge^\vee}^{\rm coop} \stackrel{\neg}{\leftarrow} B_\vee^\wedge$,
which is self-inverse $\neg(\neg b) = b$ or $\neg \cdot \neg^{\rm coop} = {\rm Id}$.
More generally,
a {\em Boolean category\/} {\bf B} is a polarized bisemilattice
for which the term-sets, type-sets and homset-order of the pole and the antipole coincide
${\rm Ar}({\bf B}) = {\rm Ar}({\bf B}_{\tprod}^{\bsum}) = {\rm Ar}({\bf B}_{\tsum}^{\bprod})$,
${\rm Obj}({\bf B}) = {\rm Obj}({\bf B}_{\tprod}^{\bsum}) = {\rm Obj}({\bf B}_{\tsum}^{\bprod})$ and
$\preceq_{\tprod} = \preceq_{\tsum} = \preceq$
(and are not just isomorphic as in polarized bisemilattices,
where the term-sets and type-sets are not identical,
but only in bijective correspondence via negation),
and which satisfies the {\em orthogonality-entailment axiom\/}
\[ s \orthog{\tprod} r \mbox{ iff } s \preceq \tneg{}{r}{} \]
for all opposed terms $\term{y}{r}{x}$ versus $\opterm{y}{s}{x}$,
which relates term-orthogonality with term-order
(because of the precise duality expressed through tensor negation,
 $s \orthog{\tprod} r$ iff $\tneg{}{r}{} \coorthog{\tsum} \tneg{}{s}{}$,
 polar orthogonality can be expressed as, and is equivalent to,
 antipolar coorthogonality).

In more detail,
a Boolean category {\bf B} consists of
a set of types (objects) ${\rm Type}({\bf B})$,
a set of terms (arrows) ${\rm Term}({\bf B})$ ordered type-wise by
a partial order $\preceq$ which has homset lattice join $\bsum$ and homset lattice meet $\bprod$
and two category compositions $\tprod$ and $\tsum$,
where the pole
${\bf B}_{\tprod}^{\bsum} = \triple{\quadruple{{\bf B}}{\preceq}{\tprod}{\rm Id}}{\bsum}{0}$
and the antipole
${\bf B}_{\tsum}^{\bprod} = \triple{\quadruple{{\bf B}}{\preceq}{\tsum}{\rm Id}}{\bprod}{1}$
are join and meet bisemilattices,
respectively,
with an internal 2-involution
${\bf B}_{\tprod}^{\bsum} \stackrel{\tneg{}{}{}}{\rightarrow} {{\bf B}_{\tsum}^{\bprod}}^{\rm coop}$.
A Boolean category is finitely distributive in two senses:
from the left
$s \tprod (\bsum_i r_i) = \bsum_i (s \tprod r_i)$ in ${\bf B}_{\tprod}^{\bsum}$ and
$s \tsum (\bprod_i r_i) = \bprod_i (s \tsum r_i)$ in ${\bf B}_{\tsum}^{\bprod}$,
and also from the right in both poles.
The tensor negation is 
(1) a doubly-contravariant (everything ``flips'') morphism of join bisemilattices
${\bf B}_{\tprod}^{\bsum} \stackrel{\tneg{}{}{}}{\rightarrow} {{\bf B}_{\tsum}^{\bprod}}^{\rm coop}$
identity on types,
$\tneg{}{(y \stackrel{r}{\rightarrow} x)}{} = x \stackrel{\tneg{}{r}{}}{\rightarrow} y$,
$\tneg{}{(s \tprod r)}{} = (\tneg{}{r}{}) \tsum (\tneg{}{s}{})$,
$\tneg{}{x}{} = x$,
$r \preceq_{y,x} r'$ implies $\tneg{}{r'}{} \preceq_{x,y} \tneg{}{r}{}$ and
$\tneg{}{(r \bsum r')}{} = (\tneg{}{r}{}) \bprod (\tneg{}{r'}{})$;
(2) a doubly-contravariant morphism of meet bisemilattices
${{\bf B}_{\tprod}^{\bsum}}^{\rm coop} \stackrel{\tneg{}{}{}}{\leftarrow} {\bf B}_{\tsum}^{\bprod}$
in the reverse direction and opposite sense,
$\tneg{}{(s \tsum r)}{} = (\tneg{}{r}{}) \tprod (\tneg{}{s}{})$ and
$\tneg{}{(r \bprod r')}{} = (\tneg{}{r}{}) \bsum (\tneg{}{r'}{})$;
(3) which is self-inverse $\tneg{}{(\tneg{}{r}{})}{} = r$.
In a Boolean category orthogonality preserves composition,
in the sense that:
$q \orthogonal s$ and $p \orthogonal r$ implies $(p \tprod q) \orthogonal (s \tprod r)$.
Also,
a Boolean category satisfies the product-sum comparison (or ``mix'') axiom:
$s \tprod r \preceq_{z,x} s \tsum r$
for all terms $\term{z}{s}{y}$ and $\term{y}{r}{x}$.
A one object Boolean category is called a {\em Boolean monoid\/}.
The homsets ${\bf B}[x,x]$ are Booleans monoids for each type $x$.
A Boolean category is {\em complete\/}
when the poles are both complete Heyting categories;
that is,
the homsets are complete lattices,
tensor product is completely distributive (continuous) w.r.t. boolean sum, and
tensor sum is completely distributive (continuous) w.r.t. boolean product.
Morphisms of Boolean categories are just morphisms of polarized bisemilattices.

A term $\term{y}{r}{x}$ in a Boolean category is {\em invertible\/} when its tensor negation is a categorical inverse:
$\tneg{}{r}{} \tprod r = x$ and $r \tprod \tneg{}{r}{} = y$.
So invertible terms are the same as {\bf B}-isomorphisms.
For isomorphisms the direct and inverse image operators are isomorphisms of Boolean monoids.
Clearly, all identities are isomorphisms.
Isomorphisms are closed under tensor product, tensor sum and tensor negation.
In fact,
the tensor sum collapses to the tensor product
$s \tsum r = s \tprod r$ for composable isomorphisms. 
When all terms in a Boolean category are isomorphisms,
the Boolean category is known as a {\em lattice-ordered groupoid\/}.
In general,
the collection of all isomorphisms in a Boolean category {\bf B}
is a Boolean subcategory of {\bf B} which is a lattice-ordered groupoid.
A summary of the appropriate semantic domains for various logics is given in the Figure~\ref{semdom}.

Recall that a term $\term{y}{r}{x}$ is ${\bf B}_{\tprod}^{\bsum}$-quasisymmetric when
$p \tprod r \preceq x$ iff $r \tprod p \preceq y$,
and is ${\bf B}_{\tsum}^{\bprod}$-coquasisymmetric when
$p \tsum r \succeq x$ iff $r \tsum p \succeq y$.
So $r$ is ${\bf B}_{\tprod}^{\bsum}$-quasisymmetric iff $\tneg{}{r}{}$ is ${\bf B}_{\tsum}^{\bprod}$-coquasisymmetric.
This means that the tensor negation 2-involution restricts and corestricts precisely
to the center of ${\bf B}_{\tprod}^{\bsum}$ and the cocenter of ${\bf B}_{\tsum}^{\bprod}$:
$\zenter{{\bf B}_{\tprod}^{\bsum}} \stackrel{\tneg{}{}{}}{\rightarrow} {\zenter{{\bf B}_{\tsum}^{\bprod}}}^{\rm coop}$.
Call this the {\em center\/} of {\bf B},
and denote it by $\zenter{{\bf B}}$.
A Boolean category {\bf B} is {\em quasisymmetric\/} when $\zenter{{\bf B}} = {\bf B}$.
Quasisymmetric Boolean categories
(and the Boolean center of their associated closed subset categories) 
are fundamental semantic structures for complete classical dialectical logic.

Let $\term{y}{r}{x}$ be any fixed ${\bf B}^{\tprod}_{\bsum}$-term.
For any ${\bf B}^{\tsum}_{\bprod}$-term $\term{y}{t}{z}$ with source type in common with $r$,
define the {\em left tensor implication\/} ${\bf B}^{\tsum}_{\bprod}$-term
$\term{x}{r \tensorimplysource t}{z}$ by $r \tensorimplysource t \define \tneg{}{r}{} \tsum t$.
Similarly,
for any ${\bf B}^{\tsum}_{\bprod}$-term $\term{z}{s}{x}$ with target type in common with $r$
define the {\em right tensor implication\/} ${\bf B}^{\tsum}_{\bprod}$-term
$\term{z}{s \tensorimplytarget r}{y}$ by $s \tensorimplytarget r \define s \tsum \tneg{}{r}{}$.
The dialectical axioms
$t \tprod r \preceq_{z,x} s \mbox{ iff } t \preceq_{z,y} s \tensorimplytarget r$
and
$r \tprod s \preceq_{y,z} t \mbox{ iff } s \preceq_{x,z} r \tensorimplysource t$
hold on quasisymmetric terms.
Adjoining these implication operators to the center pole ${\cal Z}({\bf B}^{\tprod}_{\bsum})$
makes this into a quasisymmetric Heyting category ${\cal H}(\zenter{{\bf B}})$
called the {\em Heyting center\/} of {\bf B},
whose tensor negation is the same as in {\bf B}.
So all terms in ${\cal H}(\zenter{{\bf B}})$ are double negation closed.
\begin{Theorem}[Center Reflection]
   If ${\bf H}$ is a quasisymmetric Heyting category,
   then the Boolean center $\Boolean{{\bf H}}$ is a quasisymmetric Boolean category.
   Any quasisymmetric Boolean category ${\bf B}$ is a quasisymmetric Heyting category $\Heyting{{\bf B}}$.
   For any quasisymmetric Boolean category {\bf B},
   the Boolean center of {\bf B} as a Heyting category is just {\bf B} itself
   $\Boolean{\Heyting{{\bf B}}} = {\bf B}$.
   For any quasisymmetric Heyting category ${\bf H}$,
   the Boolean center as a Heyting category,
   is just the center pole
   $\Heyting{\Boolean{{\bf H}}} = {\bf H}_{\tprod}^{\bsum}$,
   the lax subHeyting category of ${\bf H}$ consisting of double negation closed terms.
\end{Theorem}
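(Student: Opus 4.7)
The plan is to dispatch the four assertions in sequence, using the structure already set up in the preceding paragraphs and the functoriality lemma for double negation.

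For assertion (1), I would begin by observing that when $\mathbf{H}$ is quasisymmetric, $\zenter{\mathbf{H}} = \mathbf{H}$, so the Boolean pole of $\zenter{\mathbf{H}}$ is $\triple{\quadruple{\tnegtneg{}{\mathbf{H}}{}}{\preceq}{\tprod}{\rm Id}}{\bsum}{0}$, consisting of all double-negation closed terms. The preceding paragraphs already established that the classical tensor product is associative and unital on polar terms (via the functoriality lemma), that the DeMorgan laws hold, that tensor negation is a self-inverse 2-involution switching pole and antipole, and that $\tnegtneg{}{}{}$ reflects $\mathbf{H}$ onto $\tnegtneg{}{\mathbf{H}}{}$. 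The one additional item needed is that quasisymmetry of $\mathbf{H}$ lifts to $\tprod$-quasisymmetry of every $\tnegtneg{}{}{}$-closed term: this follows from the remark that $\tnegtneg{}{r}{}$ is quasisymmetric whenever $r$ is, together with the fact (noted after the definition of the Boolean center) that a term is $\circ$-quasisymmetric iff it is $\tprod$-quasisymmetric. Finally, the orthogonality-entailment axiom $s \orthog{\tprod} r \mbox{ iff } s \preceq \tneg{}{r}{}$ holds for all polar terms by the principal-ideal description $\perp(r) = \principalideal{\tneg{}{r}{}}$ and the closedness of $s$.

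For assertion (2), given a quasisymmetric Boolean category $\mathbf{B}$, I adopt the definitions $s \tensorimplytarget r \define s \tsum \tneg{}{r}{}$ and $r \tensorimplysource t \define \tneg{}{r}{} \tsum t$ already indicated. The content is to verify the two dialectical axioms. From $t \tprod r \preceq s$, apply tensor negation (contravariantly) to get $\tneg{}{s}{} \preceq \tneg{}{r}{} \tsum \tneg{}{t}{}$, and use the orthogonality-entailment axiom together with quasisymmetry of $r$ to rearrange this into $t \preceq s \tsum \tneg{}{r}{} = s \tensorimplytarget r$; the converse direction uses modus-ponens-style computation $t \tprod r \preceq (s \tsum \tneg{}{r}{}) \tprod r \preceq s$, where the last step needs the mix axiom and the fact that $\tneg{}{r}{} \tprod r \preceq x$ (from $\tneg{}{r}{} \orthog{\tprod} r$ via orthogonality-entailment). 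The dual axiom is symmetric. Together these show $\Heyting{\mathbf{B}}$ is a Heyting category, and it inherits quasisymmetry from $\mathbf{B}$ since the underlying term structure is unchanged.

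For assertion (3), starting with $\mathbf{B}$ quasisymmetric Boolean, the Heyting category $\Heyting{\mathbf{B}}$ has the same terms as $\mathbf{B}$ and has tensor negation given by $s \tensorimplytarget r$ applied to identities, which by definition unwinds to $\tneg{}{r}{}$ (the $\mathbf{B}$-negation). Hence double negation in $\Heyting{\mathbf{B}}$ is self-inverse on every term, so every term is $\tnegtneg{}{}{}$-closed; similarly every term is quasisymmetric. Thus $\zenter{\Heyting{\mathbf{B}}} = \Heyting{\mathbf{B}}$ and $\tnegtneg{}{\Heyting{\mathbf{B}}}{} = \Heyting{\mathbf{B}}$, so the Boolean center recovers exactly the pole/antipole polarized bisemilattice with which we started — that is, $\Boolean{\Heyting{\mathbf{B}}} = \mathbf{B}$.

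For assertion (4), starting with $\mathbf{H}$ quasisymmetric Heyting, assertion (1) identifies the Boolean center with the polarized bisemilattice on $\tnegtneg{}{\mathbf{H}}{}$, and assertion (2) applied to this Boolean category produces the Heyting category $\Heyting{\Boolean{\mathbf{H}}}$ on the same underlying set $\tnegtneg{}{\mathbf{H}}{}$, with classical $\tprod$ as tensor product and $\bsum$ as homset join. This is precisely $\mathbf{H}_{\tprod}^{\bsum}$ by definition.

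The main obstacle is the axiom check in (2): the orthogonality-entailment axiom of Boolean categories together with quasisymmetry must do the work that in a Heyting category is done by a primitive tensor-implication adjunction. The pivot is the identity $\tneg{}{r}{} \tprod r \preceq x$ (equivalently $r \tprod \tneg{}{r}{} \preceq y$ via quasisymmetry), which when combined with the mix axiom $s \tprod r \preceq s \tsum r$ delivers both halves of the dialectical equivalence. Once this is done, assertions (3) and (4) are essentially bookkeeping about the fixed points of the $\tnegtneg{}{}{}$ closure operator.
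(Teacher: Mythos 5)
Your overall route is the same as the paper's: the theorem is a summary of the development immediately preceding it, and you correctly assemble the relevant pieces --- the functoriality (``bottleneck'') lemma for associativity of $\tprod$ on closed terms, preservation of quasisymmetry under double negation, the orthogonality-entailment property of polar terms for assertion (1), the definitions $s \tensorimplytarget r \define s \tsum \tneg{}{r}{}$ and $r \tensorimplysource t \define \tneg{}{r}{} \tsum t$ for assertion (2), and the observation that tensor negation in $\Heyting{{\bf B}}$ coincides with the ${\bf B}$-negation, so that every term is double-negation closed, for assertions (3) and (4).

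The one step whose stated justification does not work is the modus ponens inequality $(s \tsum \tneg{}{r}{}) \tprod r \preceq s$ in assertion (2). The mix axiom $s \tprod r \preceq s \tsum r$ points the wrong way here: together with $\tneg{}{r}{} \tprod r \preceq x$ it does not let you move $r$ inside the tensor sum; what that computation actually needs is the linear-distributivity law $(s \tsum q) \tprod r \preceq s \tsum (q \tprod r)$, which is not among the paper's axioms for Boolean categories. The derivation that does work --- and that delivers both directions of the dialectical axiom at once, making your separate forward argument unnecessary --- is a chain of equivalences through the orthogonality-entailment axiom: $t \tprod r \preceq s = \tneg{}{(\tneg{}{s}{})}{}$ iff $(t \tprod r) \orthog{\tprod} \tneg{}{s}{}$ iff (by associativity of $\tprod$ and quasisymmetry, which collapses the two semi-orthogonality conditions into one) $t \orthog{\tprod} (r \tprod \tneg{}{s}{})$ iff $t \preceq \tneg{}{(r \tprod \tneg{}{s}{})}{} = s \tsum \tneg{}{r}{} = s \tensorimplytarget r$. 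With that substitution your treatment of (2) closes, and the remaining assertions are, as you say, bookkeeping about the fixed points of the closure operator $\tnegtneg{}{}{}$.
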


\section{Classical Axiomatics}

We follow both the semantics of dialectical processes and the axiomatics given by Girard for linear logic.
However,
when linear logic deviates from dialectical process semantics,
we follow the latter.
A hallmark of both dialectical and linear logic is the fact that
the standard connectives and truth-values split into tensors and booleans,
as in Table~\ref{split}.

\begin{table}
   \begin{center}
      \begin{tabular}{|rp{1in}|rp{1.8in}|p{1.6in}|} \hline
         \multicolumn{2}{|c|}{\bf Standard Logic} & \multicolumn{2}{c|}{\bf Dialectical Logic} & \multicolumn{1}{c|}{\bf Uses} \\ \hline\hline
         $\wedge$ & boolean product & $\tprod_{z,y,x}$ & tensor (horizontal) product & direct flow                   \\ \cline{3-5}
                  &                 & $\bprod_{y,x}$   & boolean (vertical) product  & parallelism \& inverse flow   \\ \hline
         $\top$   & true            & $\pair{m}{x}$    & monoids (comonoids)                      & tensor validity               \\ \cline{3-5}
                  &                 & $1_{y,x}$        & top process                 & boolean validity              \\ \hline
         $\vee$   & boolean sum     & $\tsum_{z,y,x}$  & tensor (horizontal) sum     & inverse flow                  \\ \cline{3-5}
                  &                 & $\bsum_{y,x}$    & boolean (vertical) sum      & parallelism \& direct flow    \\ \hline
         $\bot$   & false           & $\pair{m}{x}$    & monoids (comonoids)                    & orthogonality                 \\ \cline{3-5}
                  &                 & $0_{y,x}$        & bottom process              & disjointness                  \\ \hline
      \end{tabular}
   \end{center}
   \caption{Splitting of Connectives and Truth values \label{split}}
\end{table}

\paragraph{Language.}
There is a collection of {\em type symbols\/} $x,y,z, \cdots$,
and a collection of {\em atoms\/} or {\em atomic term symbols\/} $a,b,c, \cdots$.
Each atom $a$ is a term formula,
and has a unique source type $y$ and a unique target type $x$,
denoted by $\term{y}{a}{x}$.
Each atom $\term{y}{a}{x}$ has a {\em dual\/} or {\em complement\/} $\term{x}{\dot{a}}{y}$. 
Atoms and their duals are called {\em literals\/}.
So type symbols are the nodes of a graph {\bf Lang},
and literals (and other composite term formulas) form the edges.
For each pair of types $y$ and $x$,
there are two distinguished term symbols $\term{y}{0}{x}$ and $\term{y}{1}{x}$.
Each type $x$ is represented as a term formula $\term{x}{x}{x}$,
which is a self-loop at node $x$ in the graph {\bf Lang}.
Composite term formulas are built up recursively from literals
by horizontally applying the tensor operation symbols $\tprod$ and $\tsum$,
and vertically applying the boolean operation symbols $\bsum$ and $\bprod$,
in an obvious type-consistent fashion. 
Term formulas are also called terms.
This will be legitimized below when it is shown that the (equivalence classes of) term formulas form a Boolean category. 
Following Girard's approach,
there is an external involution ${\bf Lang} \stackrel{\tneg{}{}{}}{\rightarrow} {\bf Lang}^{\rm op}$ called {\em tensor negation\/},
which is defined recursively on terms as follows:
{\bf base}      $\tneg{}{a}{} \define \dot{a}$
             and $\tneg{}{(\dot{a})}{} \define a$;
{\bf recursion} $\tneg{}{x}{} \define x$,
                 $\tneg{}{(\beta \tprod \alpha)}{} \define (\tneg{}{\alpha}{}) \tsum (\tneg{}{\beta}{})$
             and $\tneg{}{(\beta \tsum \alpha)}{} \define (\tneg{}{\alpha}{}) \tprod (\tneg{}{\beta}{})$,
                 $\tneg{}{(\alpha \bsum \alpha')}{} \define (\tneg{}{\alpha}{}) \bprod (\tneg{}{\alpha'}{})$
             and $\tneg{}{(\alpha \bprod \alpha')}{} \define (\tneg{}{\alpha}{}) \bsum (\tneg{}{\alpha'}{})$, and
                 $\tneg{}{(\term{y}{0}{x})}{} \define \term{x}{1}{y}$
             and $\tneg{}{(\term{y}{1}{x})}{} \define \term{x}{0}{y}$.
\begin{Fact}
   $\tneg{}{(\tneg{}{\alpha}{})}{} = \alpha$ for every term $\alpha$.
\end{Fact}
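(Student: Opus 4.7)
The plan is to prove this by structural induction on the recursive construction of term formulas. The language builds up terms from several base cases (atoms $a$, their duals $\dot{a}$, type symbols $x$, and the distinguished constants $\term{y}{0}{x}$ and $\term{y}{1}{x}$) using four binary constructors ($\tprod$, $\tsum$, $\bsum$, $\bprod$). Tensor negation is defined by cases corresponding exactly to this grammar, so the involution property $\tneg{}{(\tneg{}{\alpha}{})}{} = \alpha$ should follow mechanically if we check each constructor.

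First I would dispatch the base cases. For an atom $\term{y}{a}{x}$, unfolding gives $\tneg{}{(\tneg{}{a}{})}{} = \tneg{}{\dot{a}}{} = a$; the symmetric computation handles the case of a dual literal $\dot{a}$. For a type self-loop term $\term{x}{x}{x}$, we have $\tneg{}{x}{} = x$, so double negation is trivial. For the constants, $\tneg{}{(\tneg{}{(\term{y}{0}{x})}{})}{} = \tneg{}{(\term{x}{1}{y})}{} = \term{y}{0}{x}$, and dually for $1$. Each base case uses only the defining equations and the fact that negation reverses source/target types, which is already built into the definition.

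For the inductive step I would assume the identity holds for subterms $\alpha, \alpha', \beta$ and check each of the four binary constructors. For the tensor product, $\tneg{}{(\tneg{}{(\beta \tprod \alpha)}{})}{} = \tneg{}{((\tneg{}{\alpha}{}) \tsum (\tneg{}{\beta}{}))}{} = (\tneg{}{(\tneg{}{\beta}{})}{}) \tprod (\tneg{}{(\tneg{}{\alpha}{})}{}) = \beta \tprod \alpha$ by the inductive hypothesis; note that the two swaps (tensor-product to tensor-sum and back, with the order of operands reversed both times) cancel out precisely. The tensor sum case is entirely symmetric, and the two boolean connectives $\bsum, \bprod$ interchange without any reordering of operands, making their cases even cleaner.

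There is no real obstacle here — the DeMorgan-style clauses in the definition of $\tneg{}{}{}$ are designed so that everything cancels. The one thing to be careful about is type-tracking in the tensor cases: negation swaps source and target, and the order of operands is reversed in $\tprod/\tsum$, so one must verify that after two applications the types line up with the original $\term{y}{\cdot}{x}$ rather than the swapped $\term{x}{\cdot}{y}$. This is immediate from applying the source/target swap twice.
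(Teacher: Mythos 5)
Your proof is correct and is exactly the intended argument: the paper states this Fact without proof, treating it as immediate from the clause-by-clause recursive definition of $\tneg{}{}{}$, and your structural induction (base cases on literals, types, and the constants $0$, $1$; inductive cases on the four connectives, with the double operand-reversal in the $\tprod$/$\tsum$ clauses cancelling) is precisely the verification being left to the reader.
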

In addition to the previous symbols which specify types and terms,
there are two special symbols $\vdash$ and $\perp$ which specify
the binary relation of {\em entailment\/} between parallel terms
and the binary relation of {\em orthogonality\/} between opposed terms,
respectively.
The entailment and orthogonality relations on terms give two equivalent ways in which to specify dialectical logic.

\paragraph{Inference Rules.}
The formal semantics of classical dialectical logic will be defined
via axioms and inference rules.
The novelty of this approach lies in the use of orthogonality assertions,
rather than just term entailment assertions alone. 
An orthogonality assertion is a statement of the form $\beta \orthog{} \alpha$
for two opposed terms $\term{y}{\alpha}{x}$ versus $\opterm{y}{\beta}{x}$,
and when $\beta \orthog{} \alpha$ holds,
we say that $\alpha$ is {\em orthogonal\/} to $\beta$.
An orthogonality assertion is interpreted as the orthogonality of the terms specified by the opposed term formulas.
The orthogonality relation $\perp$ has a negation-dual relation $\coorthog{}$,
called {\em coorthogonality\/},
and defined by $\beta \coorthog{} \alpha$ when $\tneg{}{\alpha}{} \orthog{} \tneg{}{\beta}{}$.
An entailment assertion is a statement of the form $\alpha \vdash \beta$ 
for two parallel terms $\term{y}{\alpha,\beta}{x}$,
and when $\alpha \vdash \beta$ holds,
we say that $\alpha$ {\em entails\/} $\beta$.
The entailment relation $\vdash$ has an obvious dual relation $\vdash^{\rm op}$
defined by $\beta \vdash^{\rm op} \alpha$ when $\alpha \vdash \beta$;
so that,
$\vdash^{\rm op} = \dashv$.
We use the equivalence notation
$\alpha \logequiv \beta$
when both
$\alpha \vdash \beta$ and $\beta \vdash \alpha$ hold,
and we say that $\alpha$ is {\em entailment equivalent\/} to $\beta$.
When ``$\alpha$ entails identity'',
that is when $\alpha \vdash x$ holds,
we say that the term $\alpha$ itself is {\em provable\/}.
So an endoterm $\term{x}{\alpha}{x}$ is provable
iff $\alpha \memberof \below{x}$ the principal ideal of the identity term.

We give two versions of inference rules for the term calculus:
an {\em entailment version\/} which is closely related to the semantics of dialectical logic,
and an {\em orthogonality version\/} which extends Girard's version \cite{Girard} of the linear logic.
In each version we group the rules according to their semantics:
the vertical aspect in Table~\ref{vertterm}
\begin{table}
   \begin{center}
      \begin{tabular}{|c|c|c|} 
         \cline{1-1} \cline{3-3}
         {\sc entailment version}
         &&
         {\sc orthogonality version} \\
         \cline{1-1} \cline{3-3}
         \multicolumn{3}{c}{} \\
         \cline{1-1} \cline{3-3}
         && \\
         \multicolumn{1}{|c}{} & \multicolumn{1}{c}{\makebox[0in]{\bf Homset Order}} & \\
         && \\
         \fbox{\begin{tabular}{c}
                  $\rulezero{\alpha \vdash \alpha}
                            {{\bf reflexivity}}$ \\
                  for terms $\term{y}{\alpha}{x}$
               \end{tabular}}
         &&
         \fbox{\begin{tabular}{c}
                  $\rulezero{\alpha \orthogonal \tneg{}{\alpha}{}}
                            {{\bf logical axiom}}$ \\
                  for terms $\term{y}{\alpha}{x}$
               \end{tabular}} \\
         && \\
         \fbox{\begin{tabular}{c}
                  $\ruletwo{\alpha \vdash \beta}
                           {\beta \vdash \gamma}
                           {\alpha \vdash \gamma}
                           {{\bf transitivity}}$ \\
                  for terms $\term{y}{\alpha,\beta}{x}$ versus $\opterm{y}{\gamma}{x}$
               \end{tabular}}
         &&
         \fbox{\begin{tabular}{c}
                  $\ruletwo{\alpha \orthogonal \tneg{}{\beta}{}}
                           {\beta \orthogonal \gamma}
                           {\alpha \orthogonal \gamma}
                           {{\bf cut}}$ \\
                  for terms $\term{y}{\alpha,\beta}{x}$ versus $\opterm{y}{\gamma}{x}$
               \end{tabular}} \\
         && \\
         \fbox{\begin{tabular}{c}
                  $\ruleone{\alpha \vdash \beta}
                           {\tneg{}{\beta}{} \vdash \tneg{}{\alpha}{}}
                           {{\bf contravariance}}$ \\
                  for terms $\term{y}{\alpha}{x}$ versus $\opterm{y}{\beta}{x}$
               \end{tabular}}
         &&
         \fbox{\begin{tabular}{c}
                  $\ruleone{\beta \orthogonal \alpha}
                           {\alpha \orthogonal \beta}
                           {{\bf symmetry}}$ \\
                  for terms $\term{y}{\alpha}{x}$ versus $\opterm{y}{\beta}{x}$
               \end{tabular}} \\
         && \\
         \multicolumn{1}{|c}{} & \multicolumn{1}{c}{\makebox[0in]{\bf Booleans}} & \\
         && \\
         \fbox{\begin{tabular}{c}
                  $\rulezero{0_{yx} \vdash \alpha}
                            {{\bf bottom}}$ \\
                  for terms $\term{y}{\alpha}{x}$
               \end{tabular}}
         &&
         \fbox{\begin{tabular}{c}
                  $\rulezero{0_{yx} \orthogonal \alpha}
                            {{\bf zero}}$ \\
                  for terms $\term{y}{\alpha}{x}$
               \end{tabular}} \\
         && \\
         \fbox{\begin{tabular}{c}
                  $\rulezero{\alpha \vdash (\alpha \bsum \alpha')}
                           {{\bf 1st u.b.}}$ \\
                  for terms $\term{y}{\alpha,\alpha'}{x}$
               \end{tabular}}
         &&
         \fbox{\begin{tabular}{c}
                  $\ruleone{\alpha \orthogonal \beta}
                           {(\alpha \bprod \alpha') \orthogonal \beta}
                           {{\bf 1st} $\bprod$}$ \\
                  for terms $\term{y}{\alpha,\alpha'}{x}$ versus $\opterm{y}{\beta}{x}$
               \end{tabular}} \\
         && \\
         \fbox{\begin{tabular}{c}
                  $\rulezero{\alpha' \vdash (\alpha \bsum \alpha')}
                           {{\bf 2nd u.b.}}$ \\
                  for terms $\term{y}{\alpha,\alpha'}{x}$
               \end{tabular}}
         &&
         \fbox{\begin{tabular}{c}
                  $\ruleone{\alpha' \orthogonal \beta}
                           {(\alpha \bprod \alpha') \orthogonal \beta}
                           {{\bf 2nd} $\bprod$}$ \\
                  for terms $\term{y}{\alpha,\alpha'}{x}$ versus $\opterm{y}{\beta}{x}$
               \end{tabular}} \\
         && \\
         \fbox{\begin{tabular}{c}
                  $\ruletwo{\alpha \vdash \beta}
                           {\alpha' \vdash \beta}
                           {(\alpha \bsum \alpha') \vdash \beta}
                           {{\bf l.u.b.}}$ \\
                  for terms $\term{y}{\alpha,\alpha',\beta}{x}$
               \end{tabular}}
         &&
         \fbox{\begin{tabular}{c}
                  $\ruletwo{\alpha \orthogonal \beta}
                           {\alpha' \orthogonal \beta}
                           {(\alpha \bsum \alpha') \orthogonal \beta}
                           {$\bsum$}$ \\
                  for terms $\term{y}{\alpha,\alpha'}{x}$ versus $\opterm{y}{\beta}{x}$
               \end{tabular}} \\
         && \\
         \cline{1-1} \cline{3-3}
      \end{tabular}
   \end{center}
   \caption{{\bf Vertical Aspect of Term Rules} \label{vertterm}}
\end{table}
and the horizontal aspect in Table~\ref{horizterm}.
\begin{table}
   \begin{center}
      \begin{tabular}{|c|c|c|} 
         \cline{1-1} \cline{3-3}
         {\sc entailment version}
         &&
         {\sc orthogonality version} \\
         \cline{1-1} \cline{3-3}
         \multicolumn{3}{c}{} \\
         \cline{1-1} \cline{3-3}
         && \\
         \multicolumn{1}{|c}{} & \multicolumn{1}{c}{\makebox[0in]{\bf Tensors}} & \\
         && \\
         \fbox{\begin{tabular}{c}
                  $\rulezero{(y \tprod \alpha) \logequiv \alpha \logequiv (\alpha \tprod x)}
                           {{\bf identity}}$ \\
                  for terms $\term{y}{\alpha}{x}$
               \end{tabular}}
         &&
         \fbox{\begin{tabular}{c}
                  $\ruledoublezero{(\alpha \tprod x) \orthog{} \tneg{}{\alpha}{} \orthog{} (y \tsum \alpha)}
                                  {(y \tprod \alpha) \orthog{} \tneg{}{\alpha}{} \orthog{} (\alpha \tsum x)}
                                  {{\bf identity}}$ \\
                  for terms $\term{y}{\alpha}{x}$
               \end{tabular}} \\
         && \\
         \fbox{\begin{tabular}{c}
                  $\ruletwo{\beta \vdash \delta}
                           {\alpha \vdash \gamma}
                           {(\beta \tprod \alpha) \vdash (\delta \tprod \gamma)}
                           {{\bf monotonicity}}$ \\
                  for terms $\term{z}{\beta,\delta}{y}$ and $\term{y}{\alpha,\gamma}{x}$
               \end{tabular}}
         &&
         \fbox{\begin{tabular}{c}
                  $\ruletwo{\beta \orthogonal \delta}
                           {\alpha \orthogonal \gamma}
                           {(\beta \tprod \alpha) \orthogonal (\gamma \tsum \delta)}
                           {$\tprod \! \tsum$}$ \\
                  for terms $\term{z}{\beta}{y}$ versus $\opterm{z}{\delta}{y}$ \\
                  and $\term{y}{\alpha}{x}$ versus $\opterm{y}{\gamma}{x}$
               \end{tabular}} \\
         && \\
         \multicolumn{3}{|c|}{\fbox{\begin{tabular}{c}
                                     $\beta \orthogonal \alpha$ iff $\beta \vdash \tneg{}{\alpha}{}$ \mbox{   } ({\bf orthog}-{\bf entail}) \\
                                     for terms $\term{y}{\alpha}{x}$ versus $\opterm{y}{\beta}{x}$
                                  \end{tabular}}} \\
         && \\
         \multicolumn{3}{|c|}{\fbox{\begin{tabular}{c}
                                     $\beta \orthogonal \alpha$ iff $\beta \tprod \alpha \vdash x$ and $\alpha \tprod \beta \vdash y$ \mbox{   } ({\bf orthogonality definition}) \\
                                     for terms $\term{y}{\alpha}{x}$ versus $\opterm{y}{\beta}{x}$
                                  \end{tabular}}} \\
         && \\
         \cline{1-1} \cline{3-3}
      \end{tabular}
   \end{center}
   \caption{{\bf Horizontal Aspect of Term Rules} \label{horizterm}}
\end{table}
The homset-order axioms in the two versions are immediately equivalent;
in fact,
the logical axioms are equivalent to reflexivity of entailment,
the cut rule is equivalent to transitivity of entailment,
and symmetry is equivalent to contravariance of tensor negation.
So entailment is a homset preorder on terms,
and ${\bf Lang}$ is a preordered graph.
Similarly,
the tensor axioms,
the $\tprod \tsum$-rule and monotonicity of tensor product $\tprod$,
are equivalent.
By applying tensor negation,
the monotonicity of tensor product $\tprod$ and the monotonicity of tensor sum $\tsum$
are equivalent facts.
The cut rule implies that orthogonality is monotonic:
if $\beta \orthogonal \alpha$ and $\alpha' \vdash \alpha$ then $\beta \orthogonal \alpha'$.
The boolean rules assert that $\bsum$ is a least upper bound
and that $\bprod$ is a greatest lower bound in the entailment order.
The zero rule provides the axiomatics for both bottom $0$ and top $1$.
Thus,
the (internal) vertical aspect of term formulas has the structure of a lattice;
with the (external) tensor negation,
ignoring types,
it has the structure of a Boolean algebra.
The entailment axioms,
minus contravariance,
are essentially the axioms for a join bisemilattice.
The vertical aspect of the basic calculus corresponds to standard (propositional) logic.
The horizontal aspect of the basic calculus,
minus the orthogonality definition axiom,
is a dialectical logic analog or typed version of the ``multiplicative fragment'' adjoined by linear logic.
The definition of orthogonality,
which axiomatizes ``Boolean orthogonality'' or the definition of orthogonality in Boolean categories,
separates dialectical logic from typed linear logic.
We want to show that the horizontal aspect of term formulas has categorical structure for both tensor product and tensor sum.
We can do this quite simply by extending entailment to sequences of term formulas.

\paragraph{Sequents.}
A {\em sequent\/} $\alpha$ is a path of term formulas ({\bf Lang}-edges)
$\term{y}{\alpha}{x} = \term{y}{\alpha_n}{x_{n-1}} \rightharpoondown \cdots \rightharpoondown \term{x_1}{\alpha_1}{x}$.
Such a path is a typed version of a sequence of term formulas.
The concatenation of two sequents $\term{z}{\beta}{y}$ and $\term{y}{\alpha}{x}$
is denoted by $\term{z}{\beta \circ \alpha}{x}$.
The empty sequent at type symbol $x$ is denoted by $\term{x}{\varepsilon_x}{x}$.
So sequents are arrows in a free (path) category ${\bf Lang}^\ast$
having concatenation $\circ$ as composition and empty paths $\varepsilon_x$ as identities.
The category of sequents ${\bf Lang}^\ast$ inherits from the graph of terms {\bf Lang}
a weak {\em vector entailment\/} homset order $\vec{\vdash}$,
defined by
$\alpha \vec{\vdash} \beta$ when $|\alpha| = |\beta|$ and $\alpha_i \vdash \beta_i$ for all $1 \leq i \leq n$,
where $\alpha = \alpha_n \circ \cdots \circ \alpha_1$.
Clearly,
sequent concatenation is monotonic w.r.t. vector entailment:
if $\beta \vec{\vdash} \delta$ and $\alpha \vec{\vdash} \gamma$
then $(\beta \circ \alpha) \vec{\vdash} (\delta \circ \gamma)$
for any two composable parallel pairs of sequents
$\term{z}{\beta,\delta}{y}$ and $\term{y}{\alpha,\gamma}{x}$.
So $\vec{{\bf Lang}}^\ast \define \pair{{\bf Lang}^\ast}{\vec{\vdash}}$ is a bipreorder (preordered category).
Extend tensor negation to sequents by defining the sequent ``vector'' {\em tensor negation\/}
$\vecneg{\alpha} \define \tneg{}{\alpha_1}{} \circ \cdots \circ \tneg{}{\alpha_n}{}$
for any sequent $\term{y}{\alpha}{x}$ which is the path of terms
$\alpha = \alpha_n \circ \cdots \circ \alpha_1$;
in particular,
$\vecneg{\varepsilon_x} \define \varepsilon_x$.
Vector tensor negation is contravariant:
if $\alpha \vec{\vdash} \beta$ then $\vecneg{\beta} \vec{\vdash} \vecneg{\alpha}$.
So vector tensor negation is a categorical involution $\vecneg{\vecneg{\alpha}} = \alpha$;
that is,
a contravariant functor
$\vec{{\bf Lang}}^\ast \stackrel{\vec{\neg}}{\rightarrow} (\vec{{\bf Lang}}^\ast)^{\rm coop}$,
which is self-inverse $\vecneg{} \cdot (\vecneg{})^{\rm coop} = {\rm Id}$.
The category of sequents, vector entailment, and vector tensor negation form a polarized bipreorder $\vec{{\bf Lang}}^\ast$.

Sequents will be interpreted in Boolean categories.
A sequent can be interpreted in a Boolean category in either a polar sense (using $\tprod$) or an antipolar sense (using $\tsum$).
The two senses are inter-translatable via tensor negation.
In Girard's version of linear logic,
sequents are interpreted in the antipolar sense.
The interpretation of a sequent $\term{y}{\alpha}{x}$ in the polar sense
is done via the {\em tensor product term\/} $\term{y}{\tprod(\alpha)}{x}$,
a sequent of length one,
which is defined by 
$\tprod(\alpha) \define \alpha_n \tprod \cdots \tprod \alpha_1$.
More precisely,
{\bf base}
   $\tprod(\varepsilon_x) \define x$
   for any type $x$,
and
{\bf induction} 
   $\tprod(\beta \circ \alpha) \define \beta \,\tprod\, \tprod(\alpha)$
   for any term $\term{z}{\beta}{y}$ and any sequent $\term{y}{\alpha}{x}$.
In particular,
$\tprod(\alpha) = \alpha \tprod x$ for any term $\term{y}{\alpha}{x}$.
So the tensor product operator is a type-preserving graph morphism
${\bf Lang}^\ast \stackrel{\tprod}{\longrightarrow} {\bf Lang}$
from the category of sequents ${\bf Lang}^\ast$ to the graph of terms {\bf Lang}.
Dually,
the interpretation of a sequent $\term{y}{\alpha}{x}$ in the antipolar sense
is done via the {\em tensor sum term\/} $\term{y}{\tsum(\alpha)}{x}$,
a sequent of length one,
which is defined by 
$\tsum(\alpha) \define \alpha_n \tsum \cdots \tsum \alpha_1$.
More precisely,
{\bf base}
   $\tsum(\varepsilon_x) \define x$
   for any type $x$,
and
{\bf induction} 
   $\tsum(\beta \circ \alpha) \define \tsum(\beta) \,\tsum\, \alpha$
   for any sequent $\term{z}{\beta}{y}$ and any term $\term{y}{\alpha}{x}$.
In particular,
$\tsum(\alpha) = x \tsum \alpha$ for any term $\alpha$.
So the tensor sum operator is also a type-preserving graph morphism
${\bf Lang}^\ast \stackrel{\tsum}{\longrightarrow} {\bf Lang}$.
By induction we can show that
the tensor product and tensor sum operations
are related by the DeMorgan's laws
$\tneg{}{(\tprod \alpha)}{} = \tsum(\vecneg{\alpha})$
and
$\tneg{}{(\tsum \alpha)}{} = \tprod(\vecneg{\alpha})$.

In the polar sense of interpretation,
we require that each sequent $\alpha$ be logically equivalent to its tensor product term $\tprod(\alpha)$.
So define a {\em polar entailment\/} homset order $\vdash_{\tprod}$ by
$\alpha \vdash_{\tprod} \beta$ when $\tprod(\alpha) \vdash \tprod(\beta)$.
Polar entailment partially orders ${\bf Lang}^\ast$-homsets,
if we quotient out by logical equivalence $\logequiv_{\tprod}$ defined by:
$\alpha \logequiv_{\tprod} \beta$
when both $\alpha \vdash_{\tprod} \beta$ and $\beta \vdash_{\tprod} \alpha$ hold.
Then any sequent $\term{y}{\alpha}{x}$ is entailment equivalent to its associated tensor product term
$\alpha \logequiv_{\tprod} \tprod(\alpha)$,
as is required by the polar interpretation,
since $\tprod(\tprod(\alpha)) = \tprod(\alpha) \tprod x \logequiv \tprod(\alpha)$.
The tensor product of terms is associative,
up to polar entailment equivalence (for sequents),
since $\gamma \tprod (\beta \tprod \alpha) \logequiv_{\tprod} \gamma \circ (\beta \circ \alpha)
= (\gamma \circ \beta) \circ \alpha \logequiv_{\tprod} (\gamma \tprod \beta) \tprod \alpha$.
Polar entailment equivalence $\logequiv_{\tprod}$ extends term entailment equivalence $\logequiv$;
that is,
polar entailment equivalence coincides with entailment equivalence on terms,
$\beta \logequiv_{\tprod} \alpha$ iff $\beta \logequiv \alpha$
for all terms $\term{y}{\alpha,\beta}{x}$.
So,
the tensor product of terms is associative,
up to term entailment equivalence:
$\gamma \tprod (\beta \tprod \alpha) \logequiv (\gamma \tprod \beta) \tprod \alpha$.
By induction tensor product preserves composition,
up to term equivalence
$\tprod(\beta \circ \alpha) \logequiv \tprod(\beta) \tprod \tprod(\alpha)$ .
Sequent concatenation is monotonic w.r.t. polar entailment:
if $\beta \vdash_{\tprod} \delta$ and $\alpha \vdash_{\tprod} \gamma$
then $(\beta \circ \alpha) \vdash_{\tprod} (\delta \circ \gamma)$
for any two composable parallel pairs of sequents
$\term{z}{\beta,\delta}{y}$ and $\term{y}{\alpha,\gamma}{x}$,
since tensor product is monotonic.
So,
the category of sequents ${\bf Lang}^\ast$ forms a bipreorder
${\bf Lang}^\ast_{\tprod} \define \pair{{\bf Lang}^\ast}{\vdash_{\tprod}}$
with polar entailment $\vdash_{\tprod}$.
By induction using the monotonicity rule,
the tensor product operator is monotonic w.r.t. vector entailment:
if $\alpha \vec{\vdash} \beta$ then $\tprod(\alpha) \vdash \tprod(\beta)$.
So vector entailment is weaker than polar entailment:
if $\alpha \vec{\vdash} \beta$ then $\alpha \vdash_{\tprod} \beta$.

Dually,
in the antipolar sense of interpretation,
we require that each sequent $\alpha$ be logically equivalent to its tensor sum term $\tsum(\alpha)$.
So define an {\em antipolar entailment\/} homset order $\vdash_{\tsum}$ by
$\alpha \vdash_{\tsum} \beta$ when $\tsum(\alpha) \vdash \tsum(\beta)$.
The category of sequents ${\bf Lang}^\ast$ forms a bipreorder
${\bf Lang}^\ast_{\tsum} \define \pair{{\bf Lang}^\ast}{\vdash_{\tsum}}$
with antipolar entailment $\vdash_{\tsum}$.
Again,
vector entailment is weaker than antipolar entailment:
if $\alpha \vec{\vdash} \beta$ then $\alpha \vdash_{\tsum} \beta$.
The polar and antipolar orders are two alternate interpretations for the entailment relation $\vdash$ on sequents.
They are polar duals,
and are interdefinable via the equivalence:
$\alpha \vdash_{\tprod} \beta$ iff $\vecneg{\beta} \vdash_{\tsum} \vecneg{\alpha}$.
More concisely,
vector tensor negation is an involution
${\bf Lang}^\ast_{\tprod} \stackrel{\vec{\neg}}{\rightarrow} ({\bf Lang}^\ast_{\tsum})^{\rm coop}$.
So the category of sequents, the two polarities of entailment, and vector tensor negation form a polarized bipreorder ${\bf Lang}^\ast$.

\paragraph{The Term Category.}
Entailment partially orders ${\bf Lang}$-homsets,
if we quotient out by logical equivalence $\logequiv$.
Entailment equivalence quotienting is done automatically when we use the closed subset construction.
For any term $\term{y}{\alpha}{x}$,
let $\term{[y]}{[\alpha]}{[x]}$ denote the quotient term (entailment equivalence class) of $\alpha$.
Let ${\bf Term}$ denote the quotient graph of {\bf Lang};
that is,
{\bf Term} is the graph of types and quotient terms.
Define the boolean and tensor operations on quotient terms
via representatives.
For example,
define the tensor product and tensor sum of quotient terms by
$[\beta] \tprod [\alpha] \define [\beta \tprod \alpha]$ and
$[\beta] \tsum [\alpha] \define [\beta \tsum \alpha]$.
Define the quotient entailment order by
$[\alpha] \vdash [\beta]$ when $\alpha \vdash \beta$,
and define the quotient orthogonality relation by
$[\beta] \orthog{} [\alpha]$ when $\beta \orthog{} \alpha$ is provable.
Finally,
define the quotient tensor negation by
$\tneg{}{[\alpha]}{} \define [\tneg{}{\alpha}{}]$.
These operations and relations are well-defined,
and the tensors are associative.
Since term tensor product and sum are monotonic w.r.t. entailment order,
the tensor product and sum of quotient terms are also monotonic w.r.t. entailment order.
So there is a join bisemilattice
${\bf Term}_{\tprod}^{\bsum} = \triple{\quadruple{{\bf Term}}{\vdash}{\tprod}{{\rm Id}}}{\bsum}{0}$
called the {\em quotient term pole\/},
whose objects are (quotients of) types,
whose arrows are quotient terms,
whose composition is the tensor product of quotients,
and whose homset order is quotient entailment.
Similarly,
there is a meet bisemilattice
${\bf Term}_{\tsum}^{\bprod} = \triple{\quadruple{{\bf Term}}{\vdash}{\tsum}{{\rm Id}}}{\bprod}{1}$
called the {\em quotient term antipole\/}.
Tensor negation is an involution of join bisemilattices
${\bf Term}_{\tprod}^{\bsum} \stackrel{\tneg{}{}{}}{\rightarrow} ({\bf Term}_{\tsum}^{\bprod})^{\rm coop}$,
and also an involution of meet bisemilattices
$({\bf Term}_{\tprod}^{\bsum})^{\rm coop} \stackrel{\tneg{}{}{}}{\leftarrow} {\bf Term}_{\tsum}^{\bprod}$.
So the two quotient term poles and quotient tensor negation
form a polarized bisemilattice,
also denoted by {\bf Term},
for which the orthogonality-entailment axiom and the orthogonality definition axiom hold.
\begin{Theorem}
   The category {\bf Term} of quotient terms is a Boolean category.
\end{Theorem}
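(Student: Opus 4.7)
The plan is to assemble the Boolean-category structure from the material already developed and then check the axioms that remain. By the preceding discussion, ${\bf Term}$ is a polarized bisemilattice in which both polarities share the same types, terms, and entailment order; the orthogonality--entailment and orthogonality-definition axioms hold as inference rules of the calculus; and the contravariant 2-involution $\tneg{}{}{}$ together with its DeMorgan identities (at both the boolean and tensor levels) follows from the recursive definition of tensor negation combined with contravariance. The boolean-level structure inside each homset (lattice, with $\bot$ and $\top$) is immediate from the boolean rules. What remain are three essential clauses of the definition of a Boolean category: the product--sum comparison (mix) axiom, preservation of orthogonality by tensor product, and finite distributivity of tensors over booleans.

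For the mix axiom $s \tprod r \vdash s \tsum r$, I would argue directly from the logical axioms $r \orthogonal \tneg{}{r}{}$ and $s \orthogonal \tneg{}{s}{}$. Associativity regroups $(s \tprod r) \tprod (\tneg{}{r}{} \tprod \tneg{}{s}{})$ as $s \tprod (r \tprod \tneg{}{r}{}) \tprod \tneg{}{s}{}$, which by monotonicity and the identity rule collapses to $s \tprod \tneg{}{s}{} \vdash z$; a symmetric computation handles the reversed composition. The resulting orthogonality $(s \tprod r) \orthogonal (\tneg{}{r}{} \tprod \tneg{}{s}{})$, together with orthog-entail and the tensor DeMorgan law, gives $s \tprod r \vdash s \tsum r$. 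Preservation of orthogonality under tensor product, $q \orthogonal s$ and $p \orthogonal r$ imply $(p \tprod q) \orthogonal (s \tprod r)$, then follows from the $\tprod\tsum$ rule (which produces $(p \tprod q) \orthogonal (s \tsum r)$ directly) combined with mix and the monotonicity of orthogonality in its right slot, itself an easy consequence of orthog-entail, contravariance, and transitivity.

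The main obstacle is the reverse direction of finite distributivity, $s \tprod (r \bsum r') \vdash (s \tprod r) \bsum (s \tprod r')$. The easy inclusion follows from monotonicity of $\tprod$ applied to the upper-bound rules together with the l.u.b. rule. The reverse inequality is delicate because no right adjoint to $\tprod$ is supplied in the general (nonquasisymmetric) calculus, so the usual Heyting-style argument via the dialectical axiom is unavailable. I would prove it by paralleling the sequent-calculus derivation of multiplicative--additive distributivity from linear logic inside the sequent category ${\bf Lang}^\ast_{\tprod}$, and then descending to ${\bf Term}$ via the graph morphism $\alpha \mapsto \tprod(\alpha)$ established earlier; the three companion distributivities (left/right for each polarity) follow by left--right symmetry and by applying tensor negation to both sides. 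Once this step is in hand, every clause in the definition of Boolean category has been verified for {\bf Term}.
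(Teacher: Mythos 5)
Your assembly of the structure is largely sound, and it is worth saying up front that the paper itself supplies no proof of this theorem: it is asserted on the strength of the preceding paragraph, which simply declares that the quotient poles \emph{are} a join and a meet bisemilattice. Your derivations of the mix axiom from the logical axioms $r \orthogonal \tneg{}{r}{}$, associativity, monotonicity, the orthogonality-definition axiom and the tensor DeMorgan law, and of orthogonality-preservation from the $\tprod\tsum$ rule plus mix and the monotonicity of orthogonality, are correct and supply exactly the detail the paper omits. You have also put your finger on the one clause that does not follow routinely: finite join-continuity of the tensor, $s \tprod (r \bsum r') \logequiv (s \tprod r) \bsum (s \tprod r')$, which is required for ${\bf Term}_{\tprod}^{\bsum}$ to be a join bisemilattice at all.

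However, your proposed resolution of that clause does not go through. The paper's sequent layer is conservative over the term calculus by construction: $\alpha \vdash_{\tprod} \beta$ is \emph{defined} to mean $\tprod(\alpha) \vdash \tprod(\beta)$, and no sequent-level inference rules are given beyond the term rules, so ``paralleling the linear-logic derivation inside ${\bf Lang}^\ast_{\tprod}$'' cannot prove anything the term rules do not already prove. The linear-logic derivation of multiplicative--additive distributivity essentially uses the additive-conjunction rule with a shared multi-formula context and the context-splitting tensor rule; neither is among the paper's rules, and in the typed noncommutative setting the sequents are composable paths, so the required formula management is not even well-typed. What the argument really needs is residuation ($t \tprod r \vdash s$ iff $t \vdash s \tensorimplytarget r$), and the paper recovers the dialectical axioms in a Boolean category only on \emph{quasisymmetric} terms. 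Tracing the obstruction through the orthogonality-definition axiom reduces distributivity to the claim that one-sided semi-orthogonality respects joins (from $\nu \tprod r \vdash x$ and $\nu \tprod r' \vdash x$ infer $\nu \tprod (r \bsum r') \vdash x$), which reproduces the same difficulty one level down rather than resolving it. So either this distributivity (together with its nullary case $s \tprod 0 \logequiv 0$, which you do not address) must be adopted as an explicit axiom of the calculus --- presumably what the paper tacitly intends when it remarks that the entailment axioms ``are essentially the axioms for a join bisemilattice'' --- or a genuinely new derivation must be found; as written, this step of your proof is a gap.
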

The DeMorgan's law $\tneg{}{(\tprod \alpha)}{} = \tsum(\vecneg{\alpha})$
states that the pair of {\em tensor term\/} operations is a morphism of polarized bipreorders
${\bf Lang}^\ast \stackrel{\langle \tprod , \tsum \rangle}{\longrightarrow} {\bf Term}$.
It is a quotient functor
(a full functor which is a bijection on objects),
which constructs ${\bf Term}$ as the entailment-quotient category of ${\bf Lang}^\ast$.

\paragraph{Soundness and Completeness.}
A {\em classical structure\/} $\pair{\Im}{{\bf B}}$ for the basic calculus,
the internal language of classical dialectical logic,
consists of a Boolean category ${\bf B}$
and an interpretion map (graph morphism) ${\bf Lang} \stackrel{\Im}{\longrightarrow} {\bf B}$
which preserves negation, identities, entailment order, zeroes, ones, boolean products and sums, and tensor products and sums.
The interpretation map $\Im$ assigns to each type symbol $x$ a {\bf B}-type $\Im(x)$
and assigns to each atom $\term{y}{a}{x}$ a {\bf B}-term $\term{\Im(y)}{\Im(a)}{\Im(x)}$.
Following the polar sense of interpretation,
we extend the interpretation $\Im$ to sequents by defining
$\Im_{\tprod}(\alpha) \define \Im(\tprod \alpha)$ for any sequent $\term{y}{\alpha}{x}$.
So $\Im$ is a morphism of polarized bipreorders
${\bf Lang}^\ast \stackrel{\Im}{\longrightarrow} {\bf B}$,
with the polar interpretation embodied in the polar part
${\bf Lang}^\ast \stackrel{\Im_{\tprod}}{\longrightarrow} {\bf B}_{\tprod}^{\bsum}$
of $\Im$ (a morphism of bipreorders),
and the antipolar interpretation embodied in the antipolar part
${\bf Lang}^\ast \stackrel{\Im_{\tsum}}{\longrightarrow} {\bf B}_{\tsum}^{\bprod}$
of $\Im$ (which is defined by
$\Im_{\tsum} \define \tneg{}{}{} \cdot (\Im_{\tprod})^{\rm coop} \cdot (\tneg{B}{}{})^{\rm coop}$).
$\Im_{\tprod}$ preserves order,
since if $\beta \vdash \alpha$ for any two parallel sequents $\term{y}{\beta,\alpha}{x}$
then $\Im_{\tprod}(\beta) = \Im(\tprod \beta) \preceq \Im(\tprod \alpha) = \Im_{\tprod}(\alpha)$.
Since
$\beta \logequiv \alpha$ implies $\Im_{\tprod}(\beta) = \Im_{\tprod}(\alpha)$
for any two parallel sequents $\term{y}{\beta,\alpha}{x}$, 
there is a functor
${\bf Term}_{\tprod}^{\bsum} \stackrel{\Im_{\tprod}^{\bsum}}{\longrightarrow} {\bf B}_{\tprod}^{\bsum}$
uniquely satisfying the functorial equation $\Im_{\tprod} = \tprod(\,) \cdot \Im_{\tprod}^{\bsum}$.
The extended interpretation $\Im_{\tprod}^{\bsum}$ is the polar part of a morphism of Boolean categories
${\bf Term} \stackrel{\Im}{\longrightarrow} {\bf B}$.
The antipolar part,
using the antipolar interpretation and tensor sum terms,
is defined by
$\Im_{\tsum}^{\bprod} \define \tneg{}{}{} \cdot (\Im_{\tprod}^{\bsum})^{\rm coop} \cdot (\tneg{B}{}{})^{\rm coop}$.
The entailment quotient and the term category
define the fundamental classical structure $\pair{[\,]}{{\bf Term}}$,
whose extended interpretation is the identity functor ${[\,]}_{\tprod}^{\bsum} = {\rm Id}_{\rm Term}$.
\begin{Theorem}
   The Boolean category {\bf Term} is free (w.r.t the connectives) over the language (type-atom graph) {\bf Lang}.
\end{Theorem}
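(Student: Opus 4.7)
The plan is to establish the universal property directly: given any classical structure $\pair{\Im}{{\bf B}}$, meaning any Boolean category ${\bf B}$ together with a graph morphism ${\bf Lang} \stackrel{\Im}{\longrightarrow} {\bf B}$ preserving types, negation of literals, identities, zeroes, ones, boolean products and sums, and tensor products and sums, I would exhibit a unique morphism of Boolean categories ${\bf Term} \stackrel{\hat{\Im}}{\longrightarrow} {\bf B}$ extending $\Im$ along the quotient morphism ${\bf Lang} \to {\bf Term}$ (the composite of inclusion into ${\bf Lang}^\ast$ as length-one sequents with the quotient functor $\langle \tprod, \tsum \rangle$).

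First I would extend $\Im$ recursively from literals to all term formulas of ${\bf Lang}$: for types set $\Im(x)$ as given; for composite formulas define $\Im(\beta \tprod \alpha) \define \Im(\beta) \tprod \Im(\alpha)$ and similarly for $\tsum, \bprod, \bsum$; and for tensor negation of composites use the DeMorgan clauses from the recursive definition of $\tneg{}{}{}$ on ${\bf Lang}$, which agree with applying $\tneg{}{}{}$ in ${\bf B}$ by the involution and DeMorgan laws in any Boolean category. This yields a well-defined map from raw term formulas to ${\bf B}$-arrows that respects all formation operations on the nose.

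The main step, and the principal obstacle, is to show that this extension is invariant under entailment equivalence $\logequiv$ so that it descends to quotient terms; that is, if $\alpha \vdash \beta$ is derivable in the entailment version of the calculus of Tables~\ref{vertterm} and \ref{horizterm}, then $\Im(\alpha) \preceq \Im(\beta)$ in ${\bf B}$. I would do this by induction on the derivation: reflexivity and transitivity use that ${\bf B}$-homsets are partial orders; contravariance uses that tensor negation in ${\bf B}$ is order-reversing; the bottom rule and the boolean $\bsum$-rules use that each homset of ${\bf B}$ has a bottom $0$, upper bounds, and least upper bounds from the join bisemilattice ${\bf B}_{\tprod}^{\bsum}$; the tensor identity and monotonicity rules use that $\tprod$ in ${\bf B}$ is unital at types and monotonic on both sides; and the orthogonality-entailment axiom together with the orthogonality definition axiom both hold in ${\bf B}$ because ${\bf B}$ is a Boolean category satisfying $s \orthog{\tprod} r$ iff $s \preceq \tneg{}{r}{}$, iff ($s \tprod r \preceq x$ and $r \tprod s \preceq y$). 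Care is needed to interpret sequents through $\tprod$ (or dually $\tsum$), invoking the results on polar entailment $\vdash_{\tprod}$ already established, and to use the lax functoriality (bottleneck) lemma for any step where associativity of the classical tensors is implicitly invoked.

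Soundness in hand, the induced map $\hat{\Im}\morph {\bf Term} \rightarrow {\bf B}$ is well-defined on equivalence classes. By construction it preserves $\tprod, \tsum, \bprod, \bsum, 0, 1, {\rm Id}$, and tensor negation, so it is a morphism of polarized bisemilattices and hence of Boolean categories. Uniqueness is then automatic: any morphism of Boolean categories ${\bf Term} \rightarrow {\bf B}$ extending $\Im$ must coincide with $\hat{\Im}$ on types and atoms, and hence on every term formula by structural induction, since the Boolean category operations on ${\bf Term}$ were defined via representatives precisely so that the connectives act freely on the literal generators.
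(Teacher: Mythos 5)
Your proposal is correct, and it follows the same universal-property route that the paper (implicitly) takes: the paper states this theorem without a separate proof, but the paragraph preceding it already constructs, for any classical structure $\pair{\Im}{{\bf B}}$, the unique functor $\Im_{\tprod}^{\bsum}$ satisfying $\Im_{\tprod} = \tprod(\,) \cdot \Im_{\tprod}^{\bsum}$, which is exactly your $\hat{\Im}$ descending along the quotient ${\bf Lang}^\ast \rightarrow {\bf Term}$. The one substantive difference is where the soundness burden sits. The paper's definition of a classical structure already requires the graph morphism $\Im$ to preserve entailment order (along with negation, identities, booleans and tensors on all composite formulas), so descent to the entailment quotient is immediate and the rule-by-rule verification is deferred to the separately stated Soundness Theorem. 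You instead take only the type-and-literal assignment as the given datum and carry out the induction on derivations inside the freeness argument. Your reading proves a marginally stronger statement (freeness over the bare type-atom graph, rather than a factorization property of classical structures), at the cost of re-verifying each inference rule in ${\bf B}$; both readings are consistent with the theorem's phrasing, and your rule-by-rule checks are the right ones --- in particular you correctly note that the orthogonality-entailment and orthogonality-definition axioms are available in ${\bf B}$ by the definition of Boolean category, which is precisely what separates this freeness claim from the corresponding one for typed linear logic.
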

An orthogonality assertion $\beta \orthog{} \alpha$,
for two opposed sequents $\term{y}{\alpha}{x}$ versus $\opterm{y}{\beta}{x}$,
is ({\em tensorially\/}) {\em valid\/} in a structure $\Im$
when the orthogonality $\Im(\beta) \orthog{} \Im(\alpha)$ holds in the Boolean category ${\bf B}$. 
As a special case,
a endosequent $\term{x}{\alpha}{x}$ is valid in $\Im$ when $\Im(\alpha) \preceq \Im(x)$.
A {\em tautology\/} is an orthogonality assertion $\beta \orthog{} \alpha$ which is valid in any classical structure.
\begin{Theorem}[Soundness]
   The basic calculus for dialectical logic is sound w.r.t. validity in classical structures.
\end{Theorem}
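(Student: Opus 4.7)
The plan is a routine induction on the length of a derivation of an orthogonality assertion $\beta \orthog{} \alpha$ in the basic calculus, showing that for every classical structure $\pair{\Im}{{\bf B}}$ the interpreted assertion $\Im(\beta) \orthog{} \Im(\alpha)$ holds in ${\bf B}$. Since sequents are interpreted polarly via the tensor product term (and both sides of the orthog-entail axiom are preserved under this translation by the DeMorgan's law $\tneg{}{(\tprod\alpha)}{} = \tsum(\vecneg{\alpha})$), it suffices to work with the equivalent entailment version of the rules at the level of quotient-term representatives, and then to verify that $\Im$ respects each primitive construct used in a proof.

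The induction base consists of the axiom schemata (logical axiom, zero, identity) together with the bottom and upper bound rules. Each of these is an immediate consequence of a Boolean-category law already recorded in the excerpt: the logical axiom $\alpha \orthogonal \tneg{}{\alpha}{}$ is the orthogonality-entailment axiom applied to $\Im(\alpha) \preceq \Im(\alpha) = \tneg{}{\tneg{}{\Im(\alpha)}{}}{}$; the zero rule is the fact that $0_{yx}$ is the bottom of ${\bf B}[y,x]$; and the identity sequent reduces after interpretation to the unital laws $y \tprod r \logequiv r \logequiv r \tprod x$ (and dually for $\tsum$) that hold in any Boolean category. For the inductive step, each non-axiomatic rule translates to a structural fact about ${\bf B}$: symmetry is the symmetry of $\orthog{\tprod}$ in ${\bf B}$; cut follows from transitivity of $\preceq$ together with $\tneg{}{\tneg{}{r}{}}{} = r$, so that $\Im(\alpha) \preceq \Im(\beta) \preceq \tneg{}{\Im(\gamma)}{}$; contravariance is the involutive, order-reversing nature of $\tneg{}{}{}$; the $\bprod$ and $\bsum$ rules are the universal properties of homset meet and join; and the $\tprod\tsum$/monotonicity rule is the monotonicity of $\tprod$ (and, equivalently under negation, of $\tsum$) in each argument. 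The orthog-entail and orthogonality-definition rules are the two defining axioms of a Boolean category and are preserved on the nose.

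The one step that is slightly more than bookkeeping is the cut rule, which is the natural place for the argument to break if the interpretation of negation is mishandled. The hypothesis $\alpha \orthogonal \tneg{}{\beta}{}$ yields, via orthog-entail and the fact that $\Im$ preserves $\tneg{}{}{}$, the inequality $\Im(\alpha) \preceq \tneg{}{\Im(\tneg{}{\beta}{})}{} = \tneg{}{\tneg{}{\Im(\beta)}{}}{} = \Im(\beta)$ in ${\bf B}$; combined with $\Im(\beta) \preceq \tneg{}{\Im(\gamma)}{}$ from the second hypothesis and transitivity of $\preceq$, we obtain $\Im(\alpha) \preceq \tneg{}{\Im(\gamma)}{}$, which is exactly $\Im(\alpha) \orthog{} \Im(\gamma)$ by another application of orthog-entail. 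The remaining rules are handled analogously, so no truly difficult obstacle arises: the theorem reduces to the observation that the inference system was designed to mirror, rule-for-rule, the axioms of a Boolean category, and that the definition of a classical structure requires $\Im$ to preserve every ingredient appearing in those rules.
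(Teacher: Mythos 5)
Your proof is correct: the paper itself states Soundness without proof (only Completeness receives a written argument), and the routine induction you give---reducing each axiom and rule, via the orthogonality--entailment axiom and the preservation properties built into the definition of a classical structure, to a law of Boolean categories---is exactly the argument the paper treats as implicit, with the surrounding text already noting that the inference rules mirror the Boolean-category axioms rule for rule. The only quibble is organizational: in the orthogonality version the $\bprod$ and $\bsum$ rules have premises and so belong to the inductive step rather than the base, but this does not affect the substance.
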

\begin{Theorem}[Completeness]
   The basic calculus for dialectical logic is complete w.r.t. validity in classical structures.
\end{Theorem}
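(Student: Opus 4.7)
The plan is to use the fundamental classical structure $\pair{[\,]}{{\bf Term}}$ as a generic model, in the usual style of a Lindenbaum--Tarski/term-model completeness proof. Because {\bf Term} has already been shown to be a Boolean category, and because the quotient graph morphism ${\bf Lang} \stackrel{[\,]}{\longrightarrow} {\bf Term}$ preserves negation, identities, entailment order, the booleans, and the tensors by construction of the quotient operations, $\pair{[\,]}{{\bf Term}}$ is a bona fide classical structure. So I would first state and verify this fact as the single reusable lemma underlying the argument: the interpretation $[\,]$ sends literals and $0,1$ in {\bf Lang} to their equivalence classes, and the induction on term formulas shows that it commutes with $\tneg{}{}{}$, $\tprod$, $\tsum$, $\bprod$, and $\bsum$, while $[\alpha] \vdash [\beta]$ holds in ${\bf Term}$ iff $\alpha \vdash \beta$ is provable.

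Next, take an arbitrary tautology $\beta \orthog{} \alpha$ (for opposed sequents $\term{y}{\alpha}{x}$ versus $\opterm{y}{\beta}{x}$). By definition of tautology, it is valid in every classical structure, and so in particular in $\pair{[\,]}{{\bf Term}}$. Using the polar interpretation $\Im_{\tprod}$ defined at the end of the soundness/completeness setup, validity in the term structure unfolds to $[\tprod \beta] \orthog{\tprod} [\tprod \alpha]$ in ${\bf Term}$. Since the tensor-term morphism $\langle \tprod, \tsum \rangle$ produces representatives which are entailment-equivalent to the original sequents (the polar entailment equivalence $\logequiv_{\tprod}$ coincides with $\logequiv$ on terms), this is equivalent to $[\beta] \orthog{} [\alpha]$ in the quotient pole ${\bf Term}_{\tprod}^{\bsum}$.

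Finally, I would invoke the orthogonality-entailment axiom in the Boolean category ${\bf Term}$: $[\beta] \orthog{} [\alpha]$ iff $[\beta] \vdash [\tneg{}{\alpha}{}]$, and unwind the quotient to conclude $\beta \vdash \tneg{}{\alpha}{}$, whence by the \textbf{orthog}-\textbf{entail} rule of the calculus, $\beta \orthog{} \alpha$ is provable. Thus every tautology is derivable, which is completeness.

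The main obstacle I anticipate is the bookkeeping at the interface between sequents and terms in the interpretation step: one has to confirm that the extension from ${\bf Lang}$ to ${\bf Lang}^\ast$ via $\Im_{\tprod}$, the subsequent factorization through ${\bf Term}_{\tprod}^{\bsum}$, and the corresponding antipolar factorization via $\tneg{}{}{}$ are coherent enough that ``valid in the term structure'' really coincides with ``the equivalence classes are orthogonal in {\bf Term}''. Once that identification is in place the proof is essentially a one-line reduction, because the hard structural work (that {\bf Term} is a Boolean category, that it is free over {\bf Lang}, that the orthogonality-entailment axiom holds there) has already been discharged in the preceding theorems.
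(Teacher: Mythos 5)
Your proposal is correct and follows essentially the same route as the paper: instantiate the tautology at the fundamental classical structure $\pair{[\,]}{{\bf Term}}$, so that $[\beta] \orthog{} [\alpha]$ holds in ${\bf Term}$, and then recover provability of $\beta \orthog{} \alpha$. The paper gets the last step directly from the definition of the quotient orthogonality relation ($[\beta] \orthog{} [\alpha]$ iff $\beta \orthog{} \alpha$ is provable), whereas you detour through the orthogonality-entailment axiom and the \textbf{orthog}-\textbf{entail} rule; this is harmless and equivalent.
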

\begin{proof}
   Suppose $\beta \orthogonal \alpha$ is a tautology at $x$.
   Then,
   since $\beta \orthogonal \alpha$ is valid in every classical structure,
   it is valid in the free classical structure $\pair{[\,]}{{\bf Term}}$,
   and so the orthogonality  $[\beta] \orthogonal [\alpha]$ holds in {\bf Term}.
   But by definition,
   $[\beta] \orthogonal [\alpha]$ iff $\beta \orthogonal \alpha$ is provable.
\end{proof}

\paragraph{Summary.}
In this paper we have discussed the internal process aspect of dialectical logic,
which is the logic of the flow dialectic.
In the promised extension \cite{Kent88} of this paper
we will also discuss the external object aspect of dialectical logic,
which is the logic of the flow constraint dialectic.
This external aspect involves the semantic notions of monoids (preorder objects), processes, topologies and topomonoidal structures,
and the axiomatic notions of exponentials (Girard's affirmation and consideration modalities) and quantifiers.

\appendix
\section{Subtypes}

\paragraph{Comonoids.}
For any type $x$ in a bisemilattice {\bf P} a {\em comonoid\/} $u$ at $x$,
denoted by $u \type x$,
is an endoterm $\term{x}{u}{x}$
which satisfies the ``part'' axiom (coreflexivity) $u \preceq_{x,x} x$,
stating that $u$ is a part of the type (identity term) $x$,
and the ``idempotency'' axiom (cotransitivity) $u \preceq_{x,x} u \circ u$.
A comonoid is also called an {\em interior term\/}.
Since $u \circ u \preceq x \circ u = u$,
we can replace the inequality in the idempotency axiom with the equality $u \circ u = u$.
For a functional term (adjoint pair) $\term{y}{f \dashv f^{\rm op}}{x}$ 
the composite interior endoterm $\term{x}{f^{\rm op} \circ f}{x}$ is called the {\em comonoid\/} of the functional term $f$.
This comonoid is the top comonoid $f^{\rm op} \circ f = x$
iff $f$ is an epimorphism iff $f \dashv f^{\rm op}$ is a reflective pair.
The comonoids $\term{y}{p \circ i}{y}$ of subtypes $\term{y}{i \dashv p}{x}$ 
are special $x$-comonoids which split (through $y$).
In this sense comonoids are generalized subtypes.
Comonoids of type $x$ are ordered by entailment $\preceq_x \define \preceq_{x,x}$.
The bottom endoterm $\bot_x$ is the smallest comonoid of type $x$.
The join $v \vee u$ of any two comonoids $v,u$ of type $x$ is also a comonoid of type $x$.
Denote the join semilattice of comonoids of type $x$ by $\comonoid{}{x}{}$.
We can interpret the semilattice $\comonoid{}{x}{}$ as a ``state-set'' indexed by the type $x$,
with a comonoid $u \memberof \comonoid{}{x}{}$ being a ``state'' of a system.
The state $u \memberof \comonoid{}{x}{}$ has internal structure
and is a composite object sharing an ordering of nondeterminism $\preceq_x$ with other states.

For any two comonoids $u, v \in \comonoid{}{x}{}$
the tensor product is a lower bound
$u \circ v \preceq u$ and $u \circ v \preceq v$
which is an upper bound for comonoids below $u$ and $v$:
if $w \preceq u$ and $w \preceq v$ then $w \preceq u \circ v$.
If $u$ and $v$ commute $u \circ v = v \circ u$
then the tensor product $u \circ v$ is a comonoid;
in which case it is the meet $u \circ v = u \wedge v$ in $\comonoid{}{x}{}$.
{\bf [Standardization property:]}
the bisemilattice {\bf P} is said to be {\em locally standard\/} when
$\comonoid{}{x}{}$ is closed under tensor product for each type $x$;
that is,
when the tensor product $u \circ v$ is a comonoid
for any two comonoids $u, v \in \comonoid{}{x}{}$.
Then $\comonoid{}{x}{}$ is a lattice,
with the tensor product $v \circ u$ of two comonoids $v,u \memberof \comonoid{}{x}{}$ being the lattice meet in $\comonoid{}{x}{}$,
and the tensor product identity (or type) endoterm $x$ being the largest comonoid of type $x$.
Furthermore,
the meet distributes over the join.
We assume that any join bisemilattice {\bf P} is locally standard.
This standardization property means that
the local contexts (monoidal semilattices) of comonoids $\{ \comonoid{}{x}{} \mid x \mbox{ a type} \}$ are standard contexts (distributive lattices).

In a complete Heyting category {\bf H} an endoterm $\term{x}{p}{x}$
contains a largest comonoid of the same type $x$,
called the {\em interior\/} of $p$ and denoted by $\interior{p}$.
The interior is defined as the join
$\interior{p} \define \bigvee \{ w \memberof \comonoid{}{x}{} \mid w \preceq_x p \}$,
and satisfies the condition
$w \preceq_x p$ iff $w \preceq_x \interior{p}$
for all comonoids $w \memberof \comonoid{}{x}{}$.
In an arbitrary join bisemilattice {\bf P},
we use this condition to define (and to assert the existence of) the interior of endoterms.
The interior $\interior{p}$, 
when it exists,
is the largest generalized {\bf P}-subtype inside $p$.
The interior of endoterms models the ``affirmation modality'' of linear logic \cite{Girard}.
Any comonoid $w \memberof \comonoid{}{x}{}$ is its own interior $\interior{w} = w$.
Without the local standardization assumption,
meets would still exist in $\comonoid{}{x}{}$:
the interior of the tensor product is the meet
$\interior{(u \circ v)} = u \wedge v = \interior{(v \circ u)}$.

We are especially interested in join bisemilattices {\bf P}
for which any {\bf P}-endoterm has such an interior.
Such bisemilattices can be called interior (or affirmation) bisemilattices.
A join bisemilattice {\bf P} is an {\em interior bisemilattice\/}
when at each type $x$ the inclusion-of-comonoids monotonic function
$\comonoid{}{x}{} \stackrel{{\rm Inc}_x}{\longrightarrow} {\bf P}[x,x]$
has a right adjoint
${\bf P}[x,x] \stackrel{\interior{(\,)}}{\rightarrow} \comonoid{}{x}{}$
called {\em interior\/},
which with inclusion forms a coreflective pair of monotonic functions ${\rm Inc}_x \dashv \interior{(\,)}$.
Composition $\interior{(\,)} \cdot {\rm Inc}_x$ is an general interior operator on endoterms.
Any meets that exist in ${\bf P}[x,x]$
are preserved by interior
$\interior{(p \wedge q)} = \interior{p} \circ \interior{q}$
for endoterms $p,q \memberof {\bf P}[x,x]$,
since interior is a right adjoint.
In an interior Heyting category {\bf H},
the distributive lattice of comonoids $\comonoid{}{x}{}$ at each type $x$ 
is actually a complete cartesian Heyting monoid,
which is another name for a complete Heyting algebra.
Since interiors exist,
for any two comonoids $u,v \memberof \comonoid{}{x}{}$
we can make the definition
$u \imply v \define \interior{(u \tensorimplysource v)}$. 
Then
$u \imply v
 = \interior{(u \tensorimplysource v)}
 = \interior{(v \tensorimplytarget u)}$
is a locally standard implication,
since
$w \preceq u \imply v$ iff
$w \preceq \interior{(u \tensorimplysource v)}$ iff
$w \preceq (u \tensorimplysource v)$ iff 
$u \circ w \preceq v$ iff
$w \circ u \preceq v$ iff
$w \preceq (v \tensorimplytarget u)$ iff 
$w \preceq \interior{(v \tensorimplytarget u)}$.
Comonoids in bisemilattices,
and even more strongly in interior Heyting categories,
play the role of ``localized truth values''.
Any complete Heyting category is an interior Heyting category.

In a bisemilattice {\bf P},
for each {\bf P}-adjunction (functional term) $\term{y}{f \dashv f^{\rm op}}{x}$
and each {\bf P}-comonoid $v \memberof \comonoid{}{y}{}$ at $y$,
the endoterm $\term{x}{f^{\rm op} \circ v \circ f}{x}$ is a {\bf P}-comonoid
$(f^{\rm op} \circ v \circ f) \memberof \comonoid{}{x}{}$ at $x$.
So the direct image monotonic function ${\bf P}^f$ restricts to {\bf P}-comonoids.
Denote this restriction by
$\comonoid{}{y}{} \stackrel{\Omega^{f}}{\longrightarrow} \comonoid{}{x}{}$
and call it the {\em direct image\/} also.
When {\bf P} is an interior bisemilattice,
the direct image function has a right adjoint 
$\comonoid{}{y}{} \stackrel{\Omega_{f}}{\longleftarrow} \comonoid{}{x}{}$ 
called the {\em inverse image\/} monotonic function,
and defined by
$\comonoid{f}{u}{} \define \interior{(f \circ u \circ f^{\rm op})}$
for each {\bf P}-comonoid $u \memberof \comonoid{}{x}{}$.
If we denote this adjointness by
$\comonoid{}{f}{} \define (\Omega^{f} \dashv \Omega_{f})$,
then the comonoid construction $\Omega$ is an indexed adjointness
(dialectical base)
${\bf P}^\dashv \stackrel{\Omega}{\longrightarrow} {\bf adj}$,
mapping functional {\bf P}-terms into the subcategory of {\bf adj} consisting of distributive lattices
and adjoint pairs of monotonic functions.

In subset categories $\power{{\bf C}}$ a comonoid of type $x$ 
is either the empty endoterm $\term{x}{\emptyset}{x}$ or the identity singleton $\term{x}{\{x\}}{x}$,
and these can be interpreted as the truth-values {\bf false} and {\bf true},
so that $\comonoid{}{x}{}$ is the complete Heyint algebra $\comonoid{}{x}{} \cong {\bf 2}$.
In closure subset categories $\power{{\bf P}}$
a comonoid $\term{x}{W}{x}$ of type $x$ is a closed-below subset $W \subseteq {\bf P}[x,x]$
of {\bf P}-endoterms $\term{x}{w}{x}$,
which are subparts of the identity $w \preceq x$
and which factor (possibly trivially) $w \preceq v \circ u$ into two other endoterms $v,u \memberof W$.
Since $\power{{\bf P}}$ is a cHc,
the lattice of comonoids $\comonoid{\power{{\rm P}}}{x}{}$ is also a complete Heyting algebra.
Any {\bf P}-comonoid $\term{x}{w}{x}$ is embeddable as
the $\power{{\bf P}}$-comonoid $\term{x}{\downarrow{w}}{x}$.
So we can regard $\power{{\bf P}}$-comonoids as generalized {\bf P}-comonoids
called {\em closure subset {\bf P}-comonoids\/}.

For any source and target comonoids
$v \memberof \comonoid{}{y}{}$ and $u \memberof \comonoid{}{x}{}$
the term $\term{v}{r_{vu}}{u}$ defined by $r_{vu} \define v \circ r \circ u$
is called the $(v,u)$-{\em th subterm\/} of $r$.
A {\bf P}-{\em coprocess\/} $\term{v}{r}{u}$
is a {\bf P}-term $\term{y}{r}{x}$
which satisfies the external source constraint $v \circ r \succeq_{y,x} r$
saying that $r$ restricts to the source comonoid $v \type y$,
and which satisfies the external target constraint $r \circ u \succeq_{y,x} r$
saying that $r$ corestricts to the target comonoid $u \type x$.
The source/target restriction conditions can be replaced by the two equalities
$v \circ r = r$ and $r \circ u = r$;
or by the single equality
$r_{vu} = v \circ r \circ u = r$.
Thus,
the notion of coprocess allows comonoids to function as identity arrows,
or objects,
of some category.
To make this precise we define the biposet $\comonoid{}{{\bf P}}{}$,
whose objects are {\bf P}-comonoids and whose arrows are {\bf P}-coprocesses.
Although $\comonoid{}{x}{} \subseteq {\bf P}[x,x]$,
note that $\comonoid{}{x}{} \neq {\bf P}[x,x]$,
since endoarrows exist which are not comonoids.
Given any {\bf P}-term $y \stackrel{r}{\rightarrow} x$,
let $\filtersource{r} \subseteq \comonoid{}{y}{}$
denote the collection
$\filtersource{r} \define \{ v \mid v \circ r \succeq_{y,x} r \}$
of all comonoids at the source type $y$ satisfying source restriction.
Since $\filtersource{r}$ is closed above and closed under finite meets (= tensor products)
it is a filter in the lattice $\comonoid{}{y}{}$
called the {\em source filter\/} of $r$.
Similarly,
the {\em target filter\/} $\filtertarget{r}$ of $r$ is the collection
$\filtertarget{r} \define \{ u \mid r \preceq_{y,x} r \circ u \}
                  \subseteq \comonoid{}{x}{}$
of all comonoids at $x$ satisfying target corestriction. 
Given two comonoids $v \type y$ and $u \type x$,
a term $\term{y}{r}{x}$ is a coprocess $\term{v}{r}{u}$
iff $v \memberof \filtersource{r}$ and $u \memberof \filtertarget{r}$.

Unfortunately,
the category $\comonoid{}{{\bf P}}{}$ is not as useful as one might desire;
in particular,
there is no canonical functor to the underlying category {\bf P} of types and terms
since identities are not preserved.
But by suitably weakening the constraint $v \circ r = r = r \circ u$
we get a very useful and interesting category.
A {\em Hoare triple\/} or {\em Hoare assertion\/} $v \type y \stackrel{r}{\rightarrow} u \type x$,
denoted traditionally although imprecisely by $\{v\}r\{u\}$,
consists of a ``flow specifying'' {\bf P}-term $\term{y}{r}{x}$ and two {\bf P}-comonoids,
a ``precondition'' or source comonoid $v \memberof \comonoid{}{y}{}$
and a ``postcondition'' or target comonoid $u \memberof \comonoid{}{x}{}$,
which satisfy the ``precondition/postcondition constraint''
$v \circ r \preceq r \circ u$.
Clearly,
composition of Hoare triples $\{w\}s\{v\} \circ \{v\}r\{u\} = \{w\}(s \circ r)\{u\}$ is well-defined
and $\{u\}x\{u\}$ is the identity Hoare triple at the comonoid $u \type x$.
Also,
there is a zero triple $\{v\}0_{y,x}\{u\}$ for any precondition $v \memberof \comonoid{}{y}{}$ and postcondition $u \memberof \comonoid{}{x}{}$,
and if $\{v\}r\{u\}$ and $\{v\}s\{u\}$ are two triples with the same precondition and postcondition
then $\{v\}(r \bsum s)\{u\}$ is also a triple.
So typed comonoids as objects and Hoare triples as arrows form a join bisemilattice $\Hoare{{\bf P}}$
called the {\em Hoare assertional category\/} over {\bf P}.
There is an obvious underlying type/term functor
$\Hoare{{\bf P}} \stackrel{T_P}{\longrightarrow} {\bf P}$
which is a morphism of join bisemilattices.
For each type $x$ in {\bf P},
the {\em fiber\/} over $x$ is the subcategory
$T_P^{\rm -1}(x) \subseteq \Hoare{{\bf P}}$
of all comonoids and triples which map to $x$.
The objects in $T_P^{\rm -1}(x)$ are the comonoids of type $x$
and the triples in $T_P^{\rm -1}(x)$ are of the form $\{u'\}x\{u\}$,
pairs of comonoids of type $x$ satisfying $u' \preceq u$.
Hence,
the fiber over $x$ is just the join semilattice (actually, lattice) of comonoids
$T_P^{\rm -1}(x) = \comonoid{}{x}{}$.
The axiomatics, semantics and dialectics of Hoare assertional categories and associated constructions,
and their relationship to dynamic logic,
is explored in detail in \cite{Kent89}.

\paragraph{Topotypes and Topomatrices.}
The closure subset construction $\power{{\bf P}}$ does not capture the notion of ``relational structures'' completely.
Although it introduces nondeterminism on the arrows,
it leaves the objects alone.
The notions of ``topology'' and``subtype'' can be naturally combined and locally defined in any cHc {\bf H}.
Topologies of subtypes introduce distributivity on objects.
A {\em topology of {\bf H}-comonoids\/} or {\em {\bf H}-topotype\/} $W = \pair{W}{x}$, 
denoted by $W \type x$,
is a topology $W$ in the complete lattice $\comonoid{}{x}{}$ of comonoids at $x$ regarded as a one-object subcategory of {\bf H}
(the more general notion of a {\em topology\/} in a cHc {\bf H} is discussed in \cite{Kent88});
that is,
$W$ is a collection $W \subseteq \comonoid{}{x}{}$ of comonoids of $x$,
which is closed under finite tensor products and arbitrary homset joins.
A topotype is a kind of ``power type'',
which is {\em not} imposed from without,
but arises naturally out of the mathematical structure.
Since tensor products are finite homset meets for comonoids,
a topotype $W \type x$ is just a standard topology in the complete lattice $\comonoid{}{x}{}$.
An advantage of standard topologies over general tensor product topologies
is that homset order is more directly related to topological meet.
$W$ is interpreted to be an {\em object of inner truth-values\/} at type $x$,
and its topological nature can be used to define approximation or limit structures on terms whose source or target is $x$.
Any comonoid $u \type x$ can be identified with the topotype $u = \{ \bot_x, u , x \}$.

A topomatrix is a matrix indexed by topologies.
Given two topotypes $V \type y$ and $U \type x$,
an {\bf H}-{\em topomatrix\/} $\term{V \type y}{R}{U \type x}$,
denoted by $R = (r_{vu} \mid v \memberof V, u \memberof U)$,
is a $\comonoid{}{{\bf H}}{}$-matrix
$\product{V}{U} \stackrel{R}{\rightarrow} {\rm Ar}(\comonoid{}{{\bf H}}{})$
monotonically indexed by the source and target topologies.
Monotonic indexing means that
if $v \preceq v'$ and $u \preceq u'$ then $r_{vu} \preceq r_{v'u'}$.
This monotonic indexing property is similar to
the compatibility of ordinary partial functions on the overlap of their domains of definition.
Every cHc {\bf H} has an associated {\em category of topomatrices\/} $\matropo{{\bf H}}$,
whose objects are topotypes $U \type x$,
whose arrows $\term{V \type y}{R}{U \type x}$ are topomatrices,
whose homset order is pointwise order
$(s_{vu}) \preceq (r_{vu})$ when $s_{vu} \preceq r_{vu}$ for all $v \memberof V$ and $u \memberof U$,
whose tensor product is the matrix product
$(S \circ R)_{wu} \define \bigvee_{v \in V} [s_{wv} \circ r_{vu}]$,
and whose identity at $U \type x$ is the topomatrix
$(u' \circ u = u' \wedge u \mid u',u \memberof U)$.
The join operator is a {\em join\/} functor
$\matropo{{\bf H}} \stackrel{{\displaystyle \vee}}{\rightarrow} {\bf H}$,
which maps each topotype to its underlying type $\bigvee(U \type x) = x$
and maps each $\product{V}{U}$ topomatrix $R = (r_{vu})$ to its {\em join term\/}
$\bigvee R = \bigvee_{v \in V,u \in U} r_{vu}$,
the join of all the coprocess entries in $R$.
The $(V,U)$-th component of the join functor $\bigvee$ is a {\em join\/} join-continuous monotonic function
$\matropo{{\bf H}}[V \type y,U \type x] \stackrel{{\displaystyle \vee}_{V,U}}{\longrightarrow} {\bf H}[y,x]$.
The category of comonoids $\comonoid{}{{\bf H}}{}$ can be embedded
$\comonoid{}{{\bf H}}{} \stackrel{{\rm Inc}}{\longrightarrow} \matropo{{\bf H}}$
into the category of topomatrices $\matropo{{\bf H}}$ by
${\rm Inc}(u \type x) = \{ \bot,u,x \} \type x$
and
${\rm Inc}(\term{v \type y}{r}{u \type x})
 = \{ (\bot,\bot,\bot),(\bot,\bot,u),(\bot,\bot,x),(v,\bot,\bot),(y,\bot,\bot) \}
   \cup \singleton{(v,r,u)}
   \cup \{ (v,r,x),(y,r,u),(y,r,x) \}$.
The composition of comonoid embedding with join is the underlying type functor
${\rm Inc} \cdot \bigvee = U_H$.
The restriction of the comonoid-as-topology embedding to identity comonoids
defines the {\em indiscrete-topology functor\/}
${\bf H} \stackrel{\singleton{}}{\longrightarrow} \matropo{{\bf H}}$,
where $\singleton{x} = \{ \bot,x \} \type x$
and $\singleton{r} = \{ (\bot,\bot,\bot),(\bot,\bot,x),(y,\bot,\bot) \}
                     \cup \singleton{(y,r,x)}$.
This functor is clearly fully-faithful,
since for two fixed types $y$ and $x$,
there is a bijection ${\bf H}[y,x] \cong \matropo{{\bf H}}[\singleton{y},\singleton{x}]$.
Also,
$\singleton{} \cdot \bigvee = {\rm Id}_H$.
This implies that the join functor is surjective on objects.

\paragraph{A Representation Theorem.}
Let $V \type y$ and $U \type x$ be any two {\bf H}-topotypes,
and let $\term{y}{r}{x}$ be any {\bf H}-term.
The topomatrix $\term{V \type y}{\decomposition{r}{V}{U}}{U \type x}$ defined by
$\decomposition{r}{V}{U} \define \left( \term{v}{r_{vu}}{u} \mid v \memberof V, u \memberof U \right)$,
where $r_{vu} \define v \circ r \circ u$ is the $(v,u)$-th subterm of $r$,
is called the {\em decomposition matrix\/} of $r$.
Such decompositions,
especially w.r.t. topological bases of comonoids,
give an internal representation of cHc's as distributor-like categories. 
This defines a {\em decomposition\/} join-continuous monotonic function
${\bf H}[y,x] \stackrel{\#_{V,U}}{\longrightarrow} \matropo{{\bf H}}[V \type y,U \type x]$,
where $\#_{V,U}(r) \define \decomposition{r}{V}{U}$.
Moreover,
any {\bf H}-term $\term{y}{r}{x}$ is recoverable from its decomposition matrix $\decomposition{r}{V}{U}$ by applying the join functor
$\bigvee_{V,U}(\#_{V,U}(r)) = \bigvee_{V,U}(\decomposition{r}{V}{U}) = \bigvee_{v \in V, u \in U} r_{v,u} = r$.
This means that the join functor is full (surjective on arrows).
Conversely,
an {\bf H}-topomatrix $\term{V \type y}{R}{U \type x}$ is recoverable from its join term $\bigvee R$ by applying the partition function
$\#_{V,U}(\bigvee_{V,U}(R)) = R$.
This means that the join functor is faithful (injective on arrows).
So for two fixed topotypes $V \type y$ and $U \type x$, 
the decomposition and join monotonic functions are inverse to each other,
and define an isomorphism
${\bf H}[y,x] \cong \matropo{{\bf H}}[V \type y,U \type x]$. 
\begin{Lemma}
   The join functor $\matropo{{\bf H}} \stackrel{{\displaystyle \vee}}{\rightarrow} {\bf H}$ is fully-faithful,
   and a surjection on objects.
\end{Lemma}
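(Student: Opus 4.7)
The plan is to exploit the decomposition/recovery machinery set up immediately before the lemma statement, which already supplies all three ingredients. First I would dispose of surjectivity on objects by invoking the indiscrete-topology functor $\singleton{}$: the identity $\singleton{} \cdot \bigvee = {\rm Id}_H$ noted earlier exhibits $\singleton{}$ as a section of $\bigvee$ on objects, so every {\bf H}-type $x$ is the join $\bigvee \singleton{x}$ of some topotype. This makes the join functor a (split) surjection on objects.

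For fullness, I would fix topotypes $V \type y$ and $U \type x$ and run the hom-component argument. Given any {\bf H}-term $\term{y}{r}{x}$, form its decomposition matrix $\decomposition{r}{V}{U} = (v \circ r \circ u \mid v \memberof V, u \memberof U)$. This really is a $\matropo{{\bf H}}$-arrow: the coprocess constraints $v \circ r_{vu} = r_{vu} = r_{vu} \circ u$ hold because comonoids are idempotent ($v \circ v = v$), and monotonic indexing $v \preceq v'$, $u \preceq u'$ implies $r_{vu} \preceq r_{v'u'}$ is immediate from monotonicity of tensor product. The identity $\bigvee_{V,U}(\decomposition{r}{V}{U}) = \bigvee_{v,u} v \circ r \circ u = r$ (noted just before the statement, and justified because the top elements $y \memberof V$ and $x \memberof U$ yield the term $y \circ r \circ x = r$) then shows the hom-map $\bigvee_{V,U}$ hits $r$.

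For faithfulness I would establish the reverse composite $\#_{V,U} \cdot \bigvee_{V,U} = {\rm Id}$ on topomatrices. Starting from $R = (r_{vu})$, compute the $(v,u)$-th subterm of $\bigvee R$ using the complete distributivity of tensor product over joins in the cHc: $v \circ (\bigvee_{v',u'} r_{v'u'}) \circ u = \bigvee_{v',u'} (v \circ r_{v'u'} \circ u)$. Applying the coprocess equalities $v' \circ r_{v'u'} = r_{v'u'} = r_{v'u'} \circ u'$, each summand rewrites as $(v \circ v') \circ r_{v'u'} \circ (u' \circ u)$, and $v \circ v' = v \wedge v'$ (locally standard comonoid meet) together with monotonic indexing forces this to be $\preceq r_{vu}$, with equality attained at $v' = v$, $u' = u$. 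The join therefore collapses to $r_{vu}$, so $\#_{V,U}(\bigvee_{V,U}(R)) = R$. Combined with the converse equality above, $\#_{V,U}$ and $\bigvee_{V,U}$ are mutually inverse order isomorphisms of homsets, which in particular gives faithfulness.

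The main obstacle I anticipate is precisely the collapse of the double join in the faithfulness step: one must handle the interaction between the comonoid meet $v \circ v'$ and the monotonic indexing of $R$ carefully to confirm that the off-diagonal contributions are dominated by $r_{vu}$ rather than producing strictly larger terms. Continuity of $\circ$ over arbitrary joins (the closedness of {\bf H}) and local standardness of comonoids are what make this computation go through; once they are applied in the right order the argument is essentially forced, and everything else reduces to unpacking definitions already catalogued above.
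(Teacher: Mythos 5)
Your route is exactly the paper's: object-surjectivity from the section $\singleton{} \cdot \bigvee = {\rm Id}_H$, fullness from the recovery $\bigvee_{V,U}(\decomposition{r}{V}{U}) = r$, and faithfulness from the reverse recovery $\#_{V,U}(\bigvee_{V,U}(R)) = R$. The first two parts are sound (for fullness, your observation that the top comonoids $y \memberof V$ and $x \memberof U$ contribute the dominating summand $y \circ r \circ x = r$ is the right justification, and it uses only the convention that every topotype contains the identity comonoid).

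The faithfulness step, however, has a genuine gap at precisely the point you flag. After distributing, the contribution of the $(v',u')$-entry to the $(v,u)$-component of $\#_{V,U}(\bigvee R)$ is $(v \circ v') \circ r_{v'u'} \circ (u' \circ u)$. Monotonic indexing gives $r_{v \wedge v',\, u \wedge u'} \preceq r_{vu}$, but it gives no upper bound relating the \emph{restriction} of the $(v',u')$-entry to the corresponding entry of $R$: nothing in the stated definition of a topomatrix forces $(v \wedge v') \circ r_{v'u'} \circ (u' \wedge u) \preceq r_{vu}$. A counterexample lives in ${\bf H} = {\bf Rel}$: take $U = \singleton{x}$ indiscrete, $V$ a topology on $y$ with two overlapping opens $v_1, v_2$, and a monotone family with $r_{v_1 x} = a$, $r_{v_2 x} = b$, $r_{(v_1 \vee v_2)x} = r_{yx} = a \vee b$, where $b$ contains pairs whose first coordinate lies in $v_1 \wedge v_2$ but which do not belong to $a$. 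This satisfies monotonic indexing, yet $v_1 \circ (\bigvee R) \circ x$ strictly exceeds $r_{v_1 x}$, and $R$ and $\#_{V,U}(\bigvee R)$ are distinct topomatrices with the same join --- so $\bigvee$ fails to be injective on this homset. The recovery identity does hold on decomposition matrices, whose entries all arise as $v \circ r \circ u$ from a single term $r$; the argument (and the lemma) is rescued only by strengthening the definition of topomatrix to a genuine restriction-compatibility condition, $r_{vu} = v \circ r_{v'u'} \circ u$ whenever $v \preceq v'$ and $u \preceq u'$, which is evidently what the paper's aside about ``compatibility of partial functions on overlaps'' intends. As written, your claim that continuity of $\circ$ and local standardness force the collapse of the double join is not correct, and the paper's own unproved assertion of $\#_{V,U}(\bigvee_{V,U}(R)) = R$ shares the same defect.
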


A topomatrix $\term{V \type y}{R}{\singleton{x}}$ is called a {\em column {\bf H}-topovector\/}.
If $\term{y}{r}{x}$ is any term and $V \type y$ is a topology at $y$,
then the $V$-{\em source decomposition\/} of $r$ is the column topovector
$\term{V \type y}{\source{r}{V}}{\singleton{x}}$
defined by
$\source{r}{V} \define \left( \term{v}{r_{vx}}{x} \mid r_{vx} = v \circ r, v \memberof V \right)$. 
The $V$-{\em source cotupling\/} of a column topovector $\term{V \type y}{R}{\singleton{x}}$,
where $R$ is the $V$-indexed collection of coprocesses $\left( \term{v}{r_{vx}}{x} \mid v \memberof V \right)$,
is the {\bf H}-term $\term{y}{\cotuple{R}{V}}{x}$ defined by
$\cotuple{R}{V} \define \bigvee_{v \in V} r_{vx}$. 
The source decomposition and cotupling operations are inverse to each other,
with $\cotuple{\source{r}{V}}{V} = r$ and $\source{\cotuple{R}{V}}{V} = R$.
Dually,
a topomatrix $\term{\singleton{y}}{R}{U \type x}$ is called a {\em row {\bf H}-topovector\/}.
If $\term{y}{r}{x}$ is any term and $U \type x$ is a topology at $x$,
then the $U$-{\em target decomposition\/} of $r$ is the row topovector
$\term{\singleton{y}}{\target{r}{U}}{U \type x}$
defined by
$\target{r}{U} \define \left( \term{y}{r_{yu}}{u} \mid r_{yu} = r \circ u, u \memberof U \right)$. 
The $U$-{\em target tupling\/} of a row topovector $\term{\singleton{y}}{R}{U \type x}$,
where $R$ is the $U$-indexed collection of coprocesses $\left( \term{y}{r_{yu}}{u} \mid u \memberof U \right)$,
is the {\bf H}-term $\term{y}{\tuple{R}{U}}{x}$ defined by
$\tuple{R}{U} \define \bigvee_{u \in U} r_{yu}$.
The target decomposition and tupling operations are inverse to each other,
with $\tuple{\target{r}{U}}{U} = r$ and $\target{\tuple{R}{U}}{U} = R$.

Any topology $U \type x$ at $x$
decomposes the identity term $\term{x}{x}{x}$ in either of two ways:
as the source decomposition column topovector  
$\term{U \type x}{\iota_U}{\singleton{x}}$
defined by $\iota_U \define\mbox{ } \source{x}{U} = \left( \term{u}{u}{x} \mid u \memberof U \right)$,
or as the target decomposition row topovector  
$\term{\singleton{x}}{\pi_U}{U \type x}$
defined by $\pi_U \define\mbox{ } \target{x}{U} = \left( \term{x}{u}{u} \mid u \memberof U \right)$.
Moreover,
the identity matrix at $U \type x$ decomposes as $\iota_U \circ \pi_U$,
and the identity matrix at $\singleton{x}$ decomposes as $\pi_U \circ \iota_U$,
so that $\term{U \type x}{\iota_U}{\singleton{x}}$ and $\term{\singleton{x}}{\pi_U}{U \type x}$ are inverse topomatrices.
Since $\iota_U$ and $\pi_U$ are inverse pairs,
they are adjoint pairs in both directions
$\term{U \type x}{\iota_U \dashv \pi_U}{\singleton{x}}$
and
$\term{\singleton{x}}{\pi_U \dashv \iota_U}{U \type x}$.
So,
given any term $\term{y}{r}{x}$ and any topotypes $V \type y$ and $U \type x$,
(1) the term $r$ and its source decomposition $\source{r}{V}$ are expressible
in terms of each other via the direct and inverse left flow expressions
$\source{r}{V} = \iota_V \circ \singleton{r}
               = \pi_V \tensorimplysource \singleton{r}$
and
$\singleton{r} = \pi_V \circ \source{r}{V}
               = \iota_V \tensorimplysource \source{r}{V}$,
and
(2) the term $r$ and its target decomposition $\target{r}{U}$ are expressible
in terms of each other via the direct and inverse right flow expressions
$\target{r}{U} = \singleton{r} \circ \pi_U
               = \singleton{r} \tensorimplytarget \iota_U$
and
$\singleton{r} = \target{r}{U} \circ \iota_U
               = \target{r}{U} \tensorimplytarget \pi_U$.
Furthermore,
given any two topotypes $V \type y$ and $U \type x$,
(1) a term $\term{y}{r}{x}$ and its decomposition matrix $\#_{V,U}(r) = \decomposition{r}{V}{U}$
are expressible in terms of each other via the direct flow expressions
$r = \pi_V \circ \#_{V,U}(r) \circ \iota_U$
and
$\#_{V,U}(r) = \iota_V \circ \singleton{r} \circ \pi_U$,
and
(2) an {\bf H}-topomatrix $\term{V \type y}{R}{U \type x}$ and its join term 
$\term{y}{{\displaystyle \vee} R}{x}$
are expressible in terms of each other via the direct flow expressions 
$R = \iota_V \circ \singleton{\bigvee R} \circ \pi_U$
and
$\bigvee R = \pi_V \circ R \circ \iota_U$.

For each topotype $U \type x$
the topomatrix isomorphism $\term{\singleton{x}}{\pi_U}{U \type x}$ 
is the $(U \type x)$-th component of a ``counit'' natural isomorphism
$\pi \type \bigvee \cdot \singleton{} \Longrightarrow {\rm Id}_{\matropo{H}}$,
since $\singleton{\bigvee R} \circ \pi_U = \pi_V \circ R$.
\begin{Theorem}
   For every cHc {\bf H},
   the indiscrete-topology and join functors form a categorical equivalence
   $\singleton{} \dashv \bigvee$
   between {\bf H} and its category of topomatrices $\matropo{{\bf H}}$,
   with identity unit ${\rm Id}_H = \singleton{} \cdot \bigvee$
   and natural isomorphism counit
   $\pi \type \bigvee \cdot \singleton{} \Longrightarrow {\rm Id}_{\matropo{H}}$.
\end{Theorem}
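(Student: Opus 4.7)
The plan is to assemble the data $(\singleton{}, \bigvee, {\rm Id}_H, \pi)$ into an adjoint equivalence by checking four items: (a) $\singleton{} \cdot \bigvee = {\rm Id}_H$, which supports taking the identity as unit; (b) each component $\pi_U$ is a topomatrix isomorphism; (c) $\pi$ is natural in $U \type x$; and (d) the two triangle identities hold.

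Item (a) was recorded when the indiscrete-topology functor was defined. Item (b) is available from the paragraph preceding the theorem: the identity matrix at $U \type x$ decomposes as $\iota_U \circ \pi_U$ and the identity at $\singleton{x}$ decomposes as $\pi_U \circ \iota_U$, so $\iota_U$ and $\pi_U$ are mutually inverse topomatrices, making $\pi$ componentwise invertible.

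For naturality, I would use the recovery formula $R = \iota_V \circ \singleton{\bigvee R} \circ \pi_U$ stated just above the theorem. Precomposing with $\pi_V$ and applying $\pi_V \circ \iota_V = {\rm id}_{\singleton{y}}$ yields the naturality square $\pi_V \circ R = \singleton{\bigvee R} \circ \pi_U$, which is exactly the equation already noted in the text.

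For the triangle identities, the identity unit reduces both conditions to pointwise calculations. The first reads $\bigvee(\pi_U) = {\rm id}_x$, which holds because $\bigvee \pi_U = \bigvee_{u \in U} u = x$, using that every topotype $U \type x$ contains its underlying type $x$ as top element. The second reads $\pi_{\singleton{x}} = {\rm id}_{\singleton{x}}$; this unwinds immediately, since $\pi_{\singleton{x}} = (v \circ u \mid v, u \in \singleton{x})$ matches the identity topomatrix $(u' \wedge u \mid u', u \in \singleton{x})$ at $\singleton{x}$. The main obstacle is purely clerical: keeping the source/target indexing of row and column topovectors straight through these compositions. Since the preceding Lemma already shows that $\bigvee$ is fully faithful and surjective on objects, the existence of an adjoint equivalence is guaranteed, and only the concrete unit/counit witnesses need verification.
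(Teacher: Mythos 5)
Your proposal is correct and follows essentially the same route as the paper, which offers no separate proof but lets the theorem summarize the immediately preceding development: the identity $\singleton{} \cdot \bigvee = {\rm Id}_H$, the mutual inverseness of $\iota_U$ and $\pi_U$, the naturality equation $\singleton{\bigvee R} \circ \pi_U = \pi_V \circ R$, and the Lemma that $\bigvee$ is fully-faithful and surjective on objects. Your only addition is the explicit check of the two triangle identities (reducing, with identity unit, to $\bigvee(\pi_U) = x$ and $\pi_{\singleton{x}} = {\rm id}_{\singleton{x}}$), which the paper leaves implicit and which you verify correctly.
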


Given three topotypes $W \type z$, $V \type y$ and $U \type x$
and two terms $\term{z}{s}{y}$ and $\term{y}{r}{x}$,
the $(w,u)$-th subterm $(s \circ r)_{wu}$ is the join
$(s \circ r)_{wu} = \bigvee_{v \in V} s_{wv} \circ r_{vu}$,
so that decomposition maps tensor products of terms to products of matrices
$\decomposition{s}{W}{V} \circ \decomposition{r}{V}{U} = \decomposition{s \circ r}{W}{U}$.
Also,
the $\product{U}{U}$ decomposition matrix of the identity term $\term{x}{x}{x}$
is the identity matrix
$\decomposition{x}{U}{U} = \iota_U \circ \pi_U$,
where ${\decomposition{x}{U}{U}}_{u'u} = u' \circ u = u' \wedge u$.
The type $x$ is a direct sum of $V$-open comonoids when $x = \bigvee X$
for some collection $X \subseteq V$ of pairwise disjoint comonoids.

Let {\bf W} be a standard topology on the lattice of all {\bf H}-comonoids $\comonoid{}{{\bf H}}{}$.
{\bf W} can be partitioned into a collection of topotypes
${\bf W} = \{ {\bf W}(x) \subseteq \comonoid{}{x}{} \mid x \memberof {\rm Obj}({\bf H}) \}$.
We call such a collection {\bf W} a {\em topotypeal structure\/}.
A topotypeal structure is a ``choice functor'',
choosing a topology at each {\bf H}-type.
Topotypeal structures are a type-indexed version of Girard's topolinear spaces in linear logic.
Any topotypeal structure {\bf W} defines,
and can be identified with,
an embedding ${\bf H} \stackrel{\#_W}{\rightarrow} \matropo{{\bf H}}$,
of {\bf H} into its category of topomatrices $\matropo{{\bf H}}$
called the {\bf W}-{\em decomposition\/} of terms.
On types $\#_W = {\bf W}(x)$ is the $x$-th topotype of {\bf W},
and on terms $\#_W(r) = \decomposition{r}{W(y)}{W(x)}$
is the $\product{{\bf W}(y)}{{\bf W}(x)}$ decomposition matrix of $r$. 
Partition followed by join is the identity functor
$\#_W \cdot \vee = {\rm Id}_H$.
The indiscrete-topology inclusion functor
${\bf H} \stackrel{\singleton{}}{\longrightarrow} \matropo{{\bf H}}$
is the decomposition functor
$\singleton{} = \#_\triangle$
for the trivial topotypeal structure
$\triangle = \{ \{ \bot,x \} \subseteq \comonoid{}{x}{} \mid x \memberof {\rm Obj}({\bf H}) \}$.
For any topotypeal structure ${\bf W}$,
the {\bf W}-{\em decomposition category\/} $\matropo{{\bf W}} \subseteq \matropo{{\bf H}}$,
is the full subcategory which is the image of the ${\bf W}$-decomposition functor $\#_W$.
There is a {\bf W}-{\em join functor\/} $\matropo{{\bf W}} \stackrel{{\displaystyle \vee}_W}{\rightarrow} {\bf H}$
which is the restriction of join $\bigvee$ to ${\bf W}$-matrices $\matropo{{\bf W}}$,
and a {\bf W}-{\em decomposition functor\/} ${\bf H} \stackrel{\#_W}{\rightarrow} \matropo{{\bf W}}$
which is the corestriction of ${\bf W}$-decomposition $\#_W$ to ${\bf W}$-matrices $\matropo{{\bf W}}$.
For a fixed topotypeal structure ${\bf W}$,
these decomposition and join functors are inverse to each other.
\begin{Theorem}
   Any cHc {\bf H} is isomorphic to each of its decomposition categories:
   ${\bf H} \cong \matropo{{\bf W}}$
   for any topotypeal structure {\bf W}.
\end{Theorem}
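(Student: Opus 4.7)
The plan is to verify that $\#_W$ and $\bigvee_W$ are mutually inverse functors between ${\bf H}$ and $\matropo{{\bf W}}$. First I would check functoriality of ${\bf H}\stackrel{\#_W}{\rightarrow}\matropo{{\bf W}}$ using the two identities established in the discussion just before the theorem: the identity-decomposition $\decomposition{x}{W(x)}{W(x)} = \iota_{W(x)} \circ \pi_{W(x)}$ witnesses preservation of identities (the right-hand side is the identity topomatrix at ${\bf W}(x)\type x$), and the matrix-product decomposition $\decomposition{s}{W(z)}{W(y)} \circ \decomposition{r}{W(y)}{W(x)} = \decomposition{s\circ r}{W(z)}{W(x)}$ witnesses preservation of composition. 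The functor $\bigvee_W$ is already a functor, being the restriction of the global join functor $\bigvee$ to the subcategory $\matropo{{\bf W}}$.

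Next I would establish the two round-trip identities. The composite $\#_W\cdot\bigvee_W = {\rm Id}_{{\bf H}}$ reduces on types to $\bigvee({\bf W}(x)\type x) = x$ and on terms to the inversion identity $\bigvee_{W(y),W(x)}(\decomposition{r}{W(y)}{W(x)}) = r$ noted earlier. For the reverse composite I would exploit the definition of $\matropo{{\bf W}}$ as the image subcategory of $\#_W$: every object has the form $\#_W(x) = {\bf W}(x)\type x$ and every arrow has the form $\#_W(r) = \decomposition{r}{W(y)}{W(x)}$ for some ${\bf H}$-term $\term{y}{r}{x}$. Then $\#_W(\bigvee_W(\#_W(r))) = \#_W(r)$ by the identity just verified, yielding $\bigvee_W\cdot\#_W = {\rm Id}_{\matropo{{\bf W}}}$.

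The main point of care here, rather than any computational obstacle, lies in confirming \emph{strict} (as opposed to pseudo-) inverseness on the image subcategory. Surjectivity of $\#_W$ onto $\matropo{{\bf W}}$ is automatic by the image definition; injectivity on objects follows because the topotype ${\bf W}(x)\type x$ records its underlying type $x$; and injectivity on arrows follows from $\bigvee\cdot\#_W = {\rm Id}$. Bijectivity on both objects and arrows together with functoriality in both directions delivers the strict isomorphism ${\bf H}\cong\matropo{{\bf W}}$, sharpening the categorical equivalence $\singleton{}\dashv\bigvee$ of the preceding theorem (which is the particular case ${\bf W} = \triangle$).
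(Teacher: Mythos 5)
Your proposal is correct and follows exactly the route the paper intends: the theorem is stated as an immediate consequence of the preceding observations that $\#_W$ and $\bigvee_W$ are mutually inverse (the identity $\#_W \cdot \bigvee = {\rm Id}_{\bf H}$, the hom-set isomorphisms ${\bf H}[y,x] \cong \matropo{{\bf H}}[V \type y, U \type x]$, and the facts that decomposition preserves composition and identities), and your write-up simply assembles these into an explicit verification. The only addition you make — checking strictness of the inverse on the image subcategory — is a reasonable tightening of what the paper leaves implicit, not a different argument.
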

So each topotypeal structure {\bf W} defines a representation of the cHc {\bf H}
inside of its category of topomatrices $\matropo{{\bf H}}$;
namely,
$\matropo{{\bf W}}$.

\paragraph{Flow Decomposition.}
For any cHc {\bf H},
in the category of {\bf H}-topomatrices $\matropo{{\bf H}}$
source and target tuplings are related to direct and inverse flow by the identities
   \begin{center}
      $\begin{array}{r@{\;=\;}lp{3.3in}}
          \tuple{\left( t_{zv} \mid v \memberof V \right)}{V} \circ \cotuple{\left( r_{vx} \mid v \memberof V \right)}{V}
             & \bigvee_{v \in V} \left( t_{zv} \circ r_{vx} \mid v \memberof V \right)
             & ``{\em right tensor product along $V$-source tupling}'' \\
          t \circ \tuple{\left( r_{yu} \mid u \memberof U \right)}{U}
             & \tuple{\left( t \circ r_{yu} \mid u \memberof U \right)}{U}
             & ``{\em right tensor product along $U$-target tupling}'' \\
          \cotuple{\left( r_{vx} \mid v \memberof V \right)}{V} \circ s
             & \cotuple{\left( r_{vx} \circ s \mid v \memberof V \right)}{V}
             & ``{\em left tensor product along $V$-source tupling}'' \\
          \tuple{\left( r_{yu} \mid u \memberof U \right)}{U} \circ \cotuple{\left( s_{uz} \mid u \memberof U \right)}{U}
             & \bigvee_{u \in U} \left( r_{yu} \circ s_{uz} \mid u \memberof U \right)
             & ``{\em left tensor product along $U$-target tupling}'' \\
          s \tensorimplytarget \cotuple{\left( r_{vx} \mid v \memberof V \right)}{V}
             & \tuple{\left( s \tensorimplytarget r_{vx} \mid v \memberof V \right)}{V}
             & ``{\em right tensor implication along $V$-source tupling}'' \\
          \tuple{\left( s_{zu} \mid u \memberof U \right)}{U} \tensorimplytarget \tuple{\left( r_{yu} \mid u \memberof U \right)}{U}
             & \bigwedge_{u \in U} \left( s_{zu} \tensorimplytarget r_{yu} \mid u \memberof U \right)
             & ``{\em right tensor implication along $U$-target tupling}'' \\
          \cotuple{\left( r_{vx} \mid v \memberof V \right)}{V} \tensorimplysource \tuple{\left( t_{vz} \mid v \memberof V \right)}{V}
             & \bigwedge_{v \in V} \left( r_{vx} \tensorimplytarget t_{vz} \mid v \memberof V \right)
             & ``{\em left tensor implication along $V$-source tupling}'' \\
          \tuple{\left( r_{yu} \mid u \memberof U \right)}{U} \tensorimplysource t
             & \cotuple{\left( r_{yu} \tensorimplysource t \mid u \memberof U \right)}{U}
             & ``{\em left tensor implication along $U$-target tupling}''
       \end{array}$
   \end{center}
These identities reduce the action of direct and inverse term flow to components.

\section{Dialectical Reproduction.}

We work in a Heyting category {\bf H},
and assume the existence of a special type $1$ which is a {\em separator\/} of terms in the following sense:
for any two parallel terms $\term{y}{s,r}{x}$,
if $\psi \circ s = \psi \circ r$ for all terms $\term{1}{\psi}{y}$ then $s = r$.
A term $\term{1}{\phi}{x}$ is called an {\em object\/} of type $x$,
and denoted by $\phi \objtype x$.
In relational database theory,
where the Heyting category {\bf H} is the category
of monoids and processes \cite{Kent88}
of closed subsets of $\Sigma$-terms,
a monoid $m \type x$ ({\bf H}-type) represents a constrained database scheme consisting of database scheme $x$ and semantic constraints $m$,
and an $m$-object is a database which satisfies that scheme and those semantic constraints.
In the general theory of dialectics,
two possible meanings for ``entities in dialectical motion'' are
(1) {\em comonoids\/} $u \memberof \comonoid{}{x}{}$; and
(2) {\em objects\/}   $\term{1}{\phi}{x}$.
Here we discuss the flow of objects in more detail.
In a succeeding paper \cite{Kent89} we will discuss the flow of comonoids,
and we will also discuss the important notion of transformation between these two kinds of entities.

Let ${\bf Obj}(x)$ denote the lattice of all objects of type $x$
with object order $\preceq_x \define \preceq_{1,x}$;
that is,
${\bf Obj}(x) = {\bf H}[1,x]$.
Terms define a dialectical (bidirectional) flow of objects which is expressed in terms of tensor product and implication:
for any term $\term{y}{r}{x}$
    let ${\bf Obj}^r = (\,) \circ r$ denote right tensor product by $r$,
and let ${\bf Obj}_r = (\,) \tensorimplytarget r$ denote right tensor implication by $r$.
So ${\bf Obj}^r$ is the right direct flow and ${\bf Obj}_r$ is the right inverse flow of $r$. 
We identify this dialectical flow of objects as the {\em behavior\/} of the term $r$.
The separator rule states that terms are distinguished (and can be identified) by their direct flow behavior.
Direct flow $\scriptbf{Obj}(y) \stackrel{\scriptbf{Obj}^r}{\longrightarrow} {\bf Obj}(x)$
and inverse flow ${\bf Obj}(y) \stackrel{\scriptbf{Obj}_r}{\longleftarrow} {\bf Obj}(x)$
are monotonic functions,
and the dialectical axioms state that these form an adjoint pair ${\bf Obj}^r \dashv {\bf Obj}_r$. 
As noted before direct flow is ``functorial'',
${\bf Obj}^{s \circ r} = {\bf Obj}^s \cdot {\bf Obj}^r$ 
and
${\bf Obj}^x = {\rm Id}_{\scriptbf{Obj}(x)}$,
and inverse flow is ``contravariantly functorial'',
${\bf Obj}_{s \circ r} = {\bf Obj}_r \cdot {\bf Obj}_s$
and
${\bf Obj}_x = {\rm Id}_{\scriptbf{Obj}(x)}$.
In summary,
if we combine the adjoint pairs as
${\bf Obj}(r) = ({\bf Obj}^r \dashv {\bf Obj}_r)$,
then the above laws and rules are equivalent to the statement that
the object concept or {\em flow dialectic\/} is functorial
${\bf H} \stackrel{\scriptbf{Obj}}{\longrightarrow} {\bf adj}$,
mapping types to their object lattice and terms to their behavior.
This is the sense in which terms specify the dialectical motion of objects.

So tensor product defines the {\em direct aspect\/} of term flow,
whereas tensor implication defines the {\em inverse aspect\/}.
As is clear now (manifested by the doubling of implication) and more clear latter
(however, see Kelley's development of tensors using hom-objects),
the direct aspect of flow is the principal aspect.
This notion of principal aspect seems to occur often in applied dialectics. 
We develop here the full theory of dialectical terms.
However, an interesting and coherent {\em direct subtheory\/} of terms,
using only the direct aspect of flow,
is included.
This direct subtheory seems to include much of traditional process theory,
but is impoverished by not having the concept of inverse flow.

Since the behavior of terms is identified with (dialectical) flow,
either direct flow or inverse flow,
one means of interaction/communication between terms is by flow composition.
If we make the identification ``types $\equiv$ ports'',
then terms communicate through their source and/or target ports.
A parallel pair of terms $\term{y}{s,r}{x}$,
a graph in a Heyting category,
is known as a {\em dialectical system\/}. 
The dialectical interaction (complementary union)
of the component terms of a dialectical system
occurs through both source and target ports.
The notion of {\em reproduction\/} in a system is specified by the dialectical flow (fixpoint operator)
$\yinyang_r^s(\,) = ((\,) \tensorimplytarget r) \circ s$.
This reproduction operator can be interpreted as the ``polar-turning structure'' of the preSocratic Greek philosopher Heraclitus \cite{Hussey},
and in Greek is rendered 
$\pi \alpha \lambda \iota \nu \tau \rho o \pi o \zeta$ $\alpha \rho \mu o \nu \iota \eta$.
An object $\phi$ is {\em reproduced\/} when it satisfies the fixpoint equation $\yinyang_r^s(\phi) = \phi$.
[A philosophical note:
The notion of complementary union
(two working together in one)
is not that of ``synthesis''.
Neither of the opposites is ``transformed''.
Indeed,
with synthesis,
dialectical motion would cease! 
The notion of ``reproduction'' is one of equilibrium of motion,
not lack of motion.]
Here the yin-yang symbol $\yinyang_r^s$ is used as a reminder of ancient dialectics;
{\em yin\/} inverse flow along $r$ and {\em yang\/} direct flow along $s$.
Starting with (quotient) objects at the source type,
there is a op-dual ``reverse time'' yin-yang fixpoint operator $(s \circ (r \tensorimplysource (\,))$.
There are also yang-yin operators with direct flow first and reverse flow last.
To claim a type of uniqueness for reproduced objects $\phi$ we can use:
the {\bf least fixpoint rule} 
$\yinyang_r^s(\phi) = \phi$, and if $\yinyang_r^s(t) = t$ then $\phi \preceq t$;
or the {\bf greatest fixpoint rule}
$\yinyang_r^s(\phi) = \phi$, and if $\yinyang_r^s(t) = t$ then $t \preceq \phi$.
The system motion is graphically represented as follows:
\begin{center}
   \begin{picture}(150,70)(-75,0)
      \put(-20,10){\vector(1,0){0}}
      \put(-20,35){\oval(50,50)[l]}
      \put(20,60){\vector(-1,0){0}}
      \put(20,35){\oval(50,50)[r]}
      \put(-45,35){\makebox(0,0){\shortstack{direct\\motion}}}
      \put(45,35){\makebox(0,0){\shortstack{inverse\\motion}}}
      \put(0,35){\makebox(0,0){\shortstack{proper\\motion}}}
      \put(0,60){\makebox(0,0){${\bf Obj}(d\!d)$}}
      \put(0,5){\makebox(0,0){${\bf Obj}(x)$}}
      \put(-75,0){\vector(1,0){55}}
      \put(-51,0){\makebox(0,0){\shortstack{continual\\input}}}
      \put(-100,0){\makebox(0,0){${\bf Obj}(k\!d)$}}
      \put(20,0){\vector(1,0){55}}
      \put(50,0){\makebox(0,0){\shortstack{continual\\output}}}
      \put(100,0){\makebox(0,0){${\bf Obj}(d\!k)$}}
   \end{picture}
\end{center}
where the collection of $y$-subtypes
$k\!d$, $d\!d$, $d\!k$ and $k\!k$ consists of,
respectively,
the ``atomic subtype'', ``proper subtype'', ``negative subtype'' and ``nil subtype'' 
of the source type $y$.
These correspond to clause types in Horn clause logic.

For any term $\term{y}{r}{x}$,
dialectical flow along $r$ is decreasing:
$\yinyang_r^r(\phi) \preceq \phi$ for every object $\term{1}{\phi}{x}$.
For any functional term $\term{y}{f \dashv f^{\rm op}}{x}$,
dialectical flow along $f$ is equal to dialectical flow along the associated interior comonoid $f^{\rm op} \circ f$,
$\yinyang_f^f = \yinyang_{f^{\rm op} \circ f}^{f^{\rm op} \circ f}$,
since $(\,) \circ f^{\rm op} = (\,) \tensorimplytarget f$ implies
$\yinyang_{f^{\rm op} \circ f}^{f^{\rm op} \circ f}
 = [(\,) \tensorimplytarget f] \cdot [(\,) \tensorimplytarget f^{\rm op}] \cdot [(\,) \circ f^{\rm op}] \cdot [(\,) \circ f]
 = [(\,) \circ f^{\rm op}] \cdot [(\,) \tensorimplytarget f^{\rm op}] \cdot [(\,) \circ f^{\rm op}] \cdot [(\,) \circ f]
 \preceq (\succeq) [(\,) \circ f^{\rm op}] \cdot [(\,) \circ f]
 = [(\,) \tensorimplytarget f] \cdot [(\,) \circ f] 
 = \yinyang_f^f$. 
This fact includes subtypes as a special case.
So for dialectical flow along functional terms,
we can restrict our attention to comonoids.
Let $V \type y$ be any topotype (topology of comonoids at $y$).
The join of the dialectical flows of the topotype comonoids is unity
$\bigvee_{v \in V} \yinyang_v^v = {\rm Id}$,
since
$\psi
 = \psi \circ y
 = \psi \circ (\bigvee_{v \in V} v)
 = \bigvee_{v \in V} (\psi \circ v)
 = \bigvee_{v \in V} (\psi \tensorimplytarget y) \circ v
 = \bigvee_{v \in V} (\psi \tensorimplytarget (\bigvee_{v' \in V} v')) \circ v
 = \bigvee_{v \in V} (\bigwedge_{v' \in V} (\psi \tensorimplytarget v')) \circ v
 \preceq \bigvee_{v \in V} \bigwedge_{v' \in V} ((\psi \tensorimplytarget v') \circ v)
 \preceq \bigvee_{v \in V} ((\psi \tensorimplytarget v) \circ v)
 \preceq \bigvee_{v \in V} \psi
 = \psi$
for every $y$-object $\term{1}{\psi}{y}$. 
\begin{Fact}
   For any dialectical system $\term{y}{s,r}{x}$ and any source topotype $V \type y$,
   dialectical flow decomposes as
   \[ \yinyang_r^s = \bigvee_{v \in V} \yinyang_{r_v}^{s_v} . \]
\end{Fact}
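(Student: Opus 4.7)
The plan is to pointwise-reduce the fixpoint operator $\yinyang_r^s$ to the unity identity established immediately above the Fact, and then invoke the separator property of the special type $1$. Concretely, since ${\bf Obj}(x) = {\bf H}[1,x]$ and the $1$-separator rule says two parallel terms agree iff they agree on all $1$-objects, it suffices to check the claim on an arbitrary $x$-object $\term{1}{\phi}{x}$ and verify
\[ \yinyang_r^s(\phi) \;=\; \bigvee_{v \in V}\; \yinyang_{r_v}^{s_v}(\phi), \]
where $r_v = v \circ r$ and $s_v = v \circ s$.

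First I would unfold the definition: $\yinyang_r^s(\phi) = (\phi \tensorimplytarget r)\circ s$. The object $\phi \tensorimplytarget r$ lives in ${\bf Obj}(y)$, so the unity identity $\bigvee_{v \in V} \yinyang_v^v = {\rm Id}$ (just proved for any source topotype $V \type y$) applied with $\psi \define \phi \tensorimplytarget r$ yields
\[ \phi \tensorimplytarget r \;=\; \bigvee_{v \in V}\bigl((\phi \tensorimplytarget r)\tensorimplytarget v\bigr) \circ v. \]
Next I would collapse the iterated right tensor implication using the mixed associative law $t \tensorimplytarget (s \circ r) = (t \tensorimplytarget r) \tensorimplytarget s$ (listed among the basic Heyting identities), which gives $(\phi \tensorimplytarget r) \tensorimplytarget v = \phi \tensorimplytarget (v \circ r) = \phi \tensorimplytarget r_v$.

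Finally I would post-compose with $s$ and invoke join-continuity of tensor product on the left (a defining property of Heyting categories, since left tensor product is left adjoint to left tensor implication and hence preserves arbitrary joins):
\[ \yinyang_r^s(\phi) \;=\; \Bigl(\bigvee_{v \in V}(\phi \tensorimplytarget r_v)\circ v\Bigr)\circ s \;=\; \bigvee_{v \in V}(\phi \tensorimplytarget r_v)\circ(v \circ s) \;=\; \bigvee_{v \in V}(\phi \tensorimplytarget r_v)\circ s_v \;=\; \bigvee_{v \in V}\yinyang_{r_v}^{s_v}(\phi). \]
Since $\phi$ was arbitrary, the separator rule promotes this pointwise equality of direct-flow behaviors to an equality of the underlying dialectical flow operators, completing the proof.

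There is no real obstacle here; the Fact is essentially a corollary of the unity identity proved immediately above, together with the mixed associativity of $\tensorimplytarget$ and the continuity of $\circ$. The only bookkeeping step worth stressing is the rewriting $(\phi \tensorimplytarget r)\tensorimplytarget v = \phi \tensorimplytarget r_v$, which is what lets the single inverse-flow along $r$ be refactored as a family of inverse flows along the components $r_v = v \circ r$ of the source decomposition $\source{r}{V}$.
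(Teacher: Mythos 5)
Your proof is correct and follows essentially the same route as the paper's: both unfold $\yinyang_{r_v}^{s_v}$ via the mixed associativity of $\tensorimplytarget$ and the functoriality of direct flow, factor out $\yinyang_v^v$, and finish with the unity identity $\bigvee_{v \in V} \yinyang_v^v = {\rm Id}$ together with join-continuity of the tensor product --- the paper simply runs the computation at the operator level from right to left, whereas you run it pointwise from left to right. (Your closing appeal to the separator rule is unnecessary, since the flow operators are by definition functions on ${\bf Obj}(x)$ and pointwise agreement already is operator equality, but this is harmless.)
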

\begin{proof}
   $\bigvee_{v \in V} \yinyang_{r_v}^{s_v}
    = \bigvee_{v \in V} [(\,) \tensorimplytarget (v \circ r)] \cdot [(\,) \circ (v \circ s)]
    = \bigvee_{v \in V} [(\,) \tensorimplytarget r] \cdot [(\,) \tensorimplytarget v] \cdot [(\,) \circ v] \cdot [(\,) \circ s]
    = \bigvee_{v \in V} [(\,) \tensorimplytarget r] \cdot \yinyang_v^v \cdot [(\,) \circ s]
    = [(\,) \tensorimplytarget r] \cdot (\bigvee_{v \in V} \yinyang_v^v) \cdot [(\,) \circ s]
    = [(\,) \tensorimplytarget r] \cdot [(\,) \circ s]
    = \yinyang_r^s$.
\end{proof}
This is an abstraction of the {\tt AND}-process decomposition of clausal logic programs.



 \end{document}